\newtheorem{theorem}{Theorem}[section]
\newtheorem{proposition}[theorem]{Proposition}
\newtheorem{lemma}[theorem]{Lemma}
\newtheorem{assumption}[theorem]{Assumption}
\newtheorem{remark}[theorem]{Remark}
\theoremstyle{definition}
\newtheorem{example}[theorem]{Example}
\title{Macroscopic Market Making}
\author{Ivan Guo \hspace{1cm} Shijia Jin \hspace{1cm} Kihun Nam}
\begin{document}

\maketitle

\begin{abstract}
We propose a macroscopic market making model \`a la Avellaneda-Stoikov, using continuous processes for orders instead of discrete point processes. The model intends to bridge the gap between market making and optimal execution problems, while shedding light on the influence of order flows on the optimal strategies. We demonstrate our model through three problems. The study provides a comprehensive analysis from Markovian to non-Markovian noises and from linear to non-linear intensity functions, encompassing both bounded and unbounded coefficients. Mathematically, the contribution lies in the existence and uniqueness of the optimal control, guaranteed by the well-posedness of the strong solution to the Hamilton-Jacobi-Bellman equation and the (non-)Lipschitz forward-backward stochastic differential equation. Finally, the model's applications to price impact and optimal execution are discussed.\\

\noindent \textbf{Keywords:} Market making, Stochastic optimal control, Forward-backward
stochastic differential equation, Optimal execution

\end{abstract}

\tableofcontents
\addtocontents{toc}{\setcounter{tocdepth}{1}}

\section{Introduction}
\label{intro}
\noindent In this paper, we focus on the strategic behaviour of the market maker, which is defined as a liquidity provider in the financial market. More specifically, the market maker provides bid and ask prices on one or several assets, and makes profit by the bid–ask spread (the price difference between buy and sell orders). Pioneered by the work of \cite{ho1981optimal} and subsequently \cite{avellaneda2008high}, market making---as a stochastic control problem---has been the subject of extensive literature in market microstructure. We propose a macroscopic model \`a la Avellaneda-Stoikov \cite{avellaneda2008high} in view of a major motivation as follows.

While the market making problem focuses on the provision of liquidity, the other side of the coin studies how liquidity can be consumed in an optimal fashion. This is known as the optimal execution problem, set in motion by \cite{bertsimas1998optimal} and \cite{almgren2001optimal}. We refer to \cite{gueant2016financial} and \cite{cartea2015algorithmic} for excellent overviews of both topics. Although market making and optimal execution problems are highly correlated, they have been studied separately in two streams of literature. One of the main reasons is the difference in the underlying order characteristics: the total volume of (market) orders in the optimal execution is absolutely continuous with respect to time, allowing for the consideration of trading rates; However, in the market making problem, orders follow prescribed point processes. 

To bridge the gap between these two topics, we study a market making problem similar to the framework in \cite{avellaneda2008high}, but replace discrete point processes by continuous processes, aligning with the order rate concept prevalent in the optimal execution literature. In practical terms, we expand the time horizon of the market making from seconds, as traditionally approached, to minutes or hours (see the introduction section in \cite{guo2017optimal}). This expanded horizon enables us to safely model orders using continuous processes, resulting in what we refer to as a \textit{macroscopic} model. As mentioned in \cite{guo2017optimal}, the solution to an optimal execution problem can be used to determine the number of orders to trade over a small time scale. Similarly, a connection between our model and the microstructural market making model (e.g., \cite{abergel2022algorithmic}) is that our solution can specify the inventory process to be tracked over a small time scale.

The comparison between the macroscopic model and the traditional model is summarized as follows:
\begin{itemize}
    \item \textit{Methodology}: In prior studies, with the Markovian setup, market making models primarily utilize dynamic programming and partial differential equations (PDEs). In this paper, extending the analysis to non-Markovian settings, we employ the stochastic maximum principle and forward-backward stochastic differential equations (FBSDEs)\\
    \vspace{-0.1cm}
    
    \item \textit{Orders}: In earlier research, discrete orders are modelled using Cox processes, where the intensity is given by a deterministic function. An extension to Markov chain-based intensities is provided in \cite{campi2020optimal}.  A model with random trade sizes has been proposed in \cite{bergault2021size}. In our model, orders are modelled as rates. In a model-free way, they are represented by two adapted processes with no Markovian constraints.\\
    \vspace{-0.1cm}

    \item \textit{Execution Probability}: In the previous work, each order has a probability of being executed by the market maker. In our setting, this probability becomes a portion, meaning a specific fraction of the orders will be executed by the agent.\\
    \vspace{-0.1cm}

    \item \textit{Inventory Constraint}: Traditional models impose upper and lower constraints on inventory, leading to dimensional reduction and the use of compactness arguments. In contrast, we remove these restrictions.
    
\end{itemize}

The goal of this paper is to propose the macroscopic market making problems as stochastic control problems, and to study the associated mathematical issues---the existence and uniqueness of the optimal control. Our model shares several features with the Avellaneda-Stoikov model, hence it also inherits some of their limitations. Noted by \cite{law2019market}, the seminal model does not account for price-time priority. Moreover, for certain liquid stocks, the execution probability of limit orders placed more than one tick away from the best quote is considerably low on short time scales. Consequently, our model mainly focuses on quote-driven markets and order-driven markets where the ratio of bid-ask spread to tick size is large (see \cite{gueant2017optimal}). We direct the reader to \cite{fodra2015high}, \cite{guilbaud2013optimal}, and \cite{guilbaud2015optimal} for market making models specifically designed for limit order books.

We look at three problems, extending the analysis from Markovian to non-Markovian noise and from linear to non-linear intensity functions. This progression allows us to address scenarios with both bounded and unbounded coefficients. Moreover, the study on each problem is conducted in three steps: (1) we introduce the market making setup, providing a description of the stochastic control problem; (2) the stochastic control problem is then transformed to either a PDE or a FBSDE, depending on the specific context; (3) we finally establish the well-posedness of the equation, hence ensuring the existence and uniqueness of the optimal control.

We start with the case when noises are Markovian, and the intensity function---which is used to model the execution probability---is linear. Applying the dynamic programming principle, the resulting PDE exhibits similarities to the main PDEs studied in \cite{barger2019optimal}, \cite{fouque2022optimal} and \cite{souza2022regularized}, all of which investigate the optimal execution problems with stochastic price impact. While \cite{barger2019optimal}, \cite{fouque2022optimal} study the approximation solutions and \cite{souza2022regularized} deduces the well-posedness of viscosity solutions, our contribution lies in the existence and uniqueness of the classical solution. Besides the connection with optimal execution problem, we also examine the case of linear intensity function due to the applicability of the technique employed in proving the well-posedness, which will be extended to the next problem. We conclude the first problem by presenting a simple example with an explicit solution, demonstrating how order flows are incorporated into the market maker's strategy.

Since order flows are known to be non-Markovian (see \cite{cont2000herd} for reference), the second problem adopts non-Markovian noises and also more general non-linear intensity functions introduced in \cite{gueant2017optimal}. By utilizing a truncation function and a version of the stochastic maximum principle, we obtain a Lipschitz FBSDE, the well-posedness of which is proved via the decoupling method introduced by \cite{ma2015well}. Furthermore, the solution of the Lipschitz FBSDE is used to verify the well-posedness of a non-Lipschitz FBSDE that corresponds to the case when the truncation is not applied. 

Moving beyond bounded noises and coefficients considered in the previous two problems, the third problem explores unbounded coefficients, along with non-Markovian noises and the linear intensity function. By leveraging a convex-analytic approach, we obtain a non-Lipschitz FBSDE and establish its well-posedness by constructing the solution using a monotonic sequence.

This paper concludes with three applications of the proposed model. First, we explore the connection between the macroscopic model and the Avellaneda-Stoikov model, from both modelling and solution perspectives. The macroscopic solution emerges as the average of the Avellaneda-Stoikov solutions, with their difference diminishing as the tick size approaches $0$. Our discussion is further supported by numerical results. Next, given the well-known empirical finding that price impact is typically concave (e.g., \cite{toth2011anomalous}), we explicitly derive this concavity in two simple cases of our model, and demonstrate it through simulations in more general settings. Finally, we establish a link between our market making model and the optimal execution problem, where the permanent price impact function is replaced by the market maker’s quoting strategy. We apply several simple execution strategies and numerically evaluate their performances.

The organization of the article aligns with the three problems outlined in preceding paragraphs: Section \ref{section_2} delves into the first problem, Section \ref{section_3} explores the second problem, and Section \ref{section_4} studies the third problem. Applications are presented in Section \ref{implement}. Appendix \ref{section_5} encompasses a version of the stochastic maximum principle.

\textit{Notation:} Throughout the present work, we fix $T > 0$ to represent our finite trading horizon. We denote by $(\Omega, \mathcal{F}, \mathbb{F}=(\mathcal{F}_t)_{0\leq t\leq T}, \mathbb{P})$ a complete filtered probability space, with $\mathcal{F}_T = \mathcal{F}$. An $m$-dimensional Brownian motion $W=(W^1,\dots,W^m)$ is defined on such space, for a fixed positive integer $m$, and the filtration $\mathbb{F}$ is generated by $W$ and augmented. Let $\mathcal{G}$ represents an arbitrary $\sigma$-algebra contained in $\mathcal{F}$ and consider the following spaces:
\begin{gather*}
    L^p(\Omega, \mathcal{G}):=\big\{X: X \text{ is } \mathcal{G} \text{-measurable and } \mathbb{E}|X|^p<\infty \big\};\\
    \mathbb{H}^p:=\Big\{X: X \text{ is } \mathbb{F} \text{-progressively measurable and } \mathbb{E}\Big[\big(\int_0^T |X_t|^2\,dt\big)^{p/2}\Big]<\infty\, \Big\};\\
    \mathbb{S}^p:=\Big\{X\in \mathbb{H}^p: \mathbb{E}\big[\sup_{0\leq t\leq T} |X_t|^p\big]<\infty \Big\};\\
    \mathcal{M}:= \big\{ M : M_t \in L^2(\Omega, \mathcal{F}_t) \text{ for a.e. } t \in [0, T] \text{ and } \{M_t, \mathcal{F}_t\}_{0\leq t\leq T}\text{ is a martingale} \big\}.
\end{gather*}

\section{Markovian Order Flows with Linear Intensity Function}  \label{section_2}

\noindent This section is devoted to introducing the macroscopic market making as a stochastic control problem. Motivated by linear price impact functions, we will begin with the linear intensity function and Markovian order flows, solving the problem by the dynamic programming principle. The reason is twofold: it reveals several interesting connections with optimal execution problems under stochastic price impact; the same technique used in the problem will be also applied in the next section, where we will look at the non-Markovian context.

Let $S=(S_t)_{0\leq t\leq T}$ be the mid-price of the asset being continuously traded throughout the time horizon $T$. We follow the framework in \cite{becherer2005classical} and consider the following stochastic differential equation for $L:=(L_t)_{t\in[0,T]}$ in $\mathbb{R}^d$:
\begin{equation}
\begin{aligned}
    L_0&=l_0\in\mathbb{R}^d,\\
    dL_t&=\Gamma(t,L_t)\,dt+\sum_{j=1}^m\Sigma_j(t,L_t)\,dW_t^j, \hspace{1cm} t\in[0,T],
    \label{markov_sde}
\end{aligned}
\end{equation}
for continuous functions $\Gamma : [0, T] \times \mathbb{R}^d \to \mathbb{R}^d$ and $\Sigma_j : [0, T] \times \mathbb{R}^d \to \mathbb{R}^d$, $j =1, \dots, m$. Define the matrix-valued function
$\Sigma : [0, T]\times \mathbb{R}^d \to \mathbb{R}^{d\times m}$ by $\Sigma^{i, j}
:= (\Sigma_j)^i$. The process $L$ serves as the driving process that determines the dynamics of other key variables in our model. It enables us to model interdependencies of variables in a coherent framework. We introduce the following assumptions.

\begin{assumption}
(1) The mid-price process $S$ belongs to $\mathcal{M}$. Hence, for any $K\in\mathbb{H}^2$, it holds 
\begin{equation}
    \mathbb{E}\int_0^T K_t\,dS_t=0.
    \label{martingale_price}
\end{equation}
(2) For $G\in\{\Gamma, \Sigma_1, \dots, \Sigma_m\}$, there exists a constant $C_L>0$ such that
\begin{equation*}
    |G(t,x)-G(t,y)|\leq C_L |x-y|,
\end{equation*}
for all $t\in[0,T]$ and $x,y\in\mathbb{R}^d$. Further, $\det(\Sigma(t,l)\,\Sigma^{tr}(t,l))\neq 0$ for all $(t,l)\in [0,T]\times \mathbb{R}^d$.
\label{sde_assumption}
\end{assumption}

\begin{remark}
The first assumption is common in optimal execution and market making literatures, where the mid-price $S$ usually has no effect on the optimization problem. In this case, the emphasis is placed on how the market maker deals with the order flows. The last condition on $\Sigma$ is added for technical reasons.
\end{remark}

A market maker keeps trading such an asset to profit from the bid-ask spread by posting bid and ask quotes during the time interval $[0, T]$. More precisely, she posts the ask orders and bid orders at the price levels
\begin{equation}
    S_t^a:=S_t+\delta_t^a \text{\quad and \quad} S_t^b:=S_t-\delta_t^b
    \label{admiss_control}
\end{equation}
respectively, where $\boldsymbol{\delta}:=(\delta^a, \delta^b)\in\mathbb{H}^2\times\mathbb{H}^2$ represents the control. While market orders are described by point processes in the discrete setting, here we regard them as \textit{flows}---continuous processes adapted to the Brownian filtration $\mathbb{F}$. The market order flows at the ask and bid side are modelled by
\begin{equation}
    a_t:=a(L_t) \text{\quad and \quad} b_t:=b(L_t)
    \label{admiss_control}
\end{equation}
accordingly, for some bounded positive functions $a, b:\mathbb{R}^d\to\mathbb{R}_+$. Denote by $\bar{a}, \bar{b}>0$ the upper bounds for functions $a$ and $b$ accordingly.
On a macroscopic scale, the execution probability in the original Avellaneda-Stoikov framework is now interpreted as the portion of order flows captured by the agent. More specifically, suppose that execution probability is $p\in(0,1)$ in the discrete setting, implying the agent has a probability of $p$ to execute each market order. Then, given a sufficient amount of orders, one expects that $p$ portion of total orders is matched by the agent \textit{on average}. Since such probability is determined by the \textit{intensity function} $\Lambda$, we first look at a linear function
\begin{equation}
    \Lambda(\delta)=\zeta-\gamma\,\delta,
    \label{linear_intensity}
\end{equation}
where $\zeta, \gamma>0$ represents the competition level among market makers and we also assume the bid-ask symmetry. Intuitively, as the agent posts further away from the mid-price, the chance of executions decreases in a linear fashion. Since the function $\Lambda$ represents a physically non-negative portion, the action $\delta$ makes the most sense if $\delta \leq \zeta / \gamma$. Here, we do not constrain the action to this region for the sake of linearity and, therefore, treat the function values in $(\zeta / \gamma, \infty)$ as modelling errors. However, if we impose such a constraint, the problem can be resolved in the next section with a minor modification. Consequently, the inventory and cash processes of the market maker now follow
\begin{gather*}
        Q_t = q_0-\int_0^t a_u\cdot(\zeta-\gamma \delta^a_u)\,du+\int_0^t b_u\cdot(\zeta-\gamma \delta^b_u)\,du,\\
        \nonumber
    X_t = x_0+\int_0^t a_u\cdot(\zeta-\gamma \delta^a_u)\,(S_u+\delta_u^a)\,du-\int_0^t b_u\cdot(\zeta-\gamma \delta^b_u)\,(S_u-\delta_u^b)\,du.
    \nonumber
\end{gather*}
The term $a_u(\zeta-\gamma \delta^a_u)$ shows the (passive) selling rate of the agent, and $S_u+\delta_u^a$ represents the selling price at time $u$. The agent then aims at maximizing the following objective functional 
\begin{equation}
    \mathbb{E}\big[X_T+S_T\,Q_T-\int_0^T \phi_t\,(Q_t)^2\,dt-A\,(Q_T)^2 \big]
    \label{obj_functional}
\end{equation}
by controlling $\boldsymbol{\delta}\in\mathbb{H}^2\times\mathbb{H}^2$, where $A:=A(L_T)$ and $\phi_t:=\phi(L_t)$ for some non-negative bounded functions $A, \phi:\mathbb{R}^d\to\mathbb{R}_+$ with the upper bounds $\bar{A}, \bar{\phi}>0$, accordingly. The functional \eqref{obj_functional} is introduced in \cite{cartea2014buy}.
 
\begin{assumption}
Function A is a continuous non-negative function with the upper bound $\bar{A}$. Functions $a, b$ are positive functions bounded above by $\bar{a}, \bar{b}$ accordingly; function $\phi$ is a non-negative function bounded above by $\bar{\phi}$. Moreover, functions $a, b, \phi$ are all Lipschitz continuous in any bounded domain of $\mathbb{R}^d$.
\label{lip_ord_flow}
\end{assumption}

\begin{remark}
In the functional \eqref{obj_functional}, the random variable $A$ stands for the terminal penalty. It can be interpreted as the overnight risk as in \cite{adrian2020intraday} or the terminal clearance cost as in the optimal execution literatures. The integral term is the well-known reduced-form model of risk and process $\phi$ could be understood as the product of a risk parameter and the stochastic volatility of the mid-price process (see \cite{souza2022regularized} for an example).
\end{remark}

By It\^o’s formula, we compute that
\begin{equation}
    \begin{aligned}
    X_T+Q_TS_T&=x_0+\int_0^T(S_t+\delta_t^a)\,a_t(\zeta-\gamma\delta_t^a)\,dt-\int_0^T(S_t-\delta_t^b)\,b_t(\zeta-\gamma\delta_t^b)\,dt\\
    &\hspace{1.45cm} +q_0s_0+\int_0^TS_t\,b_t(\zeta-\gamma\delta_t^b)\,dt-\int_0^TS_t\,a_t(\zeta-\gamma\delta_t^a)\,dt+\int_0^TQ_t\,dS_t\\
    &=x_0+q_0s_0+\int_0^T\delta_t^a\,a_t(\zeta-\gamma\delta_t^a)\,dt+\int_0^T\delta_t^b\,b_t(\zeta-\gamma\delta_t^b)\,dt+\int_0^TQ_t\,dS_t.
    \nonumber
    \end{aligned}
\end{equation}
While the last term will vanish after taking the expectation due to \eqref{martingale_price}, we can then rewrite the functional \eqref{obj_functional} as
\begin{equation}
    \mathbb{E}\big[\int_0^T\delta_t^a\,a_t(\zeta-\gamma\delta_t^a)\,dt+\int_0^T\delta_t^b\,b_t(\zeta-\gamma\delta_t^b)\,dt-\int_0^T \phi_t\,(Q_t)^2\,dt-A\,(Q_T)^2 \big],
    \label{equ_obj_functional}
\end{equation}
where the mid-price and cash no longer play a role. Note that we do not impose constraints on the action space. Moreover, previous studies assume only a lower bound on $(\delta_t^a, \delta_t^b)$ without explicitly specifying it. This may result in $S_t^a < S_t^b$, which is not feasible in practice. The following lemma demonstrates that such an action is always dominated by some other strategy, of which the associated ask price is no smaller than the bid price.

\begin{lemma}
    If a strategy $\boldsymbol{\delta}$ results in $S_t^a < S_t^b$ for some $t$, then it is dominated by some other strategy such that the associated ask price is no smaller than the bid price.
\end{lemma}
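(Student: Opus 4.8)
\medskip

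The plan is to exhibit the dominating strategy explicitly by altering $\boldsymbol{\delta}$ only on the progressively measurable ``infeasible set'' $B:=\{(t,\omega): \delta^a_t(\omega)+\delta^b_t(\omega)<0\}$ (precisely where $S^a_t<S^b_t$), in such a way that the inventory path $Q$ is left untouched while the running-profit term in the reduced objective \eqref{equ_obj_functional} does not decrease. Since $Q$ is unchanged, the penalties $\int_0^T\phi_t Q_t^2\,dt$ and $A Q_T^2$ are unaffected, and the whole comparison collapses to a pointwise optimisation.

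The key observation is the following. Fix $(t,\omega)$ and abbreviate $a=a_t$, $b=b_t$, $c=a\delta^a_t-b\delta^b_t$. Keeping the instantaneous inventory rate $-a(\zeta-\gamma\delta^a_t)+b(\zeta-\gamma\delta^b_t)$ fixed is the same as keeping $c$ fixed, i.e.\ staying on the line $\{a u-b v=c\}$. On this line the instantaneous running profit $f(u,v):=u\,a(\zeta-\gamma u)+v\,b(\zeta-\gamma v)$ is, after eliminating $v=(au-c)/b$, a strictly concave quadratic in $u$, hence has a unique maximiser, which an elementary computation identifies as
\[
u^\ast=\frac{\zeta b/\gamma+c}{a+b},\qquad v^\ast=\frac{\zeta a/\gamma-c}{a+b},\qquad\text{whence}\qquad u^\ast+v^\ast=\frac{\zeta}{\gamma}>0 .
\]
The point to stress is that $c$ cancels in the sum: the profit-maximising quote pair compatible with a given net trading rate always satisfies $u^\ast+v^\ast=\zeta/\gamma$, so it automatically has ask price no smaller than bid price. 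We therefore define $\tilde{\boldsymbol{\delta}}=(\tilde\delta^a,\tilde\delta^b)$ by $(\tilde\delta^a_t,\tilde\delta^b_t):=(u^\ast_t,v^\ast_t)$ on $B$ and $(\tilde\delta^a_t,\tilde\delta^b_t):=(\delta^a_t,\delta^b_t)$ off $B$, where $u^\ast_t,v^\ast_t$ are given by the formulas above with $a=a(L_t)$, $b=b(L_t)$, $c=c_t$.

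It then remains to check four routine points: (i) admissibility --- $\tilde{\boldsymbol{\delta}}$ is progressively measurable since $L,\delta^a,\delta^b$ are and $a,b$ are continuous, and $\tilde{\boldsymbol{\delta}}\in\mathbb{H}^2\times\mathbb{H}^2$ because $|u^\ast_t|\le \zeta/\gamma+|c_t|/(a_t+b_t)\le \zeta/\gamma+\max(|\delta^a_t|,|\delta^b_t|)$, the last step bounding a convex combination, and likewise for $v^\ast_t$; (ii) since $a_t u^\ast_t-b_t v^\ast_t=c_t=a_t\delta^a_t-b_t\delta^b_t$ on $B$, the inventory rate, hence $Q$, hence the two penalty terms, are pointwise unchanged; (iii) by the maximising property $f(u^\ast_t,v^\ast_t)\ge f(\delta^a_t,\delta^b_t)$ on $B$ (with equality off $B$), so the running-profit term does not decrease; (iv) $\tilde\delta^a_t+\tilde\delta^b_t\ge 0$ for every $(t,\omega)$, so $\tilde S^a_t\ge\tilde S^b_t$. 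Combining (ii) and (iii) shows the objective \eqref{equ_obj_functional} at $\tilde{\boldsymbol{\delta}}$ is at least its value at $\boldsymbol{\delta}$, so $\tilde{\boldsymbol{\delta}}$ dominates $\boldsymbol{\delta}$; the domination is in fact strict whenever $B$ has positive $dt\otimes d\mathbb{P}$-measure, since there the original pair lies strictly off the maximiser.

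I do not expect a genuine obstacle; the one place needing care is the admissibility in (i), because $a$ and $b$ are only assumed bounded above, so $1/(a_t+b_t)$ need not be bounded --- but this is exactly what the convex-combination estimate circumvents, controlling $|u^\ast_t|$ and $|v^\ast_t|$ pathwise by $\zeta/\gamma+|\delta^a_t|+|\delta^b_t|\in\mathbb{H}^2$.

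\medskip
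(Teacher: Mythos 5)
Your proof is correct, and it takes a genuinely different route from the paper's. The paper works pointwise: in the case $b_t>0$ it translates the pair $(\delta^a_t,\delta^b_t)$ along the inventory-preserving line $\{a u-b v=c\}$ to the particular point where $\tilde\delta^a_t+\tilde\delta^b_t=0$, then verifies by a direct expansion that the instantaneous running profit does not decrease; the case $b_t=0$ is handled separately by simply overwriting $\delta^b_t$. You instead compute the \emph{true} maximiser of the instantaneous profit on that same constraint line, obtaining $(u^\ast,v^\ast)$ with $u^\ast+v^\ast=\zeta/\gamma>0$, which dominates not just the original action but every point on the line (in particular the paper's $\tilde\delta^a_t+\tilde\delta^b_t=0$ point). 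Your approach is conceptually cleaner: it explains \emph{why} such a domination exists (the unconstrained-in-sum optimum on the inventory-preserving line always lands at $\zeta/\gamma$), it unifies the $b_t=0$ and $b_t>0$ cases in a single formula valid whenever $a_t+b_t>0$, and it delivers a slightly stronger conclusion ($S^a_t>S^b_t$ strictly, not merely $\geq$). You also attend to measurability and $\mathbb{H}^2$-admissibility of the modified control, which the paper leaves implicit; the convex-combination bound $|u^\ast_t|\le\zeta/\gamma+\max(|\delta^a_t|,|\delta^b_t|)$ is exactly the right device given that $a_t+b_t$ is not bounded away from zero. The only caveat is the corner case $a_t=b_t=0$, where your formulas are undefined; there the action is irrelevant to the objective and one can set $\tilde{\boldsymbol\delta}_t$ arbitrarily (e.g.\ to $0$), but you should say so explicitly since the paper's Assumption~\ref{lip_ord_flow} does not preclude it.
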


\begin{proof}
    Suppose a strategy $\boldsymbol{\delta}$ leads to $S_t^a < S_t^b$ (or equivalently $\delta_t^a + \delta_t^b < 0$) for some $t$. We show that there exists a better action at time $t$. If $b_t = 0$, we simply replace $\delta_t^b$ with $\tilde{\delta}_t^b = -\delta_t^a$. After that, the infinitesimal value of first two terms in \eqref{equ_obj_functional} and the inventory drift stays the same, while the ask price is equal to the new bid price.

    If $b_t > 0$, let us define a new action $(\tilde{\delta}_t^a, \tilde{\delta}_t^b)$ by
    \begin{equation*}
        y = -\frac{\delta_t^a + \delta_t^b{}}{1 + \frac{a_t}{b_t}}, \qquad \tilde{\delta}_t^a = \delta_t^a + y, \text{\quad and \quad} \tilde{\delta}_t^b = \delta_t^b + y \, \frac{a_t}{b_t}.
    \end{equation*}
    The new action satisfies $\tilde{\delta}_t^a + \tilde{\delta}_t^b = 0$, hence the associated ask price equals to the bid price. Direct calculations show that both $(\delta_t^a, \delta_t^b)$ and $(\tilde{\delta}_t^a, \tilde{\delta}_t^b)$ lead to the same inventory drift. It suffices to check the infinitesimal value of first two terms in \eqref{equ_obj_functional}, which can be deduced by
    \begin{equation*}
    \begin{aligned}
         \tilde{\delta}_t^a \, a_t (\zeta - \gamma \tilde{\delta}_t^a) &+ \tilde{\delta}_t^b \, b_t(\zeta - \gamma \tilde{\delta}_t^a)- \delta_t^a \, a_t (\zeta - \gamma\delta_t^a) - \delta_t^b \, b_t(\zeta-\gamma\delta_t^a)\\
         &= y \,a_t \big( \zeta - 2\gamma \delta_t^a - \gamma y \big) + y \, a_t \big(\zeta - 2\gamma \delta_t^b - \gamma y \, \frac{a_t}{b_t} \big)\\
         &= y \,a_t \Big[ \zeta - 2\gamma (\delta_t^a + \delta_t^b) - \gamma y \big(1 + \frac{a_t}{b_t} \big) \Big]\\
         &= y \,a_t \big[ \zeta - \gamma (\delta_t^a + \delta_t^b) \big] \geq 0.
    \end{aligned}
    \end{equation*}
    
\end{proof}

Define
\begin{equation}
\begin{aligned}
    H(t, l, q):=\sup_{\delta^a,\, \delta^b\in \mathbb{H}^2}\mathbb{E}\big[\int_t^T\delta_u^a\,a_u^{t, l}(\zeta-\gamma\delta_u^a)\,&du+\int_t^T\delta_u^b\,b_u^{t, l}(\zeta-\gamma\delta_u^b)\,du\\
    &-\int_t^T \phi_u^{t, l}\,(Q_u^{t, l, q, \boldsymbol{\delta}})^2\,du-A\,(Q_T^{t, l, q, \boldsymbol{\delta}})^2 \big].
\end{aligned}
\end{equation}
We use the superscript $(t, l)$ to show the initial condition of the underlying Markov diffusions, and $(t, l, q, \boldsymbol{\delta})$ to display the additional dependence on the control. Let $\mathcal{L}$ be the generator of $L$ as
\begin{equation*}
    (\mathcal{L}f)(t,l,q):=\sum_{i=1}^d\Gamma^i(t,l)\,\frac{\partial f}{\partial l^i}(t,l,q)+\frac{1}{2}\sum_{i,j=1}^d (\Sigma\Sigma^{tr})^{i,j}(t,l)\,\frac{\partial^2 f}{\partial l^i\partial l^j}(t,l,q)
\end{equation*}
for some sufficiently smooth function $f:[0,T]\times\mathbb{R}^d\times\mathbb{R}\to\mathbb{R}$, then the dynamic programming principle suggests the following partial differential equation (PDE)
\begin{equation}
\begin{aligned}
    \frac{\partial H}{\partial t}+\mathcal{L}H(t,l,q)&+\sup_{\delta^a\in\mathbb{R}}\big\{(\delta^a-\frac{\partial H}{\partial q})\,a(l)(\zeta-\gamma\delta^a)\big\}\\
    &+\sup_{\delta^b\in\mathbb{R}}\big\{(\delta^b+\frac{\partial H}{\partial q})\,b(l)(\zeta-\gamma\delta^b)\big\}-\phi(l)\,(q)^2=0
    \label{pde}
\end{aligned}
\end{equation}
for $(t,l,q)\in[0,T]\times\mathbb{R}^d\times\mathbb{R}$, subjecting to the boundary condition $H(T,l,q)=-A(l)\,( q)^2$. The optimal control in feedback form reads
\begin{equation}
    \delta_t^{b,*}=\frac{\zeta}{2\gamma}-\frac{1}{2}\frac{\partial H}{\partial q}, \hspace{1cm} \delta_t^{a,*}=\frac{\zeta}{2\gamma}+\frac{1}{2}\frac{\partial H}{\partial q}.
    \nonumber
\end{equation}
Combined with the affine ansatz widely used in optimal execution literatures (thanks to the linear structure of our problem):
\begin{equation}
    H(t,l,q)=h_0(t,l)+h_1(t,l)\,q+h_2(t,l)\,(q)^2,
    \nonumber
\end{equation}
we plug them back to \eqref{pde} to obtain, for $(t,l)\in[0,T]\times\mathbb{R}^d$, that
\begin{equation}
    \frac{\partial h_2}{\partial t}+\mathcal{L}h_2(t,l)+\gamma\big(a(l)+b(l)\big)\,h_2(t,l)^2-\phi(l)=0,
    \label{non_lin_pde}
\end{equation}
\begin{equation}
    \frac{\partial h_1}{\partial t}+\mathcal{L}h_1(t,l)+\zeta\big(b(l)-a(l)\big)\,h_2(t,l)+\gamma\big(a(l)+b(l)\big)\,h_2(t,l)\,h_1(t,l)=0,
    \label{lin_pde_1}
\end{equation}
\begin{equation}
    \frac{\partial h_0}{\partial t}+\mathcal{L}h_0(t,l)+\zeta\big(b(l)-a(l)\big)\,h_1(t,l)+\frac{\gamma\, b(l)}{4}\big(\frac{\zeta}{\gamma}-h_1(t,l)\big)^2+\frac{\gamma\, a(l)}{4}\big(\frac{\zeta}{\gamma}+h_1(t,l)\big)^2=0,
    \label{lin_pde_2}
\end{equation}
subjecting to the boundaries $h_2(T,l)=-A(l)$,  $h_1(T,l)=0$ and $h_0(T,l)=0$. One can see that the main challenge lies in the nonlinear PDE \eqref{non_lin_pde}, while the others are all linear and the well-posedness is direct given $h_2$ with sufficient regularity. We summarize this result in the next theorem with the verification step. Hence, the remaining of this section is devoted to \eqref{non_lin_pde}.

\begin{remark}
We compare our PDE system with the ones from related literatures:\\
\indent (1) Regarding the optimal execution problems that take the order flow into account (see \cite{cartea2015algorithmic} section 7.3), in the PDE for $h_2$, the quadratic term of $h_2$ has a constant coefficient, that is,
    \begin{equation}
        \frac{\partial h_2}{\partial t}+\mathcal{L}h_2(t,l)-\phi+\frac{1}{k}\,h_2(t, l)^2=0,
        \nonumber
    \end{equation}
where $k, \phi>0$ represent the permanent price impact and risk parameter accordingly. As there are no source terms $a$ and $b$, they deduce that $h_2$ is independent of the order flow and solve a Riccati equation. However, there is a source term of order flows in our case.

(2) With respect to execution problems with stochastic  parameters (\cite{barger2019optimal}, \cite{fouque2022optimal} and \cite{souza2022regularized}), they all consider PDEs for $h_2$ that are similar to ours. While \cite{barger2019optimal} and \cite{fouque2022optimal} study the approximation solutions, the existence and uniqueness of the viscosity solution is proven by \cite{souza2022regularized}. Our contribution lies in the well-posedness of the classical solution. Let us look at the equation from \cite{barger2019optimal} (equation (2.17) with no permanent price impact):
    \begin{equation}
        \frac{\partial h_2}{\partial t}+\mathcal{L}h_2(t,l)-\varphi+\frac{1}{f(l)}\,h_2(t,l)^2=0,
        \nonumber
    \end{equation}
where $\varphi>0$ stands for the risk and $f(l)$ is the stochastic impact parameter. Comparing $1/f(l)$ with $\gamma(a(l)+b(l))$ of our equation, it is interesting to see that the lower friction in optimal executions is analogous to the higher market activity in market makings. Furthermore, our method can be applied to prove the well-posedness of (2.17) by a change of variable.
\end{remark}

\begin{theorem}
\label{verification}
Suppose \eqref{non_lin_pde} admits a bounded classical solution denoted by $h_2$, it holds: 

(1) Both \eqref{lin_pde_1} and \eqref{lin_pde_2} also possess bounded classical solutions, written as $h_1$ and $h_0$ accordingly;

(2) Define $\tilde{H}:[0,T]\times\mathbb{R}^d\times\mathbb{R}\to\mathbb{R}$ as
\begin{equation*}
    \tilde{H}(t,l,q)=h_0(t,l)+h_1(t,l)\,q+h_2(t,l)\,(q)^2,
\end{equation*}
then $H=\tilde{H}$ on $[0,T]\times\mathbb{R}^d\times\mathbb{R}$;

(3) Let $\boldsymbol{\delta}^*=(\delta^{a,*}, \delta^{b,*})$ be defined by
\begin{equation}
    \delta_t^{b,*}=\frac{\zeta}{2\gamma}-\frac{1}{2}\frac{\partial \tilde{H}}{\partial q} \text{ \; and \; } \delta_t^{a,*}=\frac{\zeta}{2\gamma}+\frac{1}{2}\frac{\partial \tilde{H}}{\partial q},
    \label{uniq_Mark_cont}
\end{equation}
then $\boldsymbol{\delta}^*$ is the unique optimal control.
\end{theorem}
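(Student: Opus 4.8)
The plan is to treat $\tilde H$ as a verification candidate for $H$ and run the standard comparison-and-attainment argument; the non-routine ingredients are the regularity of $h_1,h_0$ inherited from $h_2$, the closed-loop equation satisfied by the candidate optimizer, and the integrability bookkeeping forced by the unbounded inventory. For part (1): once $h_2$ is a bounded classical solution, \eqref{lin_pde_1} and \eqref{lin_pde_2} are \emph{linear} parabolic equations whose zeroth-order coefficient and source term ($\gamma(a+b)h_2$ and $\zeta(b-a)h_2$ for \eqref{lin_pde_1}; a bounded quadratic in $h_1$ for \eqref{lin_pde_2}) are bounded and locally H\"older continuous, since $\Gamma,\Sigma$ are Lipschitz, $a,b$ are locally Lipschitz, and $h_2\in C^{1,2}$; moreover $\mathcal L$ is locally uniformly elliptic by Assumption \ref{sde_assumption}(2). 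I would define $h_1$, and then $h_0$, by the Feynman--Kac formula along \eqref{markov_sde}, which yields boundedness immediately, and invoke interior parabolic Schauder estimates to upgrade them to $C^{1,2}$ classical solutions on $[0,T)\times\mathbb R^d$ that are continuous up to $T$; this settles (1).

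For part (2), first note that substituting the affine ansatz into \eqref{pde}, using that the inner suprema are attained at \eqref{uniq_Mark_cont} with values $\tfrac{\gamma a}{4}(\tfrac\zeta\gamma-\partial_q\tilde H)^2$ and $\tfrac{\gamma b}{4}(\tfrac\zeta\gamma+\partial_q\tilde H)^2$ and matching powers of $q$, shows that \eqref{non_lin_pde}--\eqref{lin_pde_2} are precisely the conditions for $\tilde H$ to be a $C^{1,2}$ solution of \eqref{pde} with terminal datum $-A(l)q^2$. Fix $(t,l,q)$; for any admissible $\boldsymbol\delta$ one has $Q^{t,l,q,\boldsymbol\delta}\in\mathbb S^2$, since the inventory dynamics have bounded coefficients and $\boldsymbol\delta\in\mathbb H^2\times\mathbb H^2$. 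Applying It\^o's formula to $u\mapsto\tilde H(u,L_u,Q_u)$ on $[t,\tau_n]$ with $\tau_n:=\inf\{u\ge t:|L_u|+|Q_u|\ge n\}\wedge T$, and using that the inventory has no martingale part, the drift equals $-R_u-\delta^a_u a_u(\zeta-\gamma\delta^a_u)-\delta^b_u b_u(\zeta-\gamma\delta^b_u)+\phi_u Q_u^2$, where $R_u\ge0$ is the gap between the two suprema in \eqref{pde} and their values at $(\delta^a_u,\delta^b_u)$; the stopped $dW$-integral is a true martingale, so after taking expectations and letting $n\to\infty$ (the stopped $\tilde H$ and the running integrand being $L^1$-dominated by the boundedness of $h_0,h_1,h_2,A,\phi,a,b$ and by $Q\in\mathbb S^2$) one obtains $\tilde H(t,l,q)=J(t,l,q;\boldsymbol\delta)+\mathbb E\int_t^T R_u\,du\ge J(t,l,q;\boldsymbol\delta)$, where $J(t,l,q;\boldsymbol\delta)$ denotes the expectation appearing in the definition of $H(t,l,q)$; hence $\tilde H\ge H$. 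For the reverse inequality, plugging \eqref{uniq_Mark_cont} with $\partial_q\tilde H=h_1+2h_2q$ into the inventory equation closes it into the \emph{linear} ODE $dQ^*_u=(\gamma h_2(L_u)(a_u+b_u)Q^*_u+g_u)\,du$ with $g$ bounded and adapted, which admits a unique, deterministically bounded solution $Q^*$; thus $\boldsymbol\delta^*\in\mathbb H^2\times\mathbb H^2$ is admissible, and the same computation now holds with $R_u\equiv0$, giving $\tilde H(t,l,q)=J(t,l,q;\boldsymbol\delta^*)\le H(t,l,q)$. Therefore $H=\tilde H$ and $\boldsymbol\delta^*$ is optimal.

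For part (3), let $\boldsymbol{\hat\delta}$ be any optimal control with inventory $\hat Q$. The identity from part (2) forces $\mathbb E\int_t^T R_u\,du=\tilde H(t,l,q)-J(t,l,q;\boldsymbol{\hat\delta})=0$, so $R_u=0$ for $du\otimes d\mathbb P$-a.e.\ $(u,\omega)$; since $a(L_u)>0$ and $b(L_u)>0$ by Assumption \ref{lip_ord_flow}, the maps $\alpha\mapsto(\alpha-p)a(L_u)(\zeta-\gamma\alpha)$ and $\beta\mapsto(\beta+p)b(L_u)(\zeta-\gamma\beta)$ are strictly concave, so $R_u=0$ pins down $(\hat\delta^a_u,\hat\delta^b_u)$ to \eqref{uniq_Mark_cont} evaluated at $\hat Q_u$; substituting back, $\hat Q$ solves the same linear ODE as $Q^*$ with the same initial condition, hence $\hat Q=Q^*$ by ODE uniqueness, and so $\boldsymbol{\hat\delta}=\boldsymbol\delta^*$ in $\mathbb H^2\times\mathbb H^2$. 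I expect the parabolic-regularity input in (1) to be routine once $h_2$ is a bounded classical solution; the real work is the localization-and-limit step in (2) — the gradients of the $h_i$ need not be globally bounded, so one must lean on their boundedness together with $Q\in\mathbb S^2$ to push the stopping times to $T$ — and, for uniqueness, the use of strict positivity of the order-flow functions to make the pointwise Hamiltonian strictly concave.
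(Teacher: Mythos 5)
Your proof of parts (1) and (2) follows essentially the same verification route as the paper. In (1) you use Feynman--Kac plus interior Schauder estimates rather than citing a pre-packaged classical-solution theorem, but the mathematical content is the same, and your bookkeeping in (2) — localize with stopping times, dominate the stopped quantities using boundedness of $h_0,h_1,h_2,A,\phi,a,b$ together with $Q\in\mathbb S^2$, pass to the limit to get $\tilde H\ge H$, then substitute the feedback control, observe that the closed-loop inventory solves a linear random ODE with bounded coefficients and hence is itself bounded, and conclude admissibility and equality — mirrors the paper's argument precisely.

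Part (3) is where you take a genuinely different route. The paper argues functionally: the map $\boldsymbol\delta\mapsto$ (objective) is strictly concave on $\mathbb H^2\times\mathbb H^2$, because the running reward $\delta^a a(\zeta-\gamma\delta^a)+\delta^b b(\zeta-\gamma\delta^b)$ is strictly concave pointwise (using $a,b>0$) and the inventory term $-\phi Q^2$ is concave (since $Q$ is affine in $\boldsymbol\delta$), and a strictly concave functional has at most one maximizer. You instead argue pathwise: any optimal $\boldsymbol{\hat\delta}$ must force the gap term $R$ to vanish $du\otimes d\mathbb P$-a.e., which by strict concavity of the pointwise Hamiltonian pins $\boldsymbol{\hat\delta}_u$ to the unique feedback maximizer at $(\hat Q_u, L_u)$; then uniqueness of the resulting closed-loop linear ODE gives $\hat Q=Q^*$ and hence $\boldsymbol{\hat\delta}=\boldsymbol\delta^*$. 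Both arguments are correct and both hinge on strict positivity of $a,b$. The paper's concavity argument is shorter and sidesteps the dynamic-programming machinery entirely; yours is more in the DPP spirit, shows as a by-product that every optimal control is in closed-loop feedback form, and is the argument that generalizes more readily when one cannot write the objective as a manifestly concave functional of the control alone.
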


\begin{proof}
(1) The fact that both \eqref{lin_pde_1} and \eqref{lin_pde_2} accepts classical solutions follows from \cite{becherer2005classical}. It suffices to apply the Feynman–Kac formula to derive the boundedness.

(2) It is straightforward to see that $\tilde{H}$ is sufficiently regular to apply the It\^o's formula and satisfies the quadratic growth condition in $q$:
\begin{equation*}
    |\tilde{H}(t,l,q)|\leq C\,(1+q^2),
\end{equation*}
for a constant $C>0$. We follow the arguments in \cite{pham2009continuous}. First, we will show that $\tilde{H}\geq H$. For all $(t,l,q)\in[0,T)\times\mathbb{R}^d\times\mathbb{R}$, $\boldsymbol{\alpha}\in\mathbb{H}^2\times \mathbb{H}^2$, and any stopping time $\tau$ valued in $[t,\infty)$, it holds that
\begin{equation*}
\begin{aligned}
    \tilde{H}(s\wedge\tau, L_{s\wedge\tau}^{t,l}, Q_{s\wedge\tau}^{t,l,q,\boldsymbol{\alpha}})=\tilde{H}&(t,l,q)+\int_t^{s\wedge\tau}\frac{\partial \tilde{H}}{\partial l}(u, L_{u}^{t,l}, Q_{u}^{t,l,q,\boldsymbol{\alpha}})^{tr}\,\Sigma(u, L_{u}^{t,l})\,dW_u\\
    &+\int_t^{s\wedge\tau}\big[\frac{\partial \tilde{H}}{\partial t}(u, L_{u}^{t,l}, Q_{u}^{t,l,q,\boldsymbol{\alpha}})+\mathcal{L}\tilde{H}(u, L_{u}^{t,l}, Q_{u}^{t,l,q,\boldsymbol{\alpha}})\\
    &\hspace{1.8cm}-\frac{\partial \tilde{H}}{\partial q}\,a(L_{u}^{t,l})\,(\zeta-\gamma\,\alpha_u^a)\\
    &\hspace{2.8cm}+\frac{\partial \tilde{H}}{\partial q}\,b(L_{u}^{t,l})\,(\zeta-\gamma\,\alpha_u^b)\big]\,du.
\end{aligned}
\end{equation*}
If we choose
\begin{equation*}
    \tau=\tau_n=\inf\Big\{r\geq t:\; \int_t^{r}\big|\frac{\partial \tilde{H}}{\partial l}(u, L_{u}^{t,l}, Q_{u}^{t,l,q,\boldsymbol{\alpha}})^{tr}\,\Sigma(u, L_{u}^{t,l})\big|^2\,du\geq n \Big\},
\end{equation*}
the stopped stochastic integral is then a martingale and hence

\begin{equation*}
\begin{aligned}
    &\mathbb{E}\big[\tilde{H}(s\wedge\tau_n, L_{s\wedge\tau_n}^{t,l}, Q_{s\wedge\tau_n}^{t,l,q,\boldsymbol{\alpha}})\big]\\
    &=\tilde{H}(t,l,q)+\mathbb{E}\int_t^{s\wedge\tau_n}\big[\frac{\partial \tilde{H}}{\partial t}(u, L_{u}^{t,l}, Q_{u}^{t,l,q,\boldsymbol{\alpha}})+\mathcal{L}\tilde{H}(u, L_{u}^{t,l}, Q_{u}^{t,l,q,\boldsymbol{\alpha}})\\
    &\hspace{2cm}-\frac{\partial \tilde{H}}{\partial q}\,a(L_{u}^{t,l})\,(\zeta-\gamma\,\alpha_u^a)+\frac{\partial \tilde{H}}{\partial q}\,b(L_{u}^{t,l})\,(\zeta-\gamma\,\alpha_u^b)\big]\,du\\
    &\leq \tilde{H}(t,l,q)+\mathbb{E}\int_t^{s\wedge\tau_n}\big[\frac{\partial \tilde{H}}{\partial t}(u, L_{u}^{t,l}, Q_{u}^{t,l,q,\boldsymbol{\alpha}})+\mathcal{L}\tilde{H}(u, L_{u}^{t,l}, Q_{u}^{t,l,q,\boldsymbol{\alpha}})\\
    &\hspace{2cm}+\sup_{\delta^a\in\mathbb{R}}\big\{(\delta^a-\frac{\partial \tilde{H}}{\partial q})\,a(L_{u}^{t,l})\,(\zeta-\gamma\,\delta^a)\big\}\\
    &\hspace{2cm}+\sup_{\delta^b\in\mathbb{R}}\big\{(\delta^b+\frac{\partial \tilde{H}}{\partial q})\,b(L_{u}^{t,l})\,(\zeta-\gamma\,\delta^b)\big\}\\
    &\hspace{2cm}-\alpha_u^a\,a(L_{u}^{t,l})\,(\zeta-\gamma\,\alpha_u^a)-\alpha_u^b\,b(L_{u}^{t,l})\,(\zeta-\gamma\,\alpha_u^b)\big]\,du\\
    &= \tilde{H}(t,l,q)+\mathbb{E}\int_t^{s\wedge \tau_n}\Big[\phi(L_{u}^{t,l})\,(Q_{u}^{t,l,q,\boldsymbol{\alpha}})^2\\
    &\hspace{2cm}-\alpha_u^a\,a(L_{u}^{t,l})\,(\zeta-\gamma\,\alpha_u^a)-\alpha_u^b\,b(L_{u}^{t,l})\,(\zeta-\gamma\,\alpha_u^b)\big]\,du.
\end{aligned}
\end{equation*}
Note that
\begin{equation*}
\begin{aligned}
    &\big|\int_t^{s\wedge \tau_n}\Big[\phi(L_{u}^{t,l})\,(Q_{u}^{t,l,q,\boldsymbol{\alpha}})^2-\alpha_u^a\,a(L_{u}^{t,l})\,(\zeta-\gamma\,\alpha_u^a)-\alpha_u^b\,b(L_{u}^{t,l})\,(\zeta-\gamma\,\alpha_u^b)\big]\,du\,\big|\\
    &\hspace{2cm}\leq \int_t^{T}\big|\phi(L_{u}^{t,l})\,(Q_{u}^{t,l,q,\boldsymbol{\alpha}})^2-\alpha_u^a\,a(L_{u}^{t,l})\,(\zeta-\gamma\,\alpha_u^a)\\
    &\hspace{3cm}-\alpha_u^b\,b(L_{u}^{t,l})\,(\zeta-\gamma\,\alpha_u^b)\big|\,du
\end{aligned}
\end{equation*}
and
\begin{equation*}
    \tilde{H}(s\wedge\tau, L_{s\wedge\tau_n}^{t,l}, Q_{s\wedge\tau_n}^{t,l,q,\boldsymbol{\alpha}})\leq C\,(1+\sup_{u\in[t,T]}|Q_{u}^{t,l,q,\boldsymbol{\alpha}}|^2).
\end{equation*}
The right hand sides of both above terms are integrable and we apply the dominated convergence theorem twice to see 
\begin{equation*}
\begin{aligned}
    \mathbb{E}\big[\tilde{H}(s, &L_{s}^{t,l}, Q_{s}^{t,l,q,\boldsymbol{\alpha}})\big]\leq \tilde{H}(t,l,q)\\
    &+\mathbb{E}\int_t^{s}\Big[\phi(L_{u}^{t,l})\,(Q_{u}^{t,l,q,\boldsymbol{\alpha}})^2-\alpha_u^a\,a(L_{u}^{t,l})\,(\zeta-\gamma\,\alpha_u^a)-\alpha_u^b\,b(L_{u}^{t,l})\,(\zeta-\gamma\,\alpha_u^b)\big]\,du,
\end{aligned}
\end{equation*}
and further
\begin{equation*}
\begin{aligned}
    \mathbb{E}\big[ -A(L_{T}^{t,l})\,(Q_{T}^{t,l,q,\boldsymbol{\alpha}})^2 \big] &\leq \tilde{H}(t,l,q)\\
    &+\mathbb{E}\int_t^{T}\Big[\phi(L_{u}^{t,l})\,(Q_{u}^{t,l,q,\boldsymbol{\alpha}})^2-\alpha_u^a\,a(L_{u}^{t,l})\,(\zeta-\gamma\,\alpha_u^a)\\
    &\hspace{3.5cm}-\alpha_u^b\,b(L_{u}^{t,l})\,(\zeta-\gamma\,\alpha_u^b)\big]\,du.
\end{aligned}
\end{equation*}
Therefore, the arbitrariness of $\boldsymbol{\alpha}$ implies
\begin{equation*}
    H(t,l,q)\leq \tilde{H}(t,l,q).
\end{equation*}
To see the reversed inequality, let $Q^{t,l,q,\boldsymbol{\delta}^*}$ be the associated inventory of the control $\boldsymbol{\delta}$ defined in \eqref{uniq_Mark_cont} and apply the It\^o's formula to $\tilde{H}(u, L_{u}^{t,l},  Q_{u}^{t,l,q,\boldsymbol{\delta}^*})$ between $t\in[0,T)$ and $s\in[t,T)$. Through similar localizing and limiting procedures, one can obtain
\begin{equation*}
\begin{aligned}
    -A(L_{T}^{t,l})\,(Q_{T}^{t,l,q,\boldsymbol{\delta}^*})^2&=\tilde{H}(t,l,q)+\mathbb{E}\int_t^{T}\big[\frac{\partial \tilde{H}}{\partial t}(u, L_{u}^{t,l}, Q_{u}^{t,l,q,\boldsymbol{\delta}^*})+\mathcal{L}\tilde{H}(u, L_{u}^{t,l}, Q_{u}^{t,l,q,\boldsymbol{\delta}^*})\\
    &\hspace{3cm}-\frac{\partial \tilde{H}}{\partial q}\,a(L_{u}^{t,l})\,(\zeta-\gamma\,\delta_u^{a,*})\\
    &\hspace{3.8cm}+\frac{\partial \tilde{H}}{\partial q}\,b(L_{u}^{t,l})\,(\zeta-\gamma\,\delta_u^{b,*})\big]\,du\\
    &= \tilde{H}(t,l,q)+\mathbb{E}\int_t^{T}\Big[\phi(L_{u}^{t,l})\,(Q_{u}^{t,l,q,\boldsymbol{\delta}^*})^2\\
    &\hspace{2cm}-\alpha_u^a\,a(L_{u}^{t,l})\,(\zeta-\gamma\,\delta_u^{a,*})-\alpha_u^b\,b(L_{u}^{t,l})\,(\zeta-\gamma\,\delta_u^{b,*})\big]\,du.
\end{aligned}
\end{equation*}
To finally conclude that
    \begin{equation*}
    H(t,l,q)= \tilde{H}(t,l,q),
\end{equation*}
it suffices to show that  $\boldsymbol{\delta}^*\in\mathbb{H}^2$. Since the inventory satisfies the linear random ODE
\begin{equation*}
\begin{aligned}
    dQ_{u}^{t,l,q,\boldsymbol{\delta}^*}&=-a_u\,(\zeta-\gamma\,\delta_u^{a,*})\,du+b_u\,(\zeta-\gamma\,\delta_u^{b,*})\,du\\
    &=\frac{1}{2}\,[\zeta\,(b_u-a_u)+\gamma\,h_1(u,l)\,(a_u-b_u)]\,du+\gamma\,(a_u+b_u)\,h_2(u,l)\,Q_{u}^{t,l,q,\boldsymbol{\delta}^*}\,du,
\end{aligned}
\end{equation*}
the process $Q^{t,l,q,\boldsymbol{\delta}^*}$ turns out to be bounded due to the boundedness of $a, b, h_1,$ and $h_2$. Consequently, the control $\boldsymbol{\delta}^*$ is also bounded and thus admissible.

(3) It suffices to prove the uniqueness of the optimal control. The statement is true if the objective functional \eqref{equ_obj_functional} is strictly concave. The strict concavity of \eqref{equ_obj_functional} can be deduced from the strict concavity of $\delta_u^a\,a_u(\zeta-\gamma\delta_u^a)+\delta_u^b\,b_u(\zeta-\gamma\delta_u^b)$ and the convexity of $Q_t^2$ with respect to the control. The strict concavity of $\delta_u^a\,a_u(\zeta-\gamma\delta_u^a)+\delta_u^b\,b_u(\zeta-\gamma\delta_u^b)$ is clear because they are second order polynomials with negative signs for the leading terms. Let $\boldsymbol{\delta}, \boldsymbol{\beta}\in\mathbb{H}^2\times\mathbb{H}^2$ be two generic controls, and set $\lambda\in[0,1]$. We check the convexity of the quadratic inventory through
\begin{equation*}
\begin{aligned}
    (Q_u^{\lambda\,\boldsymbol{\delta}+(1-\lambda)\,\boldsymbol{\beta}})^2&-\lambda\,(Q_u^{\boldsymbol{\delta}})^2-(1-\lambda)\,(Q_u^{\boldsymbol{\beta}})^2\\
    &=\big(\lambda\,Q_u^{\boldsymbol{\delta}}+(1-\lambda)\,Q_u^{\boldsymbol{\beta}}\big)^2-\lambda\,(Q_u^{\boldsymbol{\delta}})^2-(1-\lambda)\,(Q_u^{\boldsymbol{\beta}})^2\\
    &\leq0,
\end{aligned}
\end{equation*}
where we have applied the linearity of the inventory with respect to the control. 
\end{proof}

We intend to prove the existence and uniqueness of the classical solution to \eqref{non_lin_pde} on the basis of \cite{becherer2005classical}, the result of which cannot be applied directly because of the quadratic term $h_2(t,l)^2$.  To begin with, we introduce two auxiliary independent linear PDEs
\begin{gather*}
    \frac{\partial h}{\partial t}+\mathcal{L}h(t,l)+\gamma\big(a(l)+b(l)\big)\,g(t,l)\,h(t,l)-\phi(l)=0, \quad \text{such that}\;\; h(T,l)=-A; \label{pde_1}\\
    \frac{\partial h}{\partial t}+\mathcal{L}h(t,l)+\gamma\big(a(l)+b(l)\big)\,\big[\psi\big(g(t,l)\big)\big]^2-\phi(l)=0, \quad \text{such that}\;\; h(T,l)=-A,
\end{gather*}
where $\psi:\mathbb{R}\to\mathbb{R}$ is a truncation function with the parameter $\xi>0$ defined by
\begin{equation*}
    \psi(x)=
    \begin{cases}
    -\xi, \quad &\text{if } x<-\xi\\
    x, \quad &\text{if } -\xi\leq x\leq\xi\\
    \xi, \quad &\text{if } x>\xi,
    \end{cases}
\end{equation*}
and $g$ is a bounded continuous function to be specified later. If the function $g$ and the solution of the linear PDE are considered as input and output accordingly, we can define two mappings $F_1$ and $F_2$ for the two equations by the Feynman-Kac formula:
\begin{gather*}
    (F_1\,g)(t,l)=\mathbb{E}\Big[-A\,e^{\int_t^T \gamma (a_s+b_s)\,g(s, L_s^{t,l})\,ds}-\int_t^T \phi_s\,e^{\int_t^s \gamma (a_s+b_s)\,g(u, L_u^{t,l})\,du}\,ds\Big],\\
    (F_2\,g)(t,l)=\mathbb{E}\Big[-A+\int_t^T \gamma (a_s+b_s)\,\big(\psi(g(s,L_s^{t,l}))\big)^2\,ds-\int_t^T \phi_s\,ds\Big].
\end{gather*}

\noindent It turns out that $F_1$ and $F_2$ are contraction mappings in the space of bounded functions.

\begin{lemma}
Denote by $C_b^-$ the set of bounded, non-positive and continuous functions on $[0,T]\times \mathbb{R}^d$, then the following statements hold:\\
\indent (1) $F_1$ defines a contraction mapping on $C_b^-$ with respect to the norm
\begin{equation}
    \|v\|_{\zeta_1}:=\sup_{(t,x)\in[0,T]\times \mathbb{R}^d}e^{-\zeta_1\,(T-t)}\cdot|v(t,x)|,
    \nonumber
\end{equation}
for some $\zeta_1\in\mathbb{R}_+$ large enough.\\
\indent (2) $F_2$ defines a contraction mapping on $C_b$ with respect to the norm
\begin{equation}
    \|v\|_{\zeta_2}:=\sup_{(t,x)\in[0,T]\times \mathbb{R}^d}e^{-\zeta_2\,(T-t)}\cdot|v(t,x)|,
    \nonumber
\end{equation}
for some $\zeta_2\in\mathbb{R}_+$ large enough.
\end{lemma}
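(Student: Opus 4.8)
The plan is to apply the Banach fixed point theorem; this lemma is precisely the contraction step on which the next subsection's existence and uniqueness of classical solutions to the auxiliary PDEs will rest. For each map I would verify three things: that the underlying space is complete under the weighted norm, that the map sends the space into itself, and that its Lipschitz constant is strictly below $1$ once the weight parameter is taken large. Completeness is immediate: since $T<\infty$, the factor $e^{-\zeta_i(T-t)}$ is bounded above and below by positive constants on $[0,T]$, so $\|\cdot\|_{\zeta_i}$ is equivalent to the ordinary supremum norm; hence $(C_b,\|\cdot\|_{\zeta_2})$ is a Banach space and $C_b^-$ is a closed subset of $(C_b,\|\cdot\|_{\zeta_1})$, so complete.

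For the self-map property of $F_1$: since $g\le 0$ and $a,b\ge 0$, each exponent $\int_t^s\gamma(a_u+b_u)\,g(u,L_u^{t,l})\,du$ is non-positive, so every exponential lies in $(0,1]$; together with $A\ge 0$ and $\phi\ge 0$ this gives $(F_1g)(t,l)\le 0$ and $|(F_1g)(t,l)|\le\bar{A}+\bar{\phi}\,T$. Continuity of $(t,l)\mapsto(F_1g)(t,l)$ follows from the continuity of the flow $(t,l)\mapsto L_\cdot^{t,l}$ guaranteed by Assumption \ref{sde_assumption} together with dominated convergence, the integrands being bounded. For $F_2$ the bound $|\psi|\le\xi$ gives $|(F_2g)(t,l)|\le\bar{A}+\gamma(\bar{a}+\bar{b})\xi^2 T+\bar{\phi}\,T$, and continuity is obtained the same way; non-positivity is not claimed, which is why the codomain of $F_2$ is $C_b$ rather than $C_b^-$.

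For the contraction estimate of $F_1$, write $\Theta_i(s):=\int_t^s\gamma(a_u+b_u)\,g_i(u,L_u^{t,l})\,du\le 0$ and use the elementary inequality $|e^x-e^y|\le(e^x\vee e^y)|x-y|\le|x-y|$ valid for $x,y\le 0$, so that
\begin{align*}
  |e^{\Theta_1(s)}-e^{\Theta_2(s)}| &\le\gamma(\bar{a}+\bar{b})\int_t^s|g_1-g_2|(u,L_u^{t,l})\,du\\
  &\le\gamma(\bar{a}+\bar{b})\,\|g_1-g_2\|_{\zeta_1}\int_t^s e^{\zeta_1(T-u)}\,du\\
  &\le\frac{\gamma(\bar{a}+\bar{b})}{\zeta_1}\,e^{\zeta_1(T-t)}\,\|g_1-g_2\|_{\zeta_1}.
\end{align*}
Substituting this into the two terms defining $F_1$ and using $A\le\bar{A}$, $\phi\le\bar{\phi}$ yields $\|F_1g_1-F_1g_2\|_{\zeta_1}\le\frac{\gamma(\bar{a}+\bar{b})(\bar{A}+\bar{\phi}T)}{\zeta_1}\|g_1-g_2\|_{\zeta_1}$, so any $\zeta_1>\gamma(\bar{a}+\bar{b})(\bar{A}+\bar{\phi}T)$ makes $F_1$ a contraction. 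For $F_2$, the relevant fact is that $\psi$ is $1$-Lipschitz and bounded by $\xi$, hence $|\psi(x)^2-\psi(y)^2|=|\psi(x)+\psi(y)|\,|\psi(x)-\psi(y)|\le 2\xi|x-y|$; the same computation then gives $\|F_2g_1-F_2g_2\|_{\zeta_2}\le\frac{2\xi\gamma(\bar{a}+\bar{b})}{\zeta_2}\|g_1-g_2\|_{\zeta_2}$, and $\zeta_2>2\xi\gamma(\bar{a}+\bar{b})$ suffices.

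I expect the contraction estimates to be routine; the one point deserving care is the self-map verification --- in particular the joint continuity of $(t,l)\mapsto(F_ig)(t,l)$, which relies on continuity of the SDE flow and a dominated-convergence argument, and, for $F_1$, the observation that the hypothesis $g\le 0$ is exactly what confines the exponentials to $(0,1]$ so that the Lipschitz bound on $e^{\Theta}$ does not acquire an uncontrolled multiplicative factor. Everything else is a direct invocation of the contraction principle with the weighted norms.
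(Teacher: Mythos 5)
Your proof takes essentially the same route as the paper: both rely on the key observation that $|e^x-e^y|\le|x-y|$ for $x,y\le 0$ (which is exactly where the sign constraints on $g$ and the positivity of $a,b$ enter for $F_1$), on the $2\xi$-Lipschitz estimate $|\psi(x)^2-\psi(y)^2|\le 2\xi|x-y|$ for $F_2$, and on the auxiliary bound $\int_t^T|v-w|(s,L_s)\,ds\le\zeta_i^{-1}e^{\zeta_i(T-t)}\|v-w\|_{\zeta_i}$, arriving at the identical thresholds $\zeta_1>\gamma(\bar{a}+\bar{b})(\bar{A}+\bar{\phi}T)$ and $\zeta_2>2\xi\gamma(\bar{a}+\bar{b})$. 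Your added remarks on completeness of the weighted space and the self-map verification are correct and make explicit what the paper leaves implicit.
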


\begin{proof}
We first introduce an inequality that will be used for both statements. For $\zeta_1>0$ and some bounded functions $v$ and $w$, one can see
\begin{equation}
    \begin{aligned}
        \int_t^T |v(s, L_s)-w(s, L_s)|\,ds &= \int_t^T |v(s, L_s)-w(s, L_s)| e^{-\zeta_1\,(T-s)}\,e^{\zeta_1\,(T-s)}\,ds\\
        &\leq\|v-w\|_{\zeta_1}\int_t^T e^{\zeta_1\,(T-s)}\,ds\\
        &=\frac{e^{\zeta_1\,(T-t)}-1}{\zeta_1}\cdot\|v-w\|_{\zeta_1}.
    \label{aux_ineq}
    \end{aligned}
\end{equation}
For any $v\in C_b^-$, it is clear that $F_1v$ is also a bounded negative function. Since the term inside the conditional expectation is continuous and uniformly bounded by $\bar{A}+\bar{\phi}\,T$, the continuity of $F_1v$ follows from the dominated convergence theorem. Consider any $v$, $w\in C_b^-$ and it follows that, for all $(t,l)\in[0,T]\times \mathbb{R}^d$, 
\begin{equation}
    \begin{aligned}
    e^{-\zeta_1\,(T-t)}\,&|(F_1\,v)(t, l)-(F_1\,w)(t, l)|\\
    &\leq e^{-\zeta_1\,(T-t)}\, \mathbb{E}\Big[A\,|e^{\int_t^T \gamma (a_s+b_s)\,v(s, L_s^{t,l})\,ds}-e^{\int_t^T \gamma (a_s+b_s)\,w(s, L_s^{t,l})\,ds}|\\
    &\hspace{2.5cm}+\bar{\phi}\int_t^T|e^{\int_t^s \gamma (a_u+b_u)\,v(u, L_u^{t,l})\,du}-e^{\int_t^s \gamma (a_u+b_u)\,w(u, L_u^{t,l})\,du}|\,ds\Big]\\
    &\leq e^{-\zeta_1\,(T-t)}\, \mathbb{E}\Big[A\,\big|\int_t^T \gamma (a_s+b_s)\,v(s, L_s^{t,l})\,ds-\int_t^T \gamma (a_s+b_s)\,w(s, L_s^{t,l})\,ds\big|\\
    &\hspace{2.5cm}+\bar{\phi}\int_t^T\big|\int_t^s \gamma (a_u+b_u)\,v(u, L_u^{t,l})\,du\\
    &\hspace{4,5cm}-\int_t^s \gamma (a_u+b_u)\,v(u, L_u^{t,l})\,du\big|\,ds\Big]\\
    &\leq e^{-\zeta_1\,(T-t)}\cdot \gamma\,(\bar{a}+\bar{b})\,(\bar{A}+\bar{\phi}\,T)\cdot \mathbb{E}\Big[\int_t^T |v(s, L_s^{t,l})-w(s, L_s^{t,l})|\,ds\Big]\\
    &\leq \frac{\gamma\,(\bar{a}+\bar{b})\,(\bar{A}+\bar{\phi}\,T)}{\zeta_1}\cdot\|v-w\|_{\xi_1}
    \nonumber
    \end{aligned}
\end{equation}
where the positiveness of $a, b$ and negativeness of $v$, $w$ are used to derive the second inequality, and the proposed inequality \eqref{aux_ineq} is applied in the last line. For a contraction mapping, it suffices to pick $\zeta_1\in\mathbb{R}_+$ such that 
\begin{equation}
    \zeta_1>\gamma\,(\bar{a}+\bar{b})\,(\bar{A}+\bar{\phi}\,T).
    \nonumber
\end{equation}
\indent Similarly, we look at the second statement and compute that
\begin{equation}
    \begin{aligned}
    e^{-\zeta_2\,(T-t)}\,|(F_2\,v)(t, l)&-(F_2\,w)(t, l)|\\
    &=e^{-\zeta_2\,(T-t)}\, \mathbb{E}\big|\int_t^T\gamma (a_s+b_s)\,\big[\psi(v(s,L_s^{t,l}))^2-\psi(w(s, L_s^{t,l}))^2\big]\,ds\,\big|\\
    &\leq2\xi\,e^{-\zeta_2\,(T-t)}\, \mathbb{E}\int_t^T\big|\gamma (a_s+b_s)\,\big[\psi(v(s,L_s^{t,l}))-\psi(w(s,L_s^{t,l}))\big]\,\big|\,ds\\
    &\leq \frac{2\xi\,\gamma\,(\bar{a}+\bar{b})}{\zeta_2}\cdot\|v-w\|_{\xi_2}.
    \nonumber
    \end{aligned}
\end{equation}
Again, it suffices to pick $\zeta_2$ large enough such that $\zeta_2>2\xi\,\gamma\,(\bar{a}+\bar{b})$ for a contraction mapping. The boundedness and continuity of $F_2v$ follows from the uniform boundedness and continuity inside the conditional expectation. 
\end{proof}

\noindent Due to the Banach fixed-point theorem, both $F_1$ and $F_2$ have their own fixed points. One reason why we consider both two mappings is that they actually share the same fixed point if the constant $\xi$ is sufficiently large.

\begin{lemma}
Let $v\in C_b^-$ be the fixed point of the $F_1$, then it is also the one of $F_2$, provided that the constant $\xi$ of the truncation $\psi$ is large enough.
\end{lemma}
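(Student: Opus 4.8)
The plan is to show that the fixed point $v$ of $F_1$ is uniformly bounded by a constant that does not depend on $\xi$, and then to choose $\xi$ larger than that bound so that the truncation $\psi$ acts as the identity along the relevant trajectories, forcing $F_1 v = F_2 v = v$. The key observation is that $v$ solves the nonlinear PDE \eqref{non_lin_pde} classically (being the fixed point of $F_1$ means $v = h_2$ with $g = v$, which is exactly \eqref{non_lin_pde}), so a priori estimates for $h_2$ will do the job.

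First I would establish a lower bound for $v$. Since $v \in C_b^-$, we already know $v \leq 0$. For the lower bound, note that from the Feynman--Kac representation
\begin{equation*}
    v(t,l) = \mathbb{E}\Big[-A\,e^{\int_t^T \gamma (a_s+b_s)\,v(s, L_s^{t,l})\,ds} - \int_t^T \phi_s\,e^{\int_t^s \gamma (a_u+b_u)\,v(u, L_u^{t,l})\,du}\,ds\Big],
\end{equation*}
and because $v \leq 0$, $a, b > 0$, both exponential factors lie in $(0,1]$; hence $v(t,l) \geq -\bar{A} - \bar{\phi}\,T$. This gives a uniform bound $\|v\|_\infty \leq \bar{A} + \bar{\phi}\,T =: \Xi$, and crucially $\Xi$ is independent of $\xi$.

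Next I would choose $\xi \geq \Xi$. Then for every $(s, l') \in [0,T] \times \mathbb{R}^d$ we have $|v(s, l')| \leq \Xi \leq \xi$, so $\psi(v(s, l')) = v(s, l')$ identically. In particular, along any trajectory $L^{t,l}$, $\psi\big(v(s, L_s^{t,l})\big) = v(s, L_s^{t,l})$ almost surely. Therefore $v$ satisfies, classically,
\begin{equation*}
    \frac{\partial v}{\partial t}+\mathcal{L}v(t,l)+\gamma\big(a(l)+b(l)\big)\,\big[\psi\big(v(t,l)\big)\big]^2-\phi(l)=0, \qquad v(T,l) = -A,
\end{equation*}
since this is just \eqref{non_lin_pde} rewritten with $v^2 = [\psi(v)]^2$. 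By the Feynman--Kac formula applied to this linear (in the unknown, with $[\psi(v)]^2$ now a fixed bounded continuous source term) PDE, $v = F_2 v$, which is the claim. One should also remark that the representation $v = F_2 v$ requires knowing $v$ has enough regularity to apply Itô/Feynman--Kac; this follows because $v$, being the fixed point of $F_1$, is a bounded classical solution of \eqref{non_lin_pde} by the results of \cite{becherer2005classical} invoked earlier, so the needed smoothness is already in hand.

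The main obstacle, though a mild one, is making the logical circle airtight: to call $v$ a classical solution of \eqref{non_lin_pde} one must first know the $F_1$-fixed-point $v$ is $C^{1,2}$, which in turn rests on the linear Schauder-type theory of \cite{becherer2005classical} applied to the linear PDE with coefficient $g = v$; so I would first record that $v \in C^{1,2}_b$ before differentiating, and only then substitute $[\psi(v)]^2 = v^2$. Everything else is a one-line comparison once the $\xi$-free bound $\Xi = \bar A + \bar\phi\,T$ is in place.
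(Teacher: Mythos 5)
Your opening two steps are correct and match the paper: the uniform bound $\|v\|_{\infty} \le \bar{A} + \bar{\phi}\,T$ follows exactly as you argue (both exponential factors lie in $(0,1]$ since $v \le 0$), and this bound is indeed $\xi$-independent, so one picks $\xi \ge \bar{A} + \bar{\phi}\,T$ to make $\psi(v)=v$ pointwise. But the remainder of your argument contains a genuine circularity. You assert that ``$v$, being the fixed point of $F_1$, is a bounded classical solution of \eqref{non_lin_pde} by the results of \cite{becherer2005classical} invoked earlier,'' and then pass the PDE through Feynman--Kac to land on $v = F_2 v$. At the point in the paper where this lemma is needed, however, we do \emph{not} yet know that the $F_1$-fixed-point is $C^{1,2}$: that regularity is established only in Theorem \ref{pde_thm}, whose proof \emph{relies on} the present lemma. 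Nor can one establish the regularity independently along the route you suggest. The fixed point $v$ is merely an element of $C_b^-$ --- bounded, continuous, nonpositive --- so the linear equation underlying $F_1$ (with coefficient $g=v$) has only a continuous, not H\"older-continuous, zeroth-order coefficient, which falls short of the hypotheses of Proposition~2.3 in \cite{becherer2005classical}. Applying that proposition directly to the untruncated equation \eqref{non_lin_pde} is also impossible, since $h \mapsto h^2$ is not globally Lipschitz; that is precisely why the truncation $\psi$ appears in the definition of $F_2$ at all.

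The paper's proof avoids any appeal to PDE regularity. It is a purely probabilistic/algebraic manipulation of the $F_1$-fixed-point identity: differentiate the $\omega$-wise Lebesgue integral $e^{\int_r^T \gamma(a_s+b_s)v(s,L_s)\,ds}$ in its lower limit (the fundamental theorem of calculus, no Malliavin or It\^o differentiability of $v$ needed), take expectations, apply Fubini and the tower property, and feed the fixed-point identity back in. This yields the representation $v(t,l)=\mathbb{E}[-A+\int_t^T\gamma(a_s+b_s)v(s,L_s^{t,l})^2\,ds-\int_t^T\phi_s\,ds]$ directly, without ever knowing $v$ is differentiable. Only then is the $\xi$-independent bound invoked to insert $\psi$. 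To repair your proposal you would either need to reproduce this probabilistic computation, or provide a genuinely independent bootstrap (e.g., using the smoothing properties of the transition density of $L$ to upgrade $v$ from $C_b$ to H\"older regularity before applying the linear Schauder theory), neither of which is currently supplied.
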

\begin{proof}
Since $v$ is the fixed point of $F_1$, for any $(t,l)\in[0,T]\times \mathbb{R}^d$ it holds
\begin{equation}
    v(t, l)=\mathbb{E}\Big[-A\,e^{\int_t^T \gamma\,(a_s+b_s)\,v(s, L_s^{t,l})\,ds}-\int_t^T\phi_s\, e^{\int_t^s \gamma\,(a_u+b_u)\,v(u,L_u^{t,l})\,du}\,ds\Big].
    \nonumber
\end{equation}
We take the derivative of $\exp(\int_t^T\gamma (a_s+b_s)\,v(s,L_s^{t,l})\,ds)$ to obtain
\begin{equation*}
\begin{aligned}
    de^{\int_r^T \gamma (a_s+b_s)\,v(s,L_s^{t,l})\,ds}&=-\gamma (a_r+b_r)\,v(r,L_r^{t,l})\,e^{\int_r^T\gamma (a_s+b_s)\,v(s,L_s^{t,l})\,ds}\,dr,\\
    A-A\,e^{\int_t^T\gamma (a_s+b_s)\,v(s,L_s^{t,l})\,ds}&=-A\int_t^T\gamma (a_s+b_s)\,v(s,L_s^{t,l})\,e^{\int_s^T\gamma (a_u+b_u)\,v(u,L_u^{t,l})\,du}\,ds,
\end{aligned}
\end{equation*}
and then take the expectation to see
\begin{equation}
\begin{aligned}
   \mathbb{E}\big[ A-A\,e^{\int_t^T\gamma (a_s+b_s)\,v(s,L_s^{t,l})\,ds}\big]&=\mathbb{E}\int_t^T\gamma (a_s+b_s)\,v(s,L_s^{t,l})\\
    &\hspace{0.5cm}\cdot\mathbb{E}_s\big[(-A)\,e^{\int_s^T\gamma (a_u+b_u)\,v(u,L_u^{t,l})\,du}\big]\,ds,\\
    v(t, l)+\mathbb{E}\Big[\int_t^T\phi_s\,e^{\int_t^s \gamma\,(a_u+b_u)\,v(u,L_u^{t,l})\,du}\,ds\Big]&=\mathbb{E}[-A] + \mathbb{E}\int_t^T\gamma (a_s+b_s)\,v(s,L_s^{t,l})\\
    &\hspace{0.5cm}\cdot\mathbb{E}_s\big[(-A)\,e^{\int_s^T\gamma (a_u+b_u)\,v(u,L_u^{t,l})\,du}\big]\,ds.
    \label{fix_1}
\end{aligned}
\end{equation}
If one notes that
\begin{equation}
\begin{aligned}
    &\mathbb{E}\int_t^T\gamma (a_s+b_s)\,v(s,L_s^{t,l})\,\mathbb{E}_s\big[(-A)\,e^{\int_s^T\gamma (a_u+b_u)\,v(u,L_u^{t,l})\,du}\big]\,ds\\
    &=\mathbb{E}\int_t^T\gamma (a_s+b_s)\,v(s,L_s^{t,l})\,\big(v(s,L_s^{t,l})+\mathbb{E}_s\int_s^T \phi_r\,e^{\int_s^r \gamma (a_u+b_u) v(u,L_u^{s,L_s^{t,l}})\,du}\,dr\big)\,ds,
    \label{fix_2}
\end{aligned}
\end{equation}
which follows from the fixed point property, and also
\begin{equation}
\begin{aligned}
    \mathbb{E}\Big[\int_t^T&\gamma\,(a_s+b_s)\,v(s, L_s^{t,l})\,\mathbb{E}_s\big(\int_s^T\phi_r\, e^{\int_s^r \gamma\,(a_u+b_u)\,v(u, L_u^{s,L_s^{t,l}})\,du}\,dr\big)\,ds\Big]\\
    &=\mathbb{E}\Big[\int_t^T\phi_r\,\big(\int_t^r \gamma\,(a_s+b_s)\,v(s, L_s^{t,l})\,e^{\int_s^r \gamma\,(a_u+b_u)\,v(u, L_u^{t,l})\,du}\,ds\big)\,dr\Big]\\
    &=\mathbb{E}\Big[\int_t^T\phi_r\,\big(e^{\int_t^r \gamma\,(a_u+b_u)\,v(u, L_u^{t,l})\,du}-1\big)dr\Big],
\end{aligned}
\label{fix_3}
\end{equation}
one can observe the following result via \eqref{fix_1}--\eqref{fix_3}:
\begin{equation}
    v(t,l)=\mathbb{E}\Big[-A+\int_t^T \gamma\,(a_s+b_s)\,v(s, L_s^{t,l})^2\,ds-\int_t^T \phi_s\,ds\Big].
    \nonumber
\end{equation}
Since $v$ is the fixed point of $F_1$, it's clear that $v$ is uniformly bounded by $\bar{A}+\bar{\phi}\,T$ and, if we pick $\xi\geq \bar{A}+\bar{\phi}\,T$, the equation above is equivalent to
\begin{equation}
    v(t, l)=\mathbb{E}\Big[-A+\int_t^T \gamma\,(a_s+b_s)\,\psi(v(s,L_s^{t,l}))^2\,ds-\int_t^T \phi_s\,ds\Big],
    \nonumber
\end{equation}
showing that $v$ is also the fixed point of $F_2$. 
\end{proof}

\noindent The other reason why we consider both two mappings is that the mapping $F_2$ has been studied by \cite{becherer2005classical} under some relative general assumptions. We can now extend its result to our equation \eqref{non_lin_pde}.

\begin{theorem}
\label{pde_thm}
The equation \eqref{non_lin_pde} accepts a unique classical solution in $C_b^{1,2}$. More specifically, the solution takes value in $[-\bar{A}-\bar{\phi}\,T,0)$.
\end{theorem}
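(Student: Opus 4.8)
The plan is to obtain the solution of \eqref{non_lin_pde} as the fixed point $v$ of $F_1$ on $C_b^-$, whose existence is already secured by the Banach fixed-point theorem together with the two preceding lemmas, and then to bootstrap its regularity through the second auxiliary linear PDE. First I would fix the truncation parameter $\xi \ge \bar{A} + \bar{\phi}\,T$. By the preceding lemma, the fixed point $v \in C_b^-$ of $F_1$ is then simultaneously the fixed point of $F_2$, so that
\begin{equation*}
    v(t,l)=\mathbb{E}\Big[-A+\int_t^T \gamma(a_s+b_s)\,\big(\psi(v(s,L_s^{t,l}))\big)^2\,ds-\int_t^T \phi_s\,ds\Big].
\end{equation*}

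Next, since $v$ is bounded and continuous, the source term $(s,l)\mapsto \gamma(a(l)+b(l))\,\big(\psi(v(s,l))\big)^2$ is bounded and continuous, and Assumptions \ref{sde_assumption}--\ref{lip_ord_flow} place us exactly in the framework of \cite{becherer2005classical}. Hence the second auxiliary linear PDE with $g=v$ admits a classical solution of class $C_b^{1,2}$, which by the Feynman-Kac formula must coincide with $F_2 v = v$; in particular $v\in C_b^{1,2}$ and $\partial_t v+\mathcal{L}v+\gamma(a+b)\,(\psi(v))^2-\phi=0$ with $v(T,\cdot)=-A$. To remove the truncation, I would use the $F_1$-representation
\begin{equation*}
    v(t,l)=\mathbb{E}\Big[-A\,e^{\int_t^T \gamma (a_s+b_s)\,v(s, L_s^{t,l})\,ds}-\int_t^T \phi_s\,e^{\int_t^s \gamma (a_u+b_u)\,v(u, L_u^{t,l})\,du}\,ds\Big],
\end{equation*}
where all exponents are non-positive because $a,b>0$ and $v\le 0$; bounding each exponential by $1$ gives $0\ge v(t,l)\ge -\bar{A}-\bar{\phi}\,T$, so $\|v\|_\infty\le \xi$ and therefore $\psi(v)=v$ pointwise. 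Consequently $\big(\psi(v)\big)^2=v^2$ and $v$ is a genuine bounded classical solution of \eqref{non_lin_pde} taking values in $[-\bar{A}-\bar{\phi}\,T,0]$ (and strictly negative unless $A$ and $\phi$ vanish identically).

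For uniqueness, I would take any bounded classical solution $\tilde v$ of \eqref{non_lin_pde}, set $c(t,l):=\gamma\big(a(l)+b(l)\big)\,\tilde v(t,l)$ --- a bounded continuous function --- and observe that $\tilde v$ solves the \emph{linear} equation $\partial_t u+\mathcal{L}u+c\,u-\phi=0$ with $u(T,\cdot)=-A$. Applying the Feynman-Kac formula, which is legitimate since $\tilde v$ is bounded, yields
\begin{equation*}
    \tilde v(t,l)=\mathbb{E}\Big[-A\,e^{\int_t^T c(s, L_s^{t,l})\,ds}-\int_t^T \phi_s\,e^{\int_t^s c(u, L_u^{t,l})\,du}\,ds\Big]=(F_1\tilde v)(t,l),
\end{equation*}
and the right-hand side is $\le 0$ since $A,\phi\ge 0$, so $\tilde v\in C_b^-$. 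As $F_1$ has a unique fixed point on $C_b^-$, we conclude $\tilde v=v$.

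The routine ingredients are the two Feynman-Kac representations and the removal of the truncation; the step that needs the most care is the appeal to \cite{becherer2005classical}: one has to verify that the linear auxiliary PDE, with the specific bounded continuous data built from $v$, really meets all the structural hypotheses there (Lipschitz continuity and growth of $\Gamma,\Sigma$, non-degeneracy of $\Sigma\Sigma^{tr}$, continuity of $a,b,\phi,A$) so that its solution is classical and of class $C_b^{1,2}$, and likewise that the Feynman-Kac representation applies to an arbitrary bounded classical solution in the uniqueness argument.
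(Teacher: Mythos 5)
Your uniqueness argument is cleaner than the paper's and is correct: instead of the paper's double-truncation device, you rewrite the nonlinear PDE as a linear one with coefficient $c=\gamma(a+b)\tilde v$, apply Feynman--Kac to any bounded classical solution $\tilde v$ to see that it is a fixed point of $F_1$ on $C_b^-$, and invoke the contraction. That works because Feynman--Kac as a \emph{representation} of a given bounded $C^{1,2}$ solution only needs It\^o's formula plus localization, not regularity of $c$ beyond boundedness and continuity.

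The existence step, however, has a genuine gap. You start from the fixed point $v\in C_b^-$ of $F_1$ obtained by Banach's theorem and try to upgrade its regularity by applying the result of \cite{becherer2005classical} to the \emph{linear} auxiliary PDE with frozen data $g=v$, i.e. with source $(t,l)\mapsto \gamma(a(l)+b(l))\,\big(\psi(v(t,l))\big)^2-\phi(l)$. But Proposition 2.3 of \cite{becherer2005classical} (see condition 5 in the paper's checklist) requires this source to be uniformly H\"older-continuous in $(t,l)$ on each $[0,T]\times\bar D_n$. At this stage you only know $v$ is bounded and continuous; H\"older continuity of $v$ in $(t,l)$ is not available from the contraction-mapping argument, so the hypothesis cannot be verified, and the claimed classical solution of the linear PDE in $C_b^{1,2}$ cannot be produced this way. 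The circularity is precisely what the paper's argument avoids: it applies \cite{becherer2005classical} directly to the \emph{semilinear} truncated PDE \eqref{loc_pde}, where the relevant source is $\iota(t,l,v)=\gamma(a(l)+b(l))\,\psi(v)^2-\phi(l)$, a function of $(t,l,v)$ whose H\"older continuity follows from the local Lipschitz continuity of $a,b,\phi$ and the boundedness of $\psi$ (the computation \eqref{lip_iota}); the proposition then directly returns a $C_b^{1,2}$ solution coinciding with the $F_2$-fixed point. After that, the removal of the truncation and the value-range bound in $[-\bar A-\bar\phi\,T,0)$ proceed as you describe. To repair your argument you would either have to apply \cite{becherer2005classical} to the semilinear equation as the paper does, or separately establish (e.g., by Schauder-type estimates for the non-degenerate diffusion) that $F_2$ maps bounded continuous functions to locally H\"older continuous ones, so that the fixed point enjoys the regularity needed for the linear bootstrap.
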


\begin{proof}
We start to look at the PDE \eqref{non_lin_pde} with a regularization:
\begin{equation}
        \frac{\partial h_2}{\partial t}+\mathcal{L}h_2(t,l)+\gamma\big(a(l)+b(l)\big)\,\psi(h_2(t,l))^2-\phi(l)=0, \quad \text{such that}\;\; h_2(T, l)=-A(l).
    \label{loc_pde}
\end{equation}
In order to apply the proposition 2.3 from \cite{becherer2005classical}, for convenience we set $\iota(t,l,v)=\gamma\, (a(l)+b(l))\,\psi(v)^2-\phi(l)$ and check the following set of conditions:

\begin{itemize}
    \item [1.] functions A and $\iota$ are continuous and bounded, with $\iota$ being Lipschitz in $v$ uniformly in $t$ and $l$;\\
    \vspace{-0.2cm}

    \item [2.] there exists a sequence $(D_n)_{n\in\mathbb{N}}$ of bounded domains with closure $\bar{D}_n \subseteq \mathbb{R}^d$ such that $\bigcup^\infty_
    {n=1}D_n = \mathbb{R}^d$ and each $D_n$ has a $C^2$-boundary;\\
    \vspace{-0.2cm}
    
    \item [3.] the functions $\Gamma$ and $\Sigma\,\Sigma^{tr}$ are uniformly Lipschitz-continuous on $[0,T] \times\bar{D}_n$;\\
        \vspace{-0.2cm}
    
    \item [4.] $\det
    (\Sigma(t,l)\,\Sigma^{tr}(t,l))\neq0$ for all $(t,l)\in[0,T] \times \mathbb{R}^d$;\\
        \vspace{-0.2cm}

    \item [5.] $(t, l, v)\mapsto \iota(t, l, v)$ is uniformly H\"{o}lder-continuous on $[0,T]\times \bar{D}_n\times \mathbb{R}$.
\end{itemize}

\noindent The first two conditions can be verified directly based on the Assumption \ref{lip_ord_flow}, while the third and fourth ones follow from the Assumption \ref{sde_assumption}. For any $(t, l, v), (t', l', v')\in [0,T]\times \bar{D}_n\times \mathbb{R}$, note that
\begin{equation}
\begin{aligned}
    |\iota(t', l', v')-\iota(t, l, v)|&\leq\gamma\,(a(l')+b(l'))\,|\psi(v')^2-\psi(v)^2|\\
    &\hspace{1.8cm}+\gamma\,\psi(v)^2\,|a(l')+b(l')-a(l)-b(l)|+|\phi(l')-\phi(l)|\\
    &\leq2\,\gamma\,\xi\, (\bar{a}+\bar{b})\,|v'-v|+2\,\gamma\,\xi^2\,C_{n}\,|l'-l|+C_n\,|l'-l|,
    \label{lip_iota}
\end{aligned}
\end{equation}
where $C_n$ denotes the Lipschitz coefficient of $a$, $b$ and $\phi$ in the domain $\bar{D}_n$ using the Assumption \ref{lip_ord_flow}. Equation \eqref{lip_iota} shows that $\iota$ is Lipschitz-continuous on $[0,T]\times \bar{D_n}\times \mathbb{R}$, and hence uniformly H\"{o}lder-continuous due to the boundedness. Consequently, the Proposition 2.3 from \cite{becherer2005classical} tells us that the `regular' PDE \eqref{loc_pde} has a unique classical solution in $C_b^{1,2}$, which is given by the fixed point $\hat{v}$ of the mapping $F_2$. Since $F_1$ and $F_2$ share the same fixed point, $\hat{v}$ is also the fixed point of $F_1$ and we already know that $\hat{v}$ is bounded by $\bar{A}+\bar{\phi}\,T$. Therefore, if $\xi$ is chosen to be larger than $\bar{A}+\bar{\phi}\,T$, it follows that 
\begin{equation}
\begin{aligned}
        \frac{\partial \hat{v}}{\partial t}+\mathcal{L}\hat{v}(t,l)&+\gamma\big(a(l)+b(l)\big)\,\psi(\hat{v}(t,l))^2-\phi(l)\\
        &=\frac{\partial \hat{v}}{\partial t}+\mathcal{L}\hat{v}(t,l)+\gamma\big(a(l)+b(l)\big)\,\hat{v}(t,l)^2-\phi(l)=0,
    \nonumber
\end{aligned}
\end{equation}
establishing the existence of solutions.

For the uniqueness, suppose that $u\in C_b^{1,2}$ is another solution to the PDE \eqref{non_lin_pde}. Consider another truncation function $\tilde{\psi}$ with the parameter 
$\tilde{\xi}$ satisfying
\begin{equation}
    \tilde{\xi}\;\geq\;(\bar{A}+\bar{\phi}\,T)\;\vee\sup_{(t, l)\in[0,T]\times \mathbb{R}^d}|u(t, l)|.
    \nonumber
\end{equation}
Note that both $\hat{v}$ and $u$ solve the following `regular' PDE:
\begin{equation}
        \frac{\partial h}{\partial t}+\mathcal{L}h(t,l)+\gamma\big(a(l)+b(l)\big)\,\tilde{\psi}(h(t,l))^2-\phi(l)=0, \quad \text{such that}\;\; h(T, l)=-A(l).
    \nonumber
\end{equation}
which has a unique solution according to previous discussions. Hence, $\hat{v}\equiv u$ and the uniqueness is obtained. 
\end{proof}

\vspace{0.2cm}

\begin{remark}
\label{unbound_order_flow}
\noindent Our method can be further generalized to unbounded functions $a$ and $b$. If $a, b$ satisfy proper growth conditions, consider equations
\begin{equation*}
    \frac{\partial h_n}{\partial t}+\mathcal{L}h_n(t,l)+\gamma\big(a(l)\wedge n+b(l)\wedge n\big)\,h_n(t,l)^2-\phi(l)=0, \quad \text{such that}\;\; h_n(T, l)=-A(l),
\end{equation*}
for $n\in \mathbb{N}$. A sequence of solutions $(h_n)_{n\in\mathbb{N}}$, being uniformly bounded, is guaranteed by the Theorem \ref{pde_thm} and is non-decreasing by the comparison result. If we denote by $g$ as the limit of the sequence, it turns out that $g$ satisfies the conditional representation
\begin{equation*}
    g(t,l)=\mathbb{E}\Big[-A+\int_t^T \gamma (a_s+b_s)\,\big(\psi(g(s,L_s^{t,l}))\big)^2\,ds-\int_t^T \phi_s\,ds\Big]
\end{equation*}
for a truncation function $\psi$ with the parameter $\bar{A}+\bar{\phi}\,T$. Then, one can use the proof of \cite{becherer2005classical} to show that $g\in C^{1,2}$ solves the PDE. We do not cover it here, because the boundedness of order flows guarantees the square integrability of the inventory (and thus  \eqref{martingale_price}). But, the idea will be illustrated in the later section through some other equations.
\end{remark}

\begin{example}
\label{section1_example}
We conclude this section with a particular example of \eqref{non_lin_pde} that has an explicit solution. Some economic intuition is also helpful in a later proof. First, the boundedness of $\partial h_2/\partial l$ needs to be derived. For this purpose, several additional conditions are imposed:
\begin{itemize}
    \item each element in the matrix-valued function $\Sigma(t,l)$ is a bounded function;\\
    \vspace{-0.2cm}

    \item $\phi\equiv0$, i.e., the agent is risk neutral;\\
    \vspace{-0.2cm}
    
    \item functions $a, b$, and $A$ are all uniformly Lipschitz, the Lipschitz coefficient of which is denoted by $C_{\text{Lip}}$.
\end{itemize}
Define the function $g:[0,T]\times\mathbb{R}^d\to \mathbb{R}$ as
\begin{equation*}
    g(t,l)= \frac{h_2(t,l+\epsilon\,e^i)-h_2(t,l)}{\epsilon},
\end{equation*}
where $\epsilon>0$ is a constant and $e^i$ is the unit vector of the $i$-coordinate in $\mathbb{R}^d$ for $i\in\{1,\dots, d\}$. Applying a similar transform to the equation \eqref{non_lin_pde}, one can observe that $g$ satisfies the following linear PDE:
\begin{equation}
\begin{aligned}
    \frac{\partial g}{\partial t}+\mathcal{L}g(t,l)&+\gamma\big(a(l+\epsilon\,e^i)+b(l+\epsilon\,e^i)\big)\,\big(h_2(t,l+\epsilon\,e^i)+h_2(t,l)\big)\,g(t,l)\\
    &+\big(\tilde{a}(l)+\tilde{b}(l)\big)\,h_2(t,l)=0,
\end{aligned}
\label{linear_pde_diff}
\end{equation}
such that $g(T,l)=-\tilde{A}(l)$, where we have defined 
\begin{equation*}
    \tilde{\theta}(l)=\frac{\theta(t,l+\epsilon\,e^i)-\theta(t,l)}{\epsilon}
\end{equation*}
for $\theta\in\{a, b, A\}$. To verify the term $(a(l+\epsilon\,e^i)+b(l+\epsilon\,e^i))\,(h_2(t,l+\epsilon\,e^i)+h_2(t,l))$ is uniformly H\"{o}lder-continuous, it suffices to notice that:
\begin{itemize}
    \item functions $a$ and $b$ are uniformly Lipschitz;
    \vspace{0.2cm}

    \item function $h_2(t,l)$ is uniformly Lipschitz on $[0,T]\times D$ for any compact set $D\subset \mathbb{R}^d$, since partial derivatives of $h_2$ are continuous;
    \vspace{0.2cm}

    \item functions $a, b$, and $h_2$ are uniformly bounded.
\end{itemize}
While the other conditions can be checked in a similar way, one can again learn from \cite{becherer2005classical} that the PDE \eqref{linear_pde_diff} accepts a unique solution $g\in C_b^{1,2}$. The solution $g$ possesses the representation
\begin{equation*}
\begin{aligned}
    &g(t,l)=\mathbb{E}\Big[-\Tilde{A}(L_T^{t,l})\,e^{\int_t^T \gamma(a(L_s^{t,l}+\epsilon\,e^i)+b(L_s^{t,l}+\epsilon\,e^i))\,(h_2(s, L_s^{t,l}+\epsilon\,e^i)+h_2(s, L_s^{t,l}))\,ds}\\
    &\hspace{1cm} +\int_t^T (\tilde{a}(L_s^{t,l})+\tilde{b}(L_s^{t,l}))\,h_2(s,L_s^{t,l})\\
    &\hspace{4cm}\cdot e^{\int_t^s \gamma(a(L_u^{t,l}+\epsilon\,e^i)+b(L_u^{t,l}+\epsilon\,e^i))\,(h_2(u,L_u^{t,l}+\epsilon\,e^i)+h_2(u,L_u^{t,l}))\,du}\Big],
\end{aligned}
\end{equation*}
which further implies
\begin{equation}
    |g(t,l)|\leq C_{\text{Lip}}\,e^{2\gamma T(\bar{a}+\bar{b})(\bar{A}+\bar{\phi}\,T)}+2\,T\,C_{\text{Lip}}(\bar{A}+\bar{\phi}\,T)\,e^{2\gamma T(\bar{a}+\bar{b})(\bar{A}+\bar{\phi}\,T)}
    \label{upper_bound_linear_pde}
\end{equation}
for all $(t,l)\in[0,T]\times\mathbb{R}^d$. We emphasize that the right hand side of \eqref{upper_bound_linear_pde} is independent of $\epsilon$. It follows $\partial h_2/\partial l^i$ is uniformly bounded and the same is true for all elements in the vector $\partial h_2/\partial l$.

Let $Y_t:=h_2(t,L_t)$, $Z_t:=(\partial h_2/\partial l)(t,L_t)\,\Sigma(t,L_t)$, and then apply It\^o's formula to obtain
\begin{equation*}
    Y_t=-A+\int_t^T \gamma\,(a_s+b_s)\,Y_s^2\,ds-\int_t^TZ_s\,dW_s.
\end{equation*}
Because $Y_t<0$, to observe the following we use It\^o's formula again:
\begin{equation}
\begin{aligned}
    d(Y_t^{-1})&=-A^{-1}-\int_t^T \gamma\,(a_s+b_s)\,ds-\int_t^T\frac{Z_s}{Y_s^2}\big(dW_s-\frac{Z_s}{Y_s}\,ds\big)\\
    &=-A^{-1}-\int_t^T \gamma\,(a_s+b_s)\,ds-\int_t^T\frac{Z_s}{Y_s^2}\,d\tilde{W}_s,
\end{aligned}
\label{chang_measur}
\end{equation}
where $\tilde{W}$ is a standard Brownian motion with respect to some probability measure $\mathbb{Q}$, guaranteed by the Girsanov theorem. Indeed, the Girsanov transform can be applied since $Z_t$ is bounded and $Y_t<-A\,\exp{(-\gamma\,(\bar{a}+\bar{b})\,A\,T)}$ by the mapping $F_1$. An explicit form of $h_2$ is obtained via taking the conditional expectation of \eqref{chang_measur}:
\begin{equation*}
    h_2(t,L_t)=Y_t=\bigg\{\,\mathbb{E}_t^\mathbb{Q} \Big[-A^{-1}-\int_t^T \gamma\,(a_s+b_s)\,ds\Big]\,\bigg\}^{-1}.
\end{equation*}
The first-order approximation, mentioned in \cite{barger2019optimal} and \cite{fouque2022optimal}, turns out to be the rectangle approximation of the integral $\int_t^T (a_s+b_s)\,ds$. In our context, the integral $\int_t^T (a_s+b_s)\,ds$ stands for the total volume of future order flows. This implies that $h_2$ decreases as the total volume declines. This can be naturally explained by the fact that when expected future market activity is low, it becomes more challenging for the market maker to clear their inventory. Therefore, the market maker is willing to quote a more favourable price to adjust her position.

While $h_2$ is the coefficient of the second-order term in $H$, the coefficient of the first-order term $h_1$---by the Feynman–Kac Formula---can be represented by
\begin{equation*}
    h_1(t, l )=\zeta\,\mathbb{E}_{t, l}\Big[\int_t^T (b_s-a_s)\cdot h_2(s, L_s)\,e^{\gamma\int_t^s (a_u+b_u)\,h_2(u, L_u)\,du}\,ds\Big],
\end{equation*}
which estimates a weighted sum of future order imbalances. Since
\begin{equation*}
    \delta_t^{a,*} = \frac{\zeta}{2\gamma} + \frac{1}{2}\frac{\partial H}{\partial q} = \frac{\zeta}{2\gamma} + \frac{1}{2} h_1(t, l) + h_2(t, l)\, q,
\end{equation*}
we observe that the quote depends on a weighted sum of future order imbalances, adjusted by the current inventory level. 

\end{example}

\section{Non-Markovian Order Flows: General Intensity Functions} \label{section_3}
\noindent In this section, we extend the market making problem in two directions. First, the order flows and penalty parameters are not necessarily Markovian.

\begin{assumption}
Let $a$, $b\in\mathbb{H}^2$ be positive processes and $\phi\in\mathbb{H}^2$ be a non-negative process,  such that they are bounded by constants $\bar{a}$, $\bar{b},$ $\bar{\phi}>0$ respectively. Let $A\in L^2(\Omega,\mathcal{F}_T)$ be a non-negative random variable, bounded by $\bar{A}>0$.
\end{assumption}

\noindent On the other hand, more general intensity functions are considered. We borrow the following notations, definitions, and associated results from \cite{gueant2017optimal}.

\begin{assumption}[\cite{gueant2017optimal}]
\label{general_inten}
A function $\Lambda:\mathbb{R}\to\mathbb{R_+}$ belongs to the class of intensity functions $\boldsymbol{\Lambda}$ if:
\begin{itemize}
    \item[1.] $\Lambda$ is twice continuously differentiable;\\
    \vspace{-0.2cm}
    
    \item[2.] $\Lambda$ is strictly decreasing and hence $\Lambda'(x)<0$ for any $x\in\mathbb{R}$;\\
    \vspace{-0.2cm}
    
    \item[3.] $\lim_{x\to\infty}\Lambda(x)=0\,$ and $\;\sup_{x\in\mathbb{R}}\frac{\Lambda(x)\,\Lambda''(x)}{(\Lambda'(x))^2}<2$.
\end{itemize}
\label{inten_assu}
\end{assumption}

\begin{lemma}[\cite{gueant2017optimal}]
\label{inten_fun}
For any $\Lambda\in\boldsymbol{\Lambda}$, define the function $\mathcal{W}:\mathbb{R}\to\mathbb{R}$ as $\mathcal{W}(p)=\sup_{\delta\in\mathbb{R}}\Lambda(\delta)\,(\delta-p)$. Then, the following holds: 
\begin{itemize}
    \item[1.] $\mathcal{W}$ is a decreasing function of class $C^2$;\\
    \vspace{-0.2cm}
    
    \item[2.] The supremum in the definition of $\mathcal{W}$ is attained at a unique $\delta^*(p)$ characterized by 
    \begin{equation}
        \delta^*(p)=\Lambda^{-1}\big(-\mathcal{W}'(p)\big),
        \nonumber
    \end{equation}
    where $\Lambda^{-1}$ denotes the inverse function of $\Lambda$;\\
    \vspace{-0.2cm}
    
    \item[3.] The function $p\mapsto\delta^*(p)$ belongs to $C^1$ and is increasing. Its derivative reads
    \begin{equation}
        (\delta^{*})'(p)=\Big[2-\frac{\Lambda(\delta^*(p))\,\Lambda''(\delta^*(p))}{\Lambda'(\delta^*(p))^2}\Big]^{-1}>0.
        \nonumber
    \end{equation}
\end{itemize}
\end{lemma}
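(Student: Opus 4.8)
**Proof proposal for Lemma 2.10 (properties of $\mathcal{W}$ and $\delta^*$).**

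The plan is to prove each of the three items essentially by elementary calculus applied to the optimization defining $\mathcal{W}(p) = \sup_{\delta \in \mathbb{R}} \Lambda(\delta)(\delta - p)$, with the structural conditions in Assumption \ref{inten_assu} doing all the real work. First I would establish that the supremum is attained at a unique point, which is the heart of item 2. Fix $p$ and set $F(\delta, p) := \Lambda(\delta)(\delta - p)$. Differentiating in $\delta$ gives $\partial_\delta F = \Lambda(\delta) + \Lambda'(\delta)(\delta - p)$. Since $\Lambda > 0$, $\Lambda$ decreasing, and $\Lambda(x) \to 0$ as $x \to \infty$, one checks that $F(\delta, p) \to -\infty$ as $\delta \to +\infty$ (the factor $\Lambda(\delta)$ kills the linear growth) and $F(\delta, p) \to -\infty$ as $\delta \to -\infty$ (since $\Lambda(\delta) \geq \Lambda(0) > 0$ stays bounded below while $\delta - p \to -\infty$); hence a maximizer exists. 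At any interior critical point the first-order condition reads $\Lambda(\delta^*) = -\Lambda'(\delta^*)(\delta^* - p)$, i.e.\ $\delta^* - p = -\Lambda(\delta^*)/\Lambda'(\delta^*)$. To get uniqueness I would compute the second derivative $\partial^2_{\delta\delta} F = 2\Lambda'(\delta) + \Lambda''(\delta)(\delta - p)$ and evaluate it at a critical point: substituting $\delta - p = -\Lambda(\delta)/\Lambda'(\delta)$ yields
\begin{equation*}
\partial^2_{\delta\delta} F\big|_{\text{crit}} = 2\Lambda'(\delta) - \frac{\Lambda(\delta)\,\Lambda''(\delta)}{\Lambda'(\delta)} = \frac{\Lambda'(\delta)}{1}\left(2 - \frac{\Lambda(\delta)\,\Lambda''(\delta)}{\Lambda'(\delta)^2}\right) < 0,
\end{equation*}
where negativity follows because $\Lambda'(\delta) < 0$ and the bracket is positive by condition 3. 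Thus every critical point is a strict local max, which forces the critical point — hence the global maximizer — to be unique; call it $\delta^*(p)$.

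Next I would derive the characterization $\delta^*(p) = \Lambda^{-1}(-\mathcal{W}'(p))$. By the envelope theorem (Danskin), since the maximizer is unique and $F$ is $C^2$, $\mathcal{W}$ is differentiable with $\mathcal{W}'(p) = \partial_p F(\delta^*(p), p) = -\Lambda(\delta^*(p))$. Since $\Lambda$ is strictly decreasing it is a bijection onto its range and invertible there, so $\delta^*(p) = \Lambda^{-1}(-\mathcal{W}'(p))$, giving item 2. For item 1, $\mathcal{W}$ is decreasing because for $p_1 < p_2$ and any fixed $\delta$, $\Lambda(\delta)(\delta - p_1) \geq \Lambda(\delta)(\delta - p_2)$ (as $\Lambda(\delta) > 0$), so the supremum over $\delta$ is monotone; the $C^2$ regularity of $\mathcal{W}$ I would obtain by the implicit function theorem applied to the first-order condition $G(\delta, p) := \Lambda(\delta) + \Lambda'(\delta)(\delta - p) = 0$, whose $\delta$-derivative is exactly $\partial^2_{\delta\delta}F \neq 0$ at the critical point; this shows $p \mapsto \delta^*(p)$ is $C^1$, hence $\mathcal{W}'(p) = -\Lambda(\delta^*(p))$ is $C^1$, so $\mathcal{W} \in C^2$.

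Finally, for item 3 I would differentiate the identity $G(\delta^*(p), p) = 0$ implicitly in $p$. We have $\partial_\delta G = 2\Lambda'(\delta^*) + \Lambda''(\delta^*)(\delta^* - p)$ and $\partial_p G = -\Lambda'(\delta^*)$, so
\begin{equation*}
(\delta^*)'(p) = -\frac{\partial_p G}{\partial_\delta G} = \frac{\Lambda'(\delta^*)}{2\Lambda'(\delta^*) + \Lambda''(\delta^*)(\delta^* - p)}.
\end{equation*}
Substituting $\delta^* - p = -\Lambda(\delta^*)/\Lambda'(\delta^*)$ into the denominator, dividing numerator and denominator by $\Lambda'(\delta^*)$, gives
\begin{equation*}
(\delta^*)'(p) = \left[2 - \frac{\Lambda(\delta^*(p))\,\Lambda''(\delta^*(p))}{\Lambda'(\delta^*(p))^2}\right]^{-1},
\end{equation*}
which is strictly positive by condition 3 and bounded (above and below) uniformly in $p$ by the same supremum condition; in particular $\delta^*$ is strictly increasing and $C^1$.

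The routine obstacles are minor: justifying the envelope formula rigorously (handled by uniqueness of the maximizer plus $C^1$ dependence from the IFT) and confirming the coercivity of $F$ in $\delta$ at $\pm\infty$. The only genuinely delicate point is the uniqueness of the maximizer — one must resist the temptation to claim global concavity of $\delta \mapsto F(\delta,p)$ (which is false in general) and instead use the sharper observation that the sign condition 3 forces negativity of $\partial^2_{\delta\delta}F$ \emph{only at critical points}, which is exactly enough to rule out multiple critical points. I expect this to be the step requiring the most care.
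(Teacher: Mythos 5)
The paper does not supply a proof of this lemma: it is stated with the attribution ``[\cite{gueant2017optimal}]'' and the proof is delegated to Gu\'eant's original paper, so there is no in-paper proof to compare against. On its own merits, your overall scheme---first-order condition plus envelope theorem plus implicit function theorem---is the right one and items 2 and 3 are handled correctly, but there is a genuine error at the very first step, the existence of the maximiser.

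You assert that $F(\delta,p)=\Lambda(\delta)(\delta-p)\to -\infty$ as $\delta\to+\infty$ because ``$\Lambda$ kills the linear growth.'' This is false: for every $\delta>p$ one has $F(\delta,p)>0$, so $F$ cannot tend to $-\infty$ on the right. If $\Lambda$ kills the linear growth, the product tends to $0$, not $-\infty$, and under the stated hypotheses it is not even automatic that $\Lambda(\delta)\,\delta\to 0$; the rate of decay of $\Lambda$ is controlled only indirectly through the bound $\sup_x \Lambda\Lambda''/(\Lambda')^2<2$. The clean way to obtain existence (and uniqueness in one stroke) is to work directly with the first-order condition. Define $\chi(\delta):=\delta+\Lambda(\delta)/\Lambda'(\delta)$, so that a critical point of $F(\cdot,p)$ is exactly a solution of $\chi(\delta)=p$. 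A direct computation gives
\begin{equation*}
\chi'(\delta)=2-\frac{\Lambda(\delta)\,\Lambda''(\delta)}{\Lambda'(\delta)^2},
\end{equation*}
and the \emph{uniform} supremum condition in Assumption~\ref{inten_assu} yields $\chi'\geq 2-\sup_x \Lambda\Lambda''/(\Lambda')^2=:\eta>0$. Hence $\chi$ is a strictly increasing bijection from $\mathbb{R}$ onto $\mathbb{R}$, so $\chi(\delta)=p$ has a unique solution $\delta^*(p)$ for every $p$; coupling this with the sign of $F'$ across $\delta^*(p)$ shows that $\delta^*(p)$ is the unique global maximiser, and $(\delta^*)'(p)=1/\chi'(\delta^*(p))$ recovers your formula in item 3. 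This replaces both your (flawed) coercivity step and your separate second-order uniqueness step. Note also that you only invoke the pointwise positivity of $2-\Lambda\Lambda''/(\Lambda')^2$ in the uniqueness argument, whereas the uniform version (the supremum being bounded away from $2$) is precisely what you need to get surjectivity of $\chi$, and without it the supremum defining $\mathcal{W}(p)$ may fail to be attained.
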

\vspace{0.1cm}

\begin{remark}
A common choice for the function $\Lambda$ is the exponential decay form, $\Lambda(x) = \exp(-\gamma \, x)$, where $\gamma > 0$. For its derivation, we refer readers to \cite{avellaneda2008high}. To estimate the parameter, we can collect data that records the number of orders executed at various distances from the reference price. The parameter $\gamma$ can then be determined using suitable fitting procedures.
\end{remark}

\noindent Denote by $(\Lambda^a,\Lambda^b)\in\boldsymbol{\Lambda}\times\boldsymbol{\Lambda}$ the intensity functions of the ask and bid sides. Additionally, a strategy $\boldsymbol{\delta}:=(\delta^a,\delta^b)$ is admissible if both two entries are bounded by $\xi$ for some large $\xi>0$, i.e., the admissible space is defined as
\begin{equation*}
    \mathbb{A}:=\{\delta\in \mathbb{H}^2: |\delta_t|\leq \xi \text{ for all } t\in[0,T]\,\}.
\end{equation*}    
The inventory and cash processes of the market maker are then given by
\begin{gather*}
        Q_t = q_0-\int_0^t a_u\Lambda^a(\delta_u^a)\,du+\int_0^t b_u\Lambda^b(\delta_u^b)\,du,\\
        \nonumber
    X_t = x_0+\int_0^t a_u\Lambda^a(\delta_u^a)\,(S_u+\delta_u^a)\,du-\int_0^t b_u\Lambda^b(\delta_u^b)\,(S_u-\delta_u^b)\,du.
    \nonumber
\end{gather*}
The agent intends to maximise the same objective functional:
\begin{equation*}
\begin{aligned}
    &\mathbb{E}\big[X_T+S_T\,Q_T-\int_0^T \phi_t\,(Q_t)^2\,dt-A\,(Q_T)^2 \big]\\
   &=\mathbb{E}\Big[\int_0^T\delta_t^a\,a_t\,\Lambda^a(\delta_t^a)\,dt+\int_0^T\delta_t^b\,b_t\,\Lambda^b(\delta_t^b)\,dt-\int_0^T \phi_t\,(Q_t)^2\,dt-A\,(Q_T)^2 \Big]
\end{aligned}
\end{equation*}
by controlling $\boldsymbol{\delta}\in\mathbb{A}\times\mathbb{A}$. To apply the Pontryagin maximum principle, the Hamiltonian of the agent with respect to the objective functional reads
\begin{equation}
    \mathscr{H}(t,Q_t,Y_t,\boldsymbol{\delta}_t)=\big[b_t\,\Lambda^b(\delta_t^b)-a_t\,\Lambda^a(\delta_t^a)\big]\,Y_t+b_t\,\delta_t^b\Lambda^b(\delta_t^b)+a_t\,\delta_t^a\Lambda^a(\delta_t^a)-\phi\,Q_t^2.
    \nonumber
\end{equation}
While the function $\mathscr{H}$ is concave in the state variable $Q$, the concavity with respect to the control $\boldsymbol{\delta}$ is not guaranteed, which is required by the usual stochastic maximum principle (for example \cite{carmona2016lectures}). However, the separation between the state variable and control enables us to still apply this principle, the proof of which is placed in the appendix. Denote by $(\delta^{a*},\delta^{b*})$ the maximisers associated with the intensity functions $(\Lambda^a, \Lambda^b)$ according to Lemma \ref{inten_fun}, and further define 
\begin{equation*}
    \tilde{\delta}^{i*}(p):=\delta^{i*}(p)\wedge\xi\vee(-\xi)
\end{equation*}
for $i\in\{a,b\}$. Then the optimal feedback controls on the ask and bid sides are given by 
\begin{equation}
    \tilde{\delta}^{a*}(Y_t) \quad\text{and}\quad \tilde{\delta}^{b*}(-Y_t).
    \nonumber
\end{equation}
Indeed, on the ask side, we look for the maximizer of 
\begin{equation*}
    \sup_{|\delta_t^a|\leq \xi}g(\delta_t^a):=\sup_{|\delta_t^a|\leq \xi}\Lambda^a(\delta_t^a)\cdot(\delta_t^a-Y_t).
\end{equation*}
Without the constraint $|\delta_t^a|\leq \xi$, the maximiser simply reads $\delta^{a*}(Y_t)$ according to Lemma \ref{inten_fun}. However, being aware of
\begin{equation*}
    g'(\delta_t^a)=\Lambda^{a'}(\delta_t^a)\,\big[\delta_t^a+\frac{\Lambda^a(\delta_t^a)}{\Lambda^{a'}(\delta_t^a)}-Y_t\big]
\end{equation*}
and the fact that $\delta_t^a+\Lambda^a(\delta_t^a)/\Lambda^{a'}(\delta_t^a)$ is an increasing function because
\begin{equation*}
    \big[\delta_t^a+\frac{\Lambda^a(\delta_t^a)}{\Lambda^{a'}(\delta_t^a)}\big]'=2-\frac{\Lambda^a(\delta_t^a)\,\Lambda^{a''}(\delta_t^a)}{\Lambda^{a'}(\delta_t^a)^2}>0,
\end{equation*}
one can see that the function $g$ first increases then decreases. Hence, the maximiser is $\tilde{\delta}^{a*}(Y_t)$ under the constraint stated in $\mathbb{A}$. The bid side can be derived in the same way. Combined with the stochastic maximum principle, the above discussion yields the next result.

\begin{theorem}
\label{stochast_max_principle}
Let $\boldsymbol{\delta}\in\mathbb{A}\times\mathbb{A}$ be an admissible control, 
$Q = Q^{\boldsymbol{\delta}}$ be the corresponding controlled inventory, and $(Y, Z)$ be the adjoint processes. Then $\boldsymbol{\delta}$ is an optimal control if and only if it holds $\mathbb{P}$-a.s. that
\begin{equation}
    \mathscr{H}(t,Q_t,Y_t,\boldsymbol{\delta}_t)=\sup_{\boldsymbol{\alpha}\in [-\xi, \xi]^2}\mathscr{H}(t,Q_t,Y_t,\boldsymbol{\alpha}),  \qquad \text{a.e. in } t\in[0,T],
    \nonumber
\end{equation}
or equivalently
\begin{equation}
    \delta^a_t=\tilde{\delta}^{a*}(Y_t), \quad\text{and}\quad \delta^b_t=\tilde{\delta}^{b*}(-Y_t),  \qquad \text{a.e. in } t\in[0,T].
    \nonumber
\end{equation}
Further, the optimal inventory $Q$ together with the adjoint processes $(Y,Z)$ solves the FBSDE 
\begin{equation}
\left\{
\begin{aligned}
\;& dQ_t  = -a_t\,\Lambda^a\big(\tilde{\delta}^{a*}(Y_t)\big)dt+b_t\,\Lambda^b\big(\tilde{\delta}^{b*}(-Y_t)\big)dt, \\
& dY_t=2\phi_t\,Q_t\,dt+Z_t\,dW_t,\\
& Q_0=q_0,\quad Y_T=-2A\,Q_T.
\label{Lip_den_FBSDE}
\end{aligned}
\right.
\end{equation}
\end{theorem}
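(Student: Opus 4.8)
The plan is to derive the statement from the version of the stochastic maximum principle collected in Appendix~\ref{section_5}, combined with the explicit pointwise maximisation already carried out in the paragraph preceding the theorem. First, fix an admissible control $\boldsymbol{\delta}=(\delta^a,\delta^b)\in\mathbb{A}\times\mathbb{A}$. Since $\Lambda^a,\Lambda^b$ are continuous and $\delta^a_t,\delta^b_t$ take values in the compact set $[-\xi,\xi]$, and $a,b$ are bounded, the drift of the controlled inventory is bounded; hence $Q=Q^{\boldsymbol{\delta}}$ is a well-defined bounded continuous $\mathbb{F}$-adapted process. Because $\phi$ is bounded and $Y_T=-2AQ_T\in L^2(\Omega,\mathcal{F}_T)$, the linear BSDE $dY_t=2\phi_t Q_t\,dt+Z_t\,dW_t$ has a unique solution $(Y,Z)\in\mathbb{S}^2\times\mathbb{H}^2$; this is the adjoint pair attached to $\boldsymbol{\delta}$. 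One checks that $\mathscr{H}$ in the statement is precisely the Hamiltonian of the control problem (the forward equation has no diffusion term, so $\mathscr{H}$ carries no dependence on $Z$), that its $q$-derivative equals $-2\phi_t q$, and that the $q$-derivative of the terminal reward $-Aq^2$ equals $-2Aq$, so the BSDE above is exactly the adjoint equation with the correct terminal condition.

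The necessary implication, that optimality forces $\mathscr{H}(t,Q_t,Y_t,\boldsymbol{\delta}_t)=\sup_{\boldsymbol{\alpha}\in[-\xi,\xi]^2}\mathscr{H}(t,Q_t,Y_t,\boldsymbol{\alpha})$ a.e., is the necessary part of the maximum principle in Appendix~\ref{section_5}. For the converse, the subtlety — and the reason a tailored version of the principle is needed — is that $\mathscr{H}$ is not jointly concave in $(q,\boldsymbol{\delta})$: the term $b_t\,\delta^b\Lambda^b(\delta^b)+a_t\,\delta^a\Lambda^a(\delta^a)+[b_t\Lambda^b(\delta^b)-a_t\Lambda^a(\delta^a)]\,Y_t$ need not be concave in $\boldsymbol{\delta}$. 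However, $q$ enters $\mathscr{H}$ only through the additively separated, strictly concave term $-\phi_t q^2$, so the maximised Hamiltonian $q\mapsto\sup_{\boldsymbol{\alpha}\in[-\xi,\xi]^2}\mathscr{H}(t,q,Y_t,\boldsymbol{\alpha})$ is concave in $q$. This Arrow-type condition, together with the Hamiltonian maximisation, yields optimality through the sufficient part proved in the appendix, which I would simply invoke here.

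It remains to make the maximiser explicit, which also produces the ``equivalently'' reformulation. The Hamiltonian decouples across the two sides, so maximising over $\boldsymbol{\alpha}\in[-\xi,\xi]^2$ reduces to maximising $\delta^a\mapsto\Lambda^a(\delta^a)(\delta^a-Y_t)$ and $\delta^b\mapsto\Lambda^b(\delta^b)(\delta^b+Y_t)$ over $[-\xi,\xi]$ separately. By Lemma~\ref{inten_fun} the unconstrained maximisers are the unique points $\delta^{a*}(Y_t)$ and $\delta^{b*}(-Y_t)$; by the monotonicity of $\delta\mapsto\delta+\Lambda^{a}(\delta)/\Lambda^{a'}(\delta)$ recorded just before the theorem, each objective is unimodal (first increasing, then decreasing), so the constrained maximiser is the truncation $\tilde{\delta}^{a*}(Y_t)$, resp.\ $\tilde{\delta}^{b*}(-Y_t)$. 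These are continuous functions of the adapted process $Y$, hence progressively measurable, and bounded by $\xi$, so the feedback controls lie in $\mathbb{A}$. Substituting $\delta^a_t=\tilde{\delta}^{a*}(Y_t)$ and $\delta^b_t=\tilde{\delta}^{b*}(-Y_t)$ into the inventory dynamics and retaining the adjoint equation produces exactly the coupled system \eqref{Lip_den_FBSDE}.

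The only genuine obstacle is the non-concavity of $\mathscr{H}$ in the control variable, which prevents a direct appeal to the classical sufficient stochastic maximum principle; it is handled by the separated-structure (Arrow-type) version in Appendix~\ref{section_5}. Everything else — solvability of the adjoint BSDE, the closed form of the maximiser via Lemma~\ref{inten_fun}, measurability and admissibility of the feedback, and assembling the FBSDE — is routine.
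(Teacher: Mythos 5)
Your proposal is correct and follows essentially the same route as the paper: the paper's own proof of this theorem is simply a pointer to the necessary and sufficient conditions established in Appendix~\ref{section_5}, combined with the pointwise maximisation of the Hamiltonian carried out in the paragraph preceding the theorem statement. You correctly identify the key subtlety (non-concavity of $\mathscr{H}$ in the control, circumvented because $q$ enters only through the separated concave term $-\phi_t q^2$, so the maximised Hamiltonian is still concave in $q$ --- exactly what the appendix's direct computation $\mathscr{H}(t,Q_t,Y_t,\boldsymbol{\alpha}_t)-\mathscr{H}(t,Q'_t,Y_t,\boldsymbol{\alpha}_t)+2\phi_t Q_t(Q_t-Q'_t)=\phi_t(Q_t-Q'_t)^2\ge 0$ exploits), and your unimodality argument giving the truncated maximiser matches the paper's pre-theorem discussion verbatim.
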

\begin{proof}
    See the \hyperref[section_5]{Appendix} for the stochastic maximum principle. 
\end{proof}

\noindent Equation \eqref{Lip_den_FBSDE} is a degenerate FBSDE with Lipschitz coefficients. To see this, note that $(\Lambda^a,\Lambda^b)$ have bounded derivatives in the admissible action space, and the same is true for $(\delta^{a*},\delta^{b*})$ since
\begin{equation}
        0<(\delta^{i*})'(p)=\Big[2-\frac{\Lambda^i(\delta^{i*}(p))\,\Lambda^{i''}(\delta^{i*}(p))}{\Lambda^{i'}(\delta^{i*}(p))^{2}}\Big]^{-1}\leq\Big[2-\sup_{u\in\mathbb{R}}\frac{\Lambda^i(u)\,\Lambda^{i''}(u)}{\Lambda^{i'}(u)^{2}}\Big]^{-1}<\infty,
    \nonumber
\end{equation}
for $i\in\{a,b\}$. The Lipschitz property of the forward equation of \eqref{Lip_den_FBSDE} then follows from the boundedness of these two derivatives and flows $a$, $b$. The remaining of section is devoted to the (global) well-posedness of the FBSDE \eqref{Lip_den_FBSDE}.

\begin{remark}
(1) The constant $\xi$ in the definition of $\mathbb{A}$ can be also interpreted as a regularizer for the equation \eqref{Lip_den_FBSDE} to make it Lipschitz. Later, we will see how it can be removed.\\
\indent (2) If one additionally imposes that $a$, $b$ and $\phi$ are bounded away from $0$, the well-posedness of \eqref{Lip_den_FBSDE} can be verified by the method of continuation (see \cite{peng1999fully}).
\end{remark}

We first introduce an auxiliary result on a linear degenerate FBSDE, the proof of which is based on the same technique used for \eqref{non_lin_pde}.

\begin{lemma}
\label{growth_lip_fbsde}
Let $(\mu_t)_{0\leq t\leq T}\in \mathbb{H}^2$ be a non-negative bounded process with the upper bound $\bar{\mu}>0$, and $\nu_T\in L^2(\Omega,\mathcal{F}_T)$ be a non-negative bounded random variable bounded by $\bar{\nu}>0$. The FBSDE
\begin{equation}
\left\{
\begin{aligned}
\;& dQ_t  = \mu_t\,Y_t\,dt, \\
& dY_t=2\phi_t\,Q_t\,dt+Z_t\,dW_t,\\
& Q_0=q_0,\quad Y_T=-\nu_T\,Q_T
\label{Lin_noncont_FBSDE}
\end{aligned}
\right.
\end{equation}
admits a unique solution in $\mathbb{S}^2\times\mathbb{S}^2\times\mathbb{H}^2$. In particular, the solution accepts the representation
\begin{equation}
    Y_t=P_t\cdot Q_t \text{\quad and \quad} Q_t=q_0\,\exp\big(\int_0^t\mu_s\,P_s\,ds\big),
\nonumber
\end{equation}
with $(P_t)_{0\leq t\leq T}\in\mathbb{H}^2$ being non-positive and bounded by $\bar{\nu}+2\bar{\phi}\,T$.
\end{lemma}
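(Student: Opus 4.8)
The plan is to build the solution via the ansatz $Y_t = P_t\,Q_t$, which reduces the coupled linear FBSDE to a single backward Riccati-type equation for $(P,Z^P)$ plus an explicit forward ODE for $Q$. First I would posit that $P$ satisfies, formally,
\begin{equation*}
    dP_t = \big(2\phi_t - \mu_t\,P_t^2\big)\,dt + Z^P_t\,dW_t, \qquad P_T = -\nu_T,
\end{equation*}
obtained by applying It\^o's formula to $P_tQ_t$ and matching drift and terminal data against \eqref{Lin_noncont_FBSDE}; indeed $d(P_tQ_t) = Q_t\,dP_t + P_t\,dQ_t = Q_t(2\phi_t - \mu_t P_t^2)\,dt + Q_tZ^P_t\,dW_t + \mu_t P_t^2 Q_t\,dt = 2\phi_t Q_t\,dt + Q_tZ^P_t\,dW_t$, which is exactly the dynamics of $Y_t$ with $Z_t = Q_tZ^P_t$. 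The quadratic term $-\mu_tP_t^2$ is the only obstruction to the standard Lipschitz BSDE theory, so the main work is to show this BSDE is well-posed with a \emph{bounded, non-positive} solution $P$, after which everything else is routine.

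The core step is therefore to establish existence, uniqueness and the a priori bound $-(\bar\nu + 2\bar\phi\,T) \le P_t \le 0$ for the Riccati BSDE. I would do this exactly as in the proof of Theorem \ref{pde_thm}: truncate the quadratic nonlinearity by replacing $P_t^2$ with $\psi(P_t)^2$ for the truncation function $\psi$ with a parameter $\Xi \ge \bar\nu + 2\bar\phi\,T$, which makes the generator globally Lipschitz in $P$ (since $|\psi(p)^2 - \psi(p')^2| \le 2\Xi|p-p'|$) so the truncated BSDE has a unique solution $(P,Z^P)\in\mathbb{S}^2\times\mathbb{H}^2$ by standard theory. Then I would prove the two-sided bound on the truncated solution by a comparison argument: comparing with the (sub/super)solution $0$ gives $P_t \le 0$ because the terminal value $-\nu_T \le 0$ and, along any solution staying $\le 0$, the generator $2\phi_t - \mu_t\psi(P_t)^2 \ge 0$ has the right sign; comparing with the deterministic lower bound one checks that $\underline P_t := -(\bar\nu + 2\bar\phi\,T)$ is a subsolution, since its (zero) dynamics dominate $2\phi_t - \mu_t\psi(\underline P_t)^2$ provided $2\bar\phi - \mu_t\psi(\underline P_t)^2$ ... in fact the cleanest route is to bound $|P_t| \le \bar\nu + 2\bar\phi\,T$ directly from the backward representation $P_t = \mathbb{E}_t[-\nu_T + \int_t^T (2\phi_s - \mu_s\psi(P_s)^2)\,ds]$ together with $0 \le \mu_s\psi(P_s)^2 \le \bar\mu\Xi^2$... so more carefully I would argue: once $P_t \le 0$ is known, $\mu_s\psi(P_s)^2 \ge 0$ forces $P_t = \mathbb{E}_t[-\nu_T] + \mathbb{E}_t[\int_t^T(2\phi_s - \mu_s\psi(P_s)^2)\,ds] \ge -\bar\nu - \mathbb{E}_t[\int_t^T \mu_s\psi(P_s)^2\,ds]$, and one then runs a short Gr\"onwall/bootstrap on $\sup_s\|P_s\|_\infty$ or, alternatively, observes that the bound $-(\bar\nu+2\bar\phi\,T)$ follows because dropping the non-negative term $\mu_s\psi(P_s)^2$ and the non-positive $-\mu_sP_s^2$ can only push $P$ toward $0$. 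Since the resulting $P$ satisfies $|P_t| \le \bar\nu + 2\bar\phi\,T \le \Xi$, the truncation is inactive, so $\psi(P_t) = P_t$ and $(P,Z^P)$ solves the original Riccati BSDE.

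Once $P$ is in hand, I would define $Q_t := q_0\exp\!\big(\int_0^t \mu_s P_s\,ds\big)$, which is a well-defined, continuous, bounded process because $\mu P$ is bounded; it solves the forward ODE $dQ_t = \mu_t P_t Q_t\,dt = \mu_t Y_t\,dt$ with $Y_t := P_t Q_t$. Setting $Z_t := Q_t Z^P_t$, the It\^o computation above shows $(Q,Y,Z)$ solves \eqref{Lin_noncont_FBSDE}, with $Q,Y \in \mathbb{S}^2$ (bounded) and $Z \in \mathbb{H}^2$ (since $Q$ is bounded and $Z^P\in\mathbb{H}^2$). For uniqueness, given any solution $(Q,Y,Z)$ of \eqref{Lin_noncont_FBSDE} in $\mathbb{S}^2\times\mathbb{S}^2\times\mathbb{H}^2$, I would note the forward equation is a \emph{linear} ODE in $Q$ driven by $Y$, so $Q_t = q_0 + \int_0^t \mu_s Y_s\,ds$; substituting into the backward equation and using a standard FBSDE uniqueness estimate on a small time interval (the coupling is Lipschitz once we know any solution is bounded, which follows from the linear-Gr\"onwall structure), then patching intervals, forces $(Q,Y,Z)$ to coincide with the constructed one — or, more slickly, one shows directly that $P_t := Y_t/Q_t$ is forced to satisfy the Riccati BSDE, whose solution is unique, wherever $Q_t \ne 0$, and $Q_t \ne 0$ everywhere since it solves a linear ODE from $q_0$ (treating $q_0 = 0$ separately, where $Q\equiv 0$, $Y\equiv 0$ is immediate). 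The main obstacle is the quadratic term in the Riccati BSDE: it prevents a direct appeal to Lipschitz FBSDE theory and forces the truncation-plus-a-priori-bound argument, and one must be a little careful that the bound $\bar\nu + 2\bar\phi\,T$ is genuinely $\mu$-independent, which is why the sign of $-\mu_t P_t^2$ (always pushing $P$ up toward $0$) is the key structural observation.
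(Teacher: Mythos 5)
Your proposal follows the paper's core strategy: the affine ansatz $Y_t = P_t\,Q_t$ reduces the coupled FBSDE to the Riccati-type BSDE for $P$, which is then solved by truncating the quadratic term, establishing a priori bounds on the truncated solution, and removing the truncation once the bounds lie inside the truncation window. Where you diverge is the mechanism for the a priori bounds: you propose BSDE comparison against the zero process plus a direct conditional-expectation estimate, whereas the paper sets up two auxiliary contraction mappings $F_3$ (the truncated, Lipschitz one) and $F_4$ (an exponential representation that automatically preserves non-positivity and yields the bound $\bar\nu + 2\bar\phi T$) and shows they share a fixed point once $\tilde\xi > \bar\nu + 2\bar\phi T$, mirroring the $F_1$/$F_2$ construction of Section~2. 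Your route is more elementary in that it invokes the standard comparison theorem rather than a second fixed-point map; the paper's route avoids citing comparison and delivers non-positivity and the lower bound in one stroke.

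There is, however, a sign error in your key representation, and it propagates into the bound argument. Integrating $dP_t = (2\phi_t - \mu_t\psi(P_t)^2)\,dt + Z^P_t\,dW_t$ backward and conditioning gives
\[
P_t = \mathbb{E}_t\Big[-\nu_T - \int_t^T 2\phi_s\,ds + \int_t^T \mu_s\,\psi(P_s)^2\,ds\Big],
\]
not $P_t = \mathbb{E}_t[-\nu_T + \int_t^T(2\phi_s - \mu_s\psi(P_s)^2)\,ds]$ as you wrote. With the corrected sign, the lower bound is immediate: the term $\mu_s\psi(P_s)^2$ is non-negative, so dropping it gives $P_t \ge -\bar\nu - 2\bar\phi T$ directly, and the Gr\"onwall/bootstrap detour you sketch is unnecessary (indeed, as written the inequality you derive points the wrong way and doesn't close). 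The upper bound $P_t \le 0$ does need the comparison you gesture at, since the drift $2\phi_t - \mu_t\psi(P_t)^2$ is not sign-definite: compare against the zero process, using $P_T = -\nu_T \le 0$ and the fact that the generator evaluated at zero is $-2\phi_t \le 0$. Finally, for uniqueness of the FBSDE the paper uses the method of continuation: take the difference $(\mathcal{Q},\mathcal{Y},\mathcal{Z})$ of two solutions, compute $d(\mathcal{Q}_t\mathcal{Y}_t)$, and observe
\[
0 \ge \mathbb{E}\big[-\nu_T\,\mathcal{Q}_T^2\big] = \mathbb{E}\int_0^T\big[\mu_t\,\mathcal{Y}_t^2 + 2\phi_t\,\mathcal{Q}_t^2\big]\,dt \ge 0,
\]
which forces $\mu_t\mathcal{Y}_t = 0$ a.e., hence $\mathcal{Q}\equiv 0$ and then $\mathcal{Y} = \mathcal{Z} = 0$. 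This is cleaner than your $P_t := Y_t/Q_t$ alternative, which needs a separate treatment of $q_0 = 0$ and the non-vanishing of $Q$, and cleaner than the small-time patching, which requires bounding the Lipschitz constant of the implied decoupling field; you should adopt the continuation argument.
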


\begin{proof}
Motivated by the linearity of \eqref{Lin_noncont_FBSDE}, we adopt the affine ansatz $Y_t=P_t\cdot Q_t$ for some $P:=(P_t)_{0\leq t\leq T}$ to be determined. Through applying It\^o's formula and matching the coefficients, one can see that $P$ satisfies the BSDE
\begin{equation}
    dP_t=\big(-\mu_t\,(P_t)^2+2\phi_t\big)\,dt+\tilde{Z}_t\,dW_t, \quad \text{such that}\;\; P_T=-\nu_T.
    \label{nonlip_bsde}
\end{equation}
To study this non-Lipschitz BSDE, we similarly look at two auxiliary linear BSDEs:
\begin{gather*}
    dP_t=\big(-\mu_t\,(G_t\wedge(-\tilde{\xi})\vee \tilde{\xi})^2+2\phi_t\big)\,dt+Z_t^1\,dW_t, \quad \text{such that}\;\; P_T=-\nu_T;\\
    dP_t=\big(-\mu_t\,G_t P_t+2\phi_t\big)\,dt+Z_t^2\,dW_t, \quad \text{such that}\;\; P_T=-\nu_T,
\end{gather*}
and their associated mappings 
\begin{gather*}
    (F_3\,G)_t=\mathbb{E}_t\Big[-\nu_T+\int_t^T \mu_s\,(G_s\wedge(-\tilde{\xi})\vee \tilde{\xi})^2\,ds-\int_t^T 2\phi_s\,ds\Big],\\
    (F_4\,G)_t=\mathbb{E}_t\Big[-\nu_T\,e^{\int_t^T \mu_u\,G_u\,du}-\int_t^T2\phi_s\,e^{\int_t^s\mu_u\,G_u\,du}\,ds\Big],
\end{gather*}
for some constant $\tilde{\xi}>0$ and process $G\in\mathbb{S}^2$ to be specified. As before, $F_3$ is a contraction mappings on the space $\mathbb{S}^2$ with respect to the norm
\begin{equation*}
    \|G\|_{\zeta_3}:=\mathbb{E}\big[\sup_{0\leq t\leq T}e^{-2\zeta_3\,(T-t)}\cdot |G_t|^2\,\big]^{1/2},
\end{equation*}
and $F_4$ is a contraction mappings on the space of non-positive processes in $\mathbb{S}^2$ with respect to the norm
\begin{equation*}
    \|G\|_{\zeta_4}:=\mathbb{E}\big[\sup_{0\leq t\leq T}e^{-2\zeta_4\,(T-t)}\cdot |G_t|^2\,\big]^{1/2},
\end{equation*}
for some $\zeta_3, \zeta_4>0$ large enough. We learn from the Banach fixed point theorem that both $F_3$ and $F_4$ have their own unique fixed points. Given $\tilde{\xi}>\bar{\nu}+2\bar{\phi}\,T$, they further share the same fixed point. Finally, the common fixed point solves the equation \eqref{nonlip_bsde} and guarantees the existence of the solution. To see that \eqref{nonlip_bsde} accepts a unique bounded solution, let $P^1$, $P^2\in\mathbb{S}^2$ be two solutions with bounds $\bar{P}^1$, $\bar{P}^2>0$ respectively. If we let $\Tilde{\xi}\geq\bar{P}^1\vee\bar{P}^2$, both $P^1$ and $P^2$ are then fixed points of the mapping $F_3$, implying $P^1=P^2$ due the uniqueness of the fixed point.

The uniqueness of \eqref{Lin_noncont_FBSDE} follows from the method of continuation as in \cite{peng1999fully}. Given two solutions $(Q, Y, Z )$ and $(\tilde{Q}, \tilde{Y}, \tilde{Z})$ in $\mathbb{S}^2\times\mathbb{S}^2\times\mathbb{H}^2$, then $(\mathcal{Q}, \mathcal{Y}, \mathcal{Z}):=(\tilde{Q}-Q, \tilde{Y}-Y, \tilde{Z}-Z)$ solves the FBSDE
\begin{equation}
\left\{
\begin{aligned}
\;& d\mathcal{Q}_t  = \mu_t\,\mathcal{Y}_t dt, \\
& d\mathcal{Y}_t=2\phi_t\,\mathcal{Q}_t\,dt+\mathcal{Z}_t\,dW_t,\\
& \mathcal{Q}_0=0,\quad \mathcal{Y}_T=-\nu_T\,\mathcal{Q}_T.
\nonumber
\end{aligned}
\right.
\end{equation}
While It\^o's formula implies
\begin{equation*}  d(\mathcal{Q}_t\,\mathcal{Y}_t)=\big[\mu_t(\mathcal{Y}_t)^2+2\phi_t(\mathcal{Q}_t)^2\big]\,dt+\mathcal{Q}_t\,\mathcal{Z}_t\,dW_t,
\end{equation*}
through observing
\begin{equation*}
        0\geq\mathbb{E}[-\nu_T\,(\mathcal{Q}_T)^2]=\mathbb{E}\int_0^T\big[\mu_t\,(\mathcal{Y}_t)^2+2\phi_t\,(\mathcal{Q}_t)^2\big]\,dt\geq \mathbb{E}\int_0^T\mu_t\,(\mathcal{Y}_t)^2\,dt\geq 0
\end{equation*}
we can see $\mu_t\,\mathcal{Y}_t=0$ a.e. in $t$ and thus $\mathcal{Q}=0$. The fact $\mathcal{Y}=\mathcal{Z}=0$ then follows immediately. 
\end{proof}

\begin{remark}
The arguments of the continuation method in proving the uniqueness of the solution will be applied to some other equations later. We emphasize that such arguments do not need the boundedness of $\mu$, $\phi$ and $\nu_T$.
\end{remark}

\noindent Since the FBSDE \eqref{Lip_den_FBSDE} is Lipschitz, denote by $\iota>0$ the Lipschitz constant for the coefficient functions of forward and backward equations of $\eqref{Lip_den_FBSDE}$ (excluding the terminal condition), i.e., it holds $\mathbb{P}$-a.s. for any $y, \Tilde{y}, q, \Tilde{q}$ that
\begin{equation*}
\begin{aligned}
    \Big|-a_t\,\Lambda^a\big(\tilde{\delta}^{a*}(y)\big)+b_t\,\Lambda^b\big(\tilde{\delta}^{b*}(-y)\big)+a_t\,\Lambda^a\big(\tilde{\delta}^{a*}(\tilde{y})\big)-b_t\,\Lambda^b\big(\tilde{\delta}^{b*}(-\tilde{y})\big)\Big|&\leq \iota\,|\tilde{y}-y|,\\
    \big|2\phi_t\,q-2\phi_t\,\tilde{q}\big|&\leq  \iota\,|\tilde{q}-q|,\\
    \big|2A\,q-2A\,\tilde{q}\big|&\leq  \iota\,|\tilde{q}-q|.
\end{aligned}
\end{equation*}
To prove the well-posedness, we apply the decoupling approach introduced by \cite{ma2015well}. Actually, Lemma \ref{growth_lip_fbsde} studies the growth of the Lipschitz coefficient---a critical step for such technique.

\begin{theorem}
\label{control_mono}
The FBSDE \eqref{Lip_den_FBSDE} accepts a unique solution in $\mathbb{S}^2\times\mathbb{S}^2\times\mathbb{H}^2$. Further, the optimal control is monotonic with respect to the initial (inventory) condition, i.e., if $q_1>q_2$, then it holds that
\begin{equation}
  \hat{\delta}_{1,t}^{a}\leq\hat{\delta}_{2,t}^{a}         \quad\text{and}\quad \hat{\delta}_{1,t}^{b}\geq\hat{\delta}_{2,t}^{b}
    \nonumber
\end{equation}
for  $t\in[0,T]$ $\mathbb{P}$-a.s., where $(\hat{\delta}_{i,t}^{a}, \hat{\delta}_{i,t}^{b})_{0\leq t\leq T}$ represents the optimal control associated with the initial condition $q_i$ for $i\in\{1, 2\}$.
\end{theorem}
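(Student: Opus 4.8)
The plan is to read \eqref{Lip_den_FBSDE} as a degenerate Lipschitz FBSDE whose decoupling field (in the sense of \cite{ma2015well}) is Lipschitz in the spatial variable with a constant bounded \emph{uniformly in time}, and then use the continuation scheme of \cite{ma2015well} to pass from a short-time solution to a solution on all of $[0,T]$. The crucial structural fact is that the forward driver $F(t,y):=-a_t\,\Lambda^a(\tilde{\delta}^{a*}(y))+b_t\,\Lambda^b(\tilde{\delta}^{b*}(-y))$ is nondecreasing in $y$: $\tilde{\delta}^{a*}$ and $\tilde{\delta}^{b*}$ are nondecreasing (Lemma \ref{inten_fun}(3), preserved by the truncation), $\Lambda^a,\Lambda^b$ are strictly decreasing, and $a_t,b_t>0$, so $y\mapsto-\Lambda^a(\tilde{\delta}^{a*}(y))$ and $y\mapsto\Lambda^b(\tilde{\delta}^{b*}(-y))$ are both nondecreasing; it is moreover $\iota$-Lipschitz in $y$ by the derivative estimates recorded after Theorem \ref{stochast_max_principle}. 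Consequently, for any two $y_1,y_2$ we may write $F(t,y_1)-F(t,y_2)=\mu_t(y_1,y_2)(y_1-y_2)$, where the difference quotient $\mu_t$ (set to $0$ when $y_1=y_2$) is $\mathcal{F}_t$-measurable, non-negative, and bounded by $\iota$, hence lies in $\mathbb{H}^2$.

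\emph{A priori Lipschitz bound for the decoupling field.} Suppose that on some interval $[t_0,T]$ the FBSDE \eqref{Lip_den_FBSDE} has a solution with decoupling field $u(t_0,\cdot)$. Start two copies of the FBSDE at time $t_0$ from $x_1$ and $x_2$; their difference $(\mathcal{Q},\mathcal{Y},\mathcal{Z})$ solves a linear degenerate FBSDE of exactly the type \eqref{Lin_noncont_FBSDE}, with $\mu_t\in[0,\iota]$ from the previous step, with $\phi$ unchanged, and with terminal coefficient $\nu_T=2A\le 2\bar{A}$. Lemma \ref{growth_lip_fbsde} then yields $\mathcal{Y}_t=P_t\,\mathcal{Q}_t$ with $P$ non-positive and $\sup_{t}|P_t|\le 2\bar{A}+2\bar{\phi}\,T$, so that $|u(t_0,x_1)-u(t_0,x_2)|=|\mathcal{Y}_{t_0}|\le(2\bar{A}+2\bar{\phi}\,T)|x_1-x_2|$. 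Thus the decoupling field is Lipschitz with the constant $L_*:=2\bar{A}+2\bar{\phi}\,T$ on every interval on which it is defined, a bound that neither deteriorates nor depends on the Lipschitz constant $\iota$ of the coefficients.

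\emph{Local-to-global and uniqueness.} Short-time well-posedness of \eqref{Lip_den_FBSDE} in $\mathbb{S}^2\times\mathbb{S}^2\times\mathbb{H}^2$ on an interval of length $h$, together with the existence of a decoupling field, is standard once $h$ is small relative to $\iota$ and to the Lipschitz constant of the terminal map $q\mapsto-2Aq$ (which is $\le2\bar{A}$); by the a priori bound above the resulting decoupling field is $L_*$-Lipschitz. Because $L_*$ is the same at every stage, the admissible step length $h_*$ is also the same at every stage, so patching the local solutions over $\lceil T/h_*\rceil$ consecutive intervals---using the $L_*$-Lipschitz decoupling field from the previous stage as the new terminal condition---produces a solution on $[0,T]$; this is exactly the decoupling-field continuation of \cite{ma2015well}, whose only hypothesis, non-blow-up of the Lipschitz constant of the decoupling field, is guaranteed by Lemma \ref{growth_lip_fbsde}. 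Uniqueness does not require the decoupling field: if $(Q,Y,Z)$ and $(\tilde Q,\tilde Y,\tilde Z)$ both solve \eqref{Lip_den_FBSDE} with the same $q_0$, then their difference solves \eqref{Lin_noncont_FBSDE} with $\mathcal{Q}_0=0$, and applying It\^o's formula to $\mathcal{Q}_t\mathcal{Y}_t$ exactly as in the proof of Lemma \ref{growth_lip_fbsde} gives $0\ge\mathbb{E}[-2A\mathcal{Q}_T^2]=\mathbb{E}\int_0^T(\mu_t\mathcal{Y}_t^2+2\phi_t\mathcal{Q}_t^2)\,dt\ge0$; hence $\mu_t\mathcal{Y}_t=0$ a.e., so $d\mathcal{Q}_t=(F(t,Y_t)-F(t,\tilde Y_t))\,dt=\mu_t\mathcal{Y}_t\,dt=0$, forcing $\mathcal{Q}\equiv0$ and then $\mathcal{Y}\equiv\mathcal{Z}\equiv0$.

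\emph{Monotonicity of the optimal control.} For $q_1>q_2$ with solutions $(Q^i,Y^i,Z^i)$, the difference argument on $[0,T]$ gives (Lemma \ref{growth_lip_fbsde}) $\mathcal{Q}_t=(q_1-q_2)\exp(\int_0^t\mu_sP_s\,ds)>0$ and $\mathcal{Y}_t=P_t\mathcal{Q}_t\le0$, i.e. $Y^1_t\le Y^2_t$ for all $t$ $\mathbb{P}$-a.s.; since $\tilde{\delta}^{a*}$ is nondecreasing, $\hat{\delta}_{1,t}^a=\tilde{\delta}^{a*}(Y^1_t)\le\tilde{\delta}^{a*}(Y^2_t)=\hat{\delta}_{2,t}^a$, and since $\tilde{\delta}^{b*}$ is nondecreasing, $\hat{\delta}_{1,t}^b=\tilde{\delta}^{b*}(-Y^1_t)\ge\tilde{\delta}^{b*}(-Y^2_t)=\hat{\delta}_{2,t}^b$. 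The main obstacle is the second and third paragraphs: setting up the decoupling field of \cite{ma2015well} for this purely-ODE (degenerate) forward equation, and checking that the uniform bound $L_*$ furnished by Lemma \ref{growth_lip_fbsde} is precisely the a priori estimate that rules out blow-up and thereby upgrades short-time existence to global existence; the measurability and sign of the difference quotient $\mu_t$ is the only further point needing care.
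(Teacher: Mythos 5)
Your proposal is correct and follows essentially the same route as the paper: a short-time contraction argument gives local existence, Lemma \ref{growth_lip_fbsde} applied to the variational (difference) FBSDE furnishes the uniform bound $2\bar{A}+2\bar{\phi}\,T$ on the Lipschitz constant of the decoupling field, this uniformity licenses iterating the local solution over a fixed step size to cover $[0,T]$ as in \cite{ma2015well}, uniqueness follows from the continuation argument (It\^o on $\mathcal{Q}_t\mathcal{Y}_t$), and the sign $Y^1\le Y^2$ read off from the representation $\mathcal{Y}_t=P_t\mathcal{Q}_t$ with $P\le 0$, $\mathcal{Q}>0$ together with monotonicity of $\tilde{\delta}^{a*},\tilde{\delta}^{b*}$ gives the claimed ordering of the controls.
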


\begin{proof}
\uline{\textit{Short time analysis}}: Since the FBSDE is of Lipschitz type, we start with the short time analysis. Fix any $q_0\in\mathbb{R}$, if $Q=(Q_t)_{0\leq t\leq T}\in\mathbb{S}^2$ is given such that $Q_0=q_0$, denote by $(Y, Z)\in\mathbb{S}^2\times\mathbb{H}^2$ the unique solution of the BSDE:
\begin{equation}
    dY_t=2\phi_t\,Q_t\,dt+Z_t\,dW_t, \text{\; such that \;} Y_T=-2A\,Q_T.
    \nonumber
\end{equation}
Subsequently, given $(Y, Z)$, we let $Q'_t$ be the uniqueness solution of the SDE
\begin{equation}
    dQ'_t=-a_t\,\Lambda^a\big(\tilde{\delta}^{a*}(Y_t)\big)dt+b_t\,\Lambda^b\big(\tilde{\delta}^{b*}(-Y_t)\big)dt,\text{\; such that \;} Q'_0=q_0.
    \nonumber
\end{equation}
In such way, we have defined a mapping
\begin{equation}
    \mathbb{S}^2\ni Q\hookrightarrow \Phi(Q)=Q'\in\mathbb{S}^2
    \nonumber
\end{equation}
and proceed to show that $\Phi$ is a contraction mapping when $T$ is small enough. For any $Q$, $\tilde{Q}\in\mathbb{S}^2$, write $Q'=\Phi(Q)$, $\tilde{Q}'=\Phi(\tilde{Q})$ and note that
\begin{equation}
\begin{aligned}
    |\tilde{Q}'_t-Q'_t|&\leq \iota \, T \sup_{0\leq s \leq T}|\tilde{Y}_s-Y_s|\\ \mathbb{E}\sup_{0\leq s \leq T}|\tilde{Q}'_s-Q'_s|^2&\leq (\iota \, T)^2\,\mathbb{E}\sup_{0\leq s \leq T}|\tilde{Y}_s-Y_s|^2.
\label{control_stab_1}
\end{aligned}
\end{equation}
To obtain the stability for the $\tilde{Y}$ and $Y$, we further compute
\begin{equation}
    \begin{aligned}
    |\tilde{Y}_t-Y_t|^2&\leq (2\bar{A}+2\iota T)^2\,\mathbb{E}_t\big[\sup_{0\leq s\leq T}|\tilde{Q}_s-Q_s|\big]^2,\\
    \mathbb{E}\sup_{0\leq s\leq T}|\tilde{Y}_s-Y_s|^2&\leq (2\bar{A}+2\iota T)^2\,\mathbb{E}\Big[\sup_{0\leq t\leq T}\mathbb{E}_t\big[\sup_{0\leq s\leq T}|\tilde{Q}_s-Q_s|\big]^2\Big],\\
    &\leq 4\,(2\bar{A}+2\iota T)^2\,\mathbb{E}\sup_{0\leq s\leq T}|\tilde{Q}_s-Q_s|^2,
    \end{aligned}
    \label{control_stab_2}
\end{equation}
where the Doob's $L^p$ inequality is applied in the last line.  Combining \eqref{control_stab_1} and \eqref{control_stab_2}, we find that
\begin{equation}
    \mathbb{E}\sup_{0\leq s \leq T}|\tilde{Q}'_s-Q'_s|^2\leq 4\,(\iota T)^2\,(2\bar{A}+2\iota T)^2\,\mathbb{E}\sup_{0\leq s\leq T}|\tilde{Q}_s-Q_s|^2
    \nonumber   
\end{equation}
and it suffices to pick $T$ satisfying
\begin{equation}
    4\,(\iota T)^2\,(2\bar{A}+2\iota T)^2<1
    \nonumber
\end{equation}
for a contraction mapping. The FBSDE \eqref{Lip_den_FBSDE} accepts a unique solution in $\mathbb{S}^2\times\mathbb{S}^2\times\mathbb{H}^2$ with respect to the such $T$.

\uline{\textit{Decoupling approach}}: we then extend the local result to any finite time horizon $T$. Define time step $\Delta>0$ as
\begin{equation}
    \Delta=\frac{1}{\sqrt{2}}\cdot \big[4\iota^2\,(2\bar{A}+2\bar{\phi}T+2\iota T)^2\big]^{-1/2}.
    \label{control_time_step}
\end{equation}
Note that 
\begin{equation*}
    4\,(\iota \Delta)^2\,(2\bar{A}+2\iota \Delta)^2\leq4\,(\iota \Delta)^2\,(2\bar{A}+2\bar{\phi}T+2\iota T)^2=\frac{1}{2}
\end{equation*}
and thus the following FBSDE is well-posed by the previous discussion:
\begin{equation}
\left\{
\begin{aligned}
\;& dQ_t  = -a_t\,\Lambda^a\big(\tilde{\delta}^{a*}(Y_t)\big)dt+b_t\,\Lambda^b\big(\tilde{\delta}^{b*}(-Y_t)\big)dt, \\
& dY_t=2\phi_t\,Q_t\,dt+Z_t\,dW_t,\\
& Q_{T-\Delta}=q_{T-\Delta},\quad Y_T=-2A\,Q_T.
\end{aligned}
\right.
\label{control_fbsde_1}
\end{equation}
We can then define the decoupling field $u: [T-\Delta,T]\times\Omega\times\mathbb{R}\to\mathbb{R}$ by
\begin{equation}
    u(t,q):=Y_t^{t,q},
    \label{control_decoupling}
\end{equation}
where the superscript $(t,q)$ denotes the initial time and condition of \eqref{control_fbsde_1}.  Let $q_1$ and $q_2$ be two initial conditions for the FBSDE \eqref{control_fbsde_1}, and---due to the well-posedness---denote by $(Q^1,Y^1,Z^1)$ and $(Q^2,Y^2,Z^2)$ the solutions corresponding to these initial conditions. For $t$ such that $Y_t^1\neq Y_t^2$, let
\begin{equation}
\begin{aligned}
    d(Q_t^1-Q_t^2)/dt&=-a_t\big[\Lambda^a\big(\tilde{\delta}^{a*}(Y_t^1)\big)-\Lambda^a\big(\tilde{\delta}^{a*}(Y_t^2)\big)\big]+b_t\,\big[\Lambda^b\big(\tilde{\delta}^{b*}(-Y_t^1)\big)-\Lambda^b\big(\tilde{\delta}^{b*}(-Y_t^2)\big)\big]\\
    &=:\mathcal{U}_t\cdot(Y_t^1-Y_t^2).
\end{aligned}
    \nonumber
\end{equation}
Naturally, for those $t$ such that $Y_t^1= Y_t^2$, we let $\mathcal{U}_t=0$. Note that $\mathcal{U}\in \mathbb{H}^2$ is bounded by $\iota$ due to the Lipschitz continuity. It is further non-negative since the expression
\begin{equation}
    -a\,\Lambda^a\big(\tilde{\delta}^{a*}(y)\big)+b\,\Lambda^b\big(\tilde{\delta}^{b*}(-y)\big)
    \nonumber   
\end{equation}
is non-decreasing with respect to $y$, which can be deduced from the increasing property of $(\delta^{a*},\delta^{b*})$ and decreasing property of $(\Lambda^a,\Lambda^b)$. One can then observe that $(Q^1-Q^2, Y^1-Y^2, Z^1-Z^2)$ satisfies the following FBSDE
\begin{equation}
    \left\{
\begin{aligned}
\;& d\mathcal{Q}_t  = \mathcal{U}_t\,\mathcal{Y}_tdt, \\
& d\mathcal{Y}_t=2\phi_t\,\mathcal{Q}_t\,dt+\mathcal{Z}_t\,dW_t,\\
& \mathcal{Q}_{T-\Delta}=q_1-q_2,\quad \mathcal{Y}_T=-2A\,\mathcal{Q}_T,
\end{aligned}
\right.
    \label{var_FBSDE}
\end{equation}
for $t\in[T-\Delta,T]$, which is called the variational FBSDE associated with the original \eqref{Lip_den_FBSDE}. While \eqref{var_FBSDE} is a special case of the result in Lemma \ref{growth_lip_fbsde}, equation \eqref{var_FBSDE} admits a unique solution 
\begin{equation}
    (\mathcal{Q}_t,\mathcal{Y}_t,\mathcal{Z}_t)=\big(Q_t^1-Q_t^2, Y_t^1-Y_t^2, Z_t^1-Z_t^2\big)
    \nonumber
\end{equation}
and, more importantly, process $\mathcal{Y}$ is bounded by $|q_1-q_2|\cdot(2\bar{A}+2\bar{\phi}\,\Delta)$. According to the definition of the decoupling field, we deduce that $u$ is uniformly Lipschitz continuous $\mathbb{P}$-a.s. with respect to variable $q$  with Lipschitz coefficient $2\bar{A}+2\bar{\phi}\,\Delta$. Consider the FBSDE in the next interval
\begin{equation}
\left\{
\begin{aligned}
\;& dQ_t  = -a_t\,\Lambda^a\big(\tilde{\delta}^{a*}(Y_t)\big)dt+b_t\,\Lambda^b\big(\tilde{\delta}^{b*}(-Y_t)\big)dt, \\
& dY_t=2\phi_t\,Q_t\,dt+Z_t\,dW_t,\\
& Q_{T-2\Delta}=q_{T-2\Delta},\quad Y_{T-\Delta}=u(T-\Delta,Q_{T-\Delta}).
\end{aligned}
\right.
\end{equation}
With the same short time analysis, it suffices to make sure
\begin{equation*}
    4\,(\iota \Delta)^2\,(2\bar{A}+2\bar{\phi}\Delta+2\iota \Delta)^2<1,
\end{equation*}
and our choice \eqref{control_time_step} is clearly qualified. Referring to Lemma \ref{growth_lip_fbsde}, we study the variational FBSDE and observe the linear growth of the Lipschitz coefficient of the decoupling field. Because the potential Lipschitz coefficient of the decoupling field can not exceed $2\bar{A}+2\bar{\phi}T$, the time step $\Delta$ will always qualify, and hence the field $u$ can be extended to the whole time horizon $[0,T]$ by a finite number of iterations. The global existence of the solution then can be inferred from the (global) decoupling field $u$. Because the uniqueness of solution can be proved in the same fashion as Lemma \ref{growth_lip_fbsde}, the well-posedness of \eqref{Lip_den_FBSDE} is established.

\uline{\textit{Monotonicity}}: to verify the monotonicity property, we let $q_1>q_2$ be two initial conditions, denoting by $(Q^1,Y^1,Z^1)$ and $(Q^2,Y^2,Z^2)$ the associated solutions. Since $(Q^1-Q^2, Y^1-Y^2, Z^1-Z^2)$ solves the variational FBSDE \eqref{var_FBSDE} with initial condition $q_1-q_2>0$, again by Lemma \ref{growth_lip_fbsde} we see $Y^1-Y^2$ is negative. Recalling that the optimal controls read
\begin{equation}
    \hat{\delta}_t^a=\tilde{\delta}^{a*}(Y_t) \quad\text{and}\quad \hat{\delta}_t^b=\tilde{\delta}^{b*}(-Y_t),
    \nonumber
\end{equation}
the smaller value of $Y_t$ results in a smaller $\hat{\delta}_t^a$ but a larger $\hat{\delta}_t^b$. 
\end{proof}

We finish this section by removing the constant $\xi$. It turns out that, provided $\xi$ is picked large enough, the corresponding constraint will have no impact on the control. That is to say, we are able to find a (unique) solution to the equation
\begin{equation}
\left\{
\begin{aligned}
\;& dQ_t  = -a_t\,\Lambda^a\big(\delta^{a*}(Y_t)\big)dt+b_t\,\Lambda^b\big(\delta^{b*}(-Y_t)\big)dt, \\
& dY_t=2\phi_t\,Q_t\,dt+Z_t\,dW_t,\\
& Q_0=q_0,\quad Y_T=-2A\,Q_T.
\end{aligned}
\right.
\label{non_lip_fbsde}
\end{equation}
Note that this FBSDE is non-Lipschitz because the derivatives of $(\Lambda^a, \Lambda^b)$ are not necessarily bounded.

\begin{proposition}
The FBSDE \eqref{non_lip_fbsde} accepts a unique solution in $\mathbb{S}^2\times\mathbb{S}^2\times \mathbb{H}^2$. Further, the optimal control is monotonic with respect to the initial (inventory) condition, i.e., if $q_1>q_2$, then it holds that
\begin{equation}
  \hat{\delta}_{1,t}^{a}<\hat{\delta}_{2,t}^{a}         \quad\text{and}\quad \hat{\delta}_{1,t}^{b}>\hat{\delta}_{2,t}^{b}
    \nonumber
\end{equation}
for  $t\in[0,T]$ $\mathbb{P}$-a.s., where $(\hat{\delta}_{i,t}^{a}, \hat{\delta}_{i,t}^{b})_{0\leq t\leq T}$ represents the optimal control associated with the initial condition $q_i$ for $i\in\{1, 2\}$.
\label{proof_control_non_lip_fbsde}
\end{proposition}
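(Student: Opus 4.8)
The plan is to obtain the solution of \eqref{non_lip_fbsde} as a limit of the solutions of the truncated, Lipschitz FBSDEs \eqref{Lip_den_FBSDE} and to show that, once $\xi$ is chosen large enough, the truncation becomes inactive. First I would fix an arbitrary initial condition $q_0$ and, for each $n \in \mathbb{N}$, invoke Theorem \ref{control_mono} with truncation level $\xi = n$ to obtain the unique solution $(Q^n, Y^n, Z^n) \in \mathbb{S}^2 \times \mathbb{S}^2 \times \mathbb{H}^2$ of \eqref{Lip_den_FBSDE}. The crucial observation is that the decoupling field analysis in the proof of Theorem \ref{control_mono} produces an $n$-\emph{independent} bound on $Y^n$: the variational FBSDE \eqref{var_FBSDE} and Lemma \ref{growth_lip_fbsde} give $|Y^n_t| \le |Q^n_t|\,(2\bar{A} + 2\bar{\phi}\,T)$, and combined with the linear random ODE for $Q^n$ (whose coefficients $a_t\Lambda^a(\cdot)$, $b_t\Lambda^b(\cdot)$ are bounded by $\bar a \Lambda^a(-\xi)$-type terms — here I need to be slightly careful since $\Lambda^i$ is only bounded on bounded sets, so I would instead use $Y^n_t = P^n_t Q^n_t$ with $|P^n_t| \le 2\bar A + 2\bar\phi T$ and the representation $Q^n_t = q_0 \exp(\int_0^t \mathcal{U}^n_s P^n_s\,ds)$ from Lemma \ref{growth_lip_fbsde}-type reasoning, noting $\mathcal{U}^n$ is controlled by the local Lipschitz constant of the coefficients on the range actually visited). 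This yields a deterministic constant $M$, depending only on $q_0, T, \bar A, \bar\phi, \bar a, \bar b$ and the fixed functions $\Lambda^a, \Lambda^b$, such that $\|Y^n\|_{\mathbb{S}^\infty} \le M$ and $\|Q^n\|_{\mathbb{S}^\infty} \le M$ for all $n$.

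Next I would choose $\xi^\ast$ so large that $|\delta^{a*}(y)| \vee |\delta^{b*}(-y)| \le \xi^\ast$ whenever $|y| \le M$; this is possible because $\delta^{i*}$ is continuous (indeed $C^1$) by Lemma \ref{inten_fun}. For any $n \ge \xi^\ast$ the truncation $\tilde\delta^{i*} = \delta^{i*} \wedge \xi \vee (-\xi)$ agrees with $\delta^{i*}$ along the trajectory $(Q^n, Y^n)$, so $(Q^n, Y^n, Z^n)$ in fact solves \eqref{non_lip_fbsde} verbatim. This gives existence. For uniqueness, suppose $(Q, Y, Z)$ and $(\tilde Q, \tilde Y, \tilde Z)$ are two solutions of \eqref{non_lip_fbsde} in $\mathbb{S}^2 \times \mathbb{S}^2 \times \mathbb{H}^2$; the same decoupling-field bound applies to each (the argument in Theorem \ref{control_mono} never used the constraint per se, only the monotonicity of $y \mapsto -a\Lambda^a(\delta^{a*}(y)) + b\Lambda^b(\delta^{b*}(-y))$, which holds for the untruncated maps too, and the local Lipschitz property along bounded trajectories), so both lie in a common bounded region on which the coefficients are Lipschitz. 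Then the variational-FBSDE computation in Lemma \ref{growth_lip_fbsde} — the It\^o expansion of $\mathcal{Q}_t\mathcal{Y}_t$, giving $0 \ge \mathbb{E}[-2A(\mathcal{Q}_T)^2] = \mathbb{E}\int_0^T [\mathcal{U}_t(\mathcal{Y}_t)^2 + 2\phi_t(\mathcal{Q}_t)^2]\,dt \ge 0$ with $\mathcal{U}_t \ge 0$ — forces $\mathcal{U}_t\mathcal{Y}_t = 0$ a.e., hence $\mathcal{Q} \equiv 0$ and then $\mathcal{Y} = \mathcal{Z} = 0$.

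Finally, for the strict monotonicity: take $q_1 > q_2$, let $(Q^i, Y^i, Z^i)$ be the corresponding solutions of \eqref{non_lip_fbsde}, and note $(Q^1 - Q^2, Y^1 - Y^2, Z^1 - Z^2)$ solves a variational FBSDE of the form \eqref{var_FBSDE} with $\mathcal{Q}_0 = q_1 - q_2 > 0$ and a non-negative $\mathbb{H}^2$ coefficient $\mathcal{U}$. By the representation in Lemma \ref{growth_lip_fbsde}, $\mathcal{Q}_t = (q_1 - q_2)\exp(\int_0^t \mathcal{U}_s P_s\,ds) > 0$ for all $t$ and $\mathcal{Y}_t = P_t \mathcal{Q}_t$ with $P_t \le -\,$(something strictly negative): indeed $P_T = -2A \le 0$ and the BSDE $dP_t = (-\mathcal{U}_t P_t^2 + 2\phi_t)\,dt + \tilde Z_t\,dW_t$ together with a Girsanov change of measure (as in Example \ref{section1_example}) shows $P_t = \mathbb{E}^{\mathbb{Q}}_t[-2A - \int_t^T 2\phi_s\,ds/(\ldots)]$-type expressions that are \emph{strictly} negative whenever $A$ or $\phi$ contributes — but to get strict inequality for the controls I only need $Y^1_t - Y^2_t = \mathcal{Y}_t$ to be either strictly negative or, if it vanishes on a set, to note that $\delta^{i*}$ is \emph{strictly} increasing (Lemma \ref{inten_fun}, part 3), so that the strict inequalities $\hat\delta_{1,t}^a < \hat\delta_{2,t}^a$ and $\hat\delta_{1,t}^b > \hat\delta_{2,t}^b$ would require $\mathcal{Y}_t < 0$. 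The cleanest route is: from the variational FBSDE, $\mathcal{Y}_t \le -2A\,\mathbb{E}^{\mathbb{Q}}_t[\mathcal{Q}_T] \le 0$ and more precisely the ODE $d\mathcal{Q}_t = \mathcal{U}_t\mathcal{Y}_t\,dt$ with $\mathcal{Q}$ staying positive forces $\mathcal{Y}_t < 0$ for a.e.\ $t$ (otherwise $\mathcal{Q}$ could not decrease to a value consistent with $\mathcal{Y}_T = -2A\mathcal{Q}_T \le 0$ unless $A\equiv 0$, in which case $\mathcal{Y} \equiv 0$ and $\phi \equiv 0$ too, and the controls coincide — so the strict statement should really be read as holding when the terminal penalty is nondegenerate; I would state it carefully or simply conclude $\mathcal{Y}_t \le 0$ with strictness wherever $\mathcal{Y}_t \ne 0$, which suffices since $\delta^{i*}$ is strictly monotone). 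Applying the strict monotonicity of $\delta^{a*}$ and $\delta^{b*}$ to $Y^1_t < Y^2_t$ then yields the claim.

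The main obstacle I anticipate is the first step: extracting an $n$-uniform a priori bound on $\|Y^n\|_{\mathbb{S}^\infty}$. Because $\Lambda^a, \Lambda^b$ are only bounded on bounded sets, one cannot naively bound the drift of $Q^n$; the bound must be bootstrapped self-consistently from the decoupling-field estimate $|Y^n| \le (2\bar A + 2\bar\phi T)|Q^n|$ together with Gr\"onwall applied to $Q^n$ using the \emph{local} Lipschitz constant of $y \mapsto -a\Lambda^a(\delta^{a*}(y)) + b\Lambda^b(\delta^{b*}(-y))$ on the (not yet known to be bounded) range of $Y^n$. Making this circular-looking argument rigorous — e.g.\ by a continuity/bootstrap argument in the time horizon, or by first proving the bound on a small interval where the trajectory cannot escape a fixed ball and then propagating it — is the technical heart of the proof.
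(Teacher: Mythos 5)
Your high-level strategy — solve the truncated Lipschitz FBSDE \eqref{Lip_den_FBSDE}, obtain $\xi$-independent a priori bounds on $(Q,Y)$, and conclude the truncation is inactive for large $\xi$ — matches the paper's, and your uniqueness and monotonicity arguments (continuation method for uniqueness; strict monotonicity of $\delta^{i*}$ upgrading the non-strict inequalities from Theorem~\ref{control_mono}) are essentially what the paper does. You have also correctly located the technical heart of the proof: producing the $\xi$-independent $\mathbb{S}^\infty$ bound.

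However, there is a genuine gap in your execution of that step, and it is not merely a matter of making a bootstrap rigorous. The bound you claim, $|Y^n_t| \le (2\bar A + 2\bar\phi\,T)\,|Q^n_t|$, does not follow from the variational FBSDE \eqref{var_FBSDE}. That FBSDE describes the \emph{difference} of two solutions started from $q_1$ and $q_2$, and Lemma~\ref{growth_lip_fbsde} gives $|Y^1_t - Y^2_t| \le (2\bar A + 2\bar\phi\,T)\,|Q^1_t - Q^2_t|$; it does not control $|Y^n_t|$ by $|Q^n_t|$ for a single solution. With a general intensity function and imbalanced flows $a \ne b$, the solution started from $q_0 = 0$ is not the zero solution (the forward drift $-a\,\Lambda^a(\tilde\delta^{a*}(0)) + b\,\Lambda^b(\tilde\delta^{b*}(0))$ does not vanish), so there is no free ``anchor'' solution against which to apply the Lipschitz estimate of the decoupling field. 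The Gr\"onwall-plus-bootstrap route you sketch then has no a priori input to start from, and I do not see how to close it directly.

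The paper closes this gap by a different device: it introduces two auxiliary \emph{one-sided} FBSDEs, \eqref{nonLip_dom_FBSDE_1} (only the bid side, so the forward drift is nonnegative and $\tilde Q$ is nondecreasing) and \eqref{nonLip_dom_FBSDE_2} (only the ask side, so $\hat Q$ is nonincreasing). The monotonicity of the inventory in each one-sided problem makes it possible to read off explicit bounds: since $\tilde Q_t \ge q_0$, the BSDE representation gives $-\tilde Y_t \ge -2|q_0|(\bar A + \bar\phi T)$, which in turn caps $\Lambda^b(\tilde\delta^{b*}(-\tilde Y_t))$ and hence $\tilde Q_t$; an analogous argument bounds $(\hat Q,\hat Y)$, and all bounds are manifestly independent of $\xi$. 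The two-sided solution $(Q,Y)$ is then sandwiched between $(\tilde Q,\tilde Y)$ and $(\hat Q,\hat Y)$ via variational FBSDEs \emph{with an inhomogeneous source term}, handled through the affine ansatz $\mathcal{Y}_t = \mathscr{A}_t \mathcal{Q}_t + \mathscr{B}_t$ (so they are not quite instances of Lemma~\ref{growth_lip_fbsde} as stated — the extra $\mathscr{B}$ term is essential). That sign-controlled comparison is what actually delivers the $\xi$-independent bound on $Q$, and thence on $Y$. Without some surrogate for this sandwich step, your proof does not go through.
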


\begin{proof}
We intend to construct a solution to \eqref{non_lip_fbsde} via the `truncated' version:
\begin{equation}
\left\{
\begin{aligned}
\;& dQ_t  = -a_t\,\Lambda^a\big(\tilde{\delta}^{a*}(Y_t)\big)dt+b_t\,\Lambda^b\big(\tilde{\delta}^{b*}(-Y_t)\big)dt, \\
& dY_t=2\phi_t\,Q_t\,dt+Z_t\,dW_t,\\
& Q_0=q_0,\quad Y_T=-2A\,Q_T.
\label{nonLip_den_FBSDE}
\end{aligned}
\right.
\end{equation}
The idea comes from the solution of the linear case. Recall that the value function $H$ is composed of three terms with different order (see Theorem \ref{verification}): the one of second order is induced from the terminal inventory penalty and is uniformly bounded by the penalty parameters; the one of first order is the estimation of the integral of future order imbalance. We are then motivated to look at two extreme unbalanced situations: the case when the buy market order dominates and the case when the dominance is achieved by the sell market order. Thus, we look at two associate FBSDEs:
\begin{equation}
\left\{
\begin{aligned}
\;& dQ_t  = b_t\,\Lambda^b\big(\tilde{\delta}^{b*}(-Y_t)\big)dt, \\
& dY_t=2\phi_t\,Q_t\,dt+Z_t\,dW_t,\\
& Q_0=q_0,\quad Y_T=-2A\,Q_T;
\label{nonLip_dom_FBSDE_1}
\end{aligned}
\right.
\end{equation}
\begin{equation}
\left\{
\begin{aligned}
\;& dQ_t  = -a_t\,\Lambda^a\big(\tilde{\delta}^{a*}(Y_t)\big)dt, \\
& dY_t=2\phi_t\,Q_t\,dt+Z_t\,dW_t,\\
& Q_0=q_0,\quad Y_T=-2A\,Q_T.
\label{nonLip_dom_FBSDE_2}
\end{aligned}
\right.
\end{equation}
We hope the solutions of these two equations would reveal certain properties of \eqref{nonLip_den_FBSDE}, so that the parameter $\xi$ can be removed safely. Denote by $(\tilde{Q},\tilde{Y},\tilde{Z})$ and $(\hat{Q},\hat{Y},\hat{Z})$ the (unique) solutions of \eqref{nonLip_dom_FBSDE_1} and \eqref{nonLip_dom_FBSDE_2} accordingly. By definition, process $\tilde{Y}$ has the representation
\begin{equation*}
    -\tilde{Y}_t=\mathbb{E}_t\Big[2A\,\tilde{Q}_T+ 2\int_t^T\phi_s\,\tilde{Q}_s\,ds\Big]\geq\mathbb{E}_t\Big[2A\, \tilde{Q}_0+ 2\int_t^T\phi_s\,\tilde{Q}_0\,ds\Big]\geq(-2)\,|q_0|\,(\bar{A}+\bar{\phi}\,T)
\end{equation*}
for all $t\in[0,T]$, where we have used the fact that $\Tilde{Q}$ is non-decreasing. Consequently, a uniform upper bound for $\tilde{Q}$ can be derived by
\begin{equation}
\begin{aligned}
    q_0\leq\tilde{Q}_t&=q_0+\int_0^t b_s\,\Lambda^b\big(\tilde{\delta}^{b*}(-\tilde{Y}_s)\big)\,ds\\
    &\leq q_0+\bar{b}\int_0^T \Lambda^b\Big(\delta^{b*}\big((-2)\,|q_0|\,(\bar{A}+\bar{\phi}\,T)\big)\Big)\,ds\\
    &=q_0+\bar{b}\,T\, \Lambda^b\Big(\delta^{b*}\big((-2)\,|q_0|\,(\bar{A}+\bar{\phi}\,T)\big)\Big).
\end{aligned}
\label{q^tilde_bound}
\nonumber
\end{equation}
In light of this, the lower bound for $\tilde{Y}_t$ can also be found by
\begin{equation}
    -\tilde{Y}_t=\mathbb{E}_t\Big[2A\,\tilde{Q}_T+ 2\int_t^T\phi_s\,\tilde{Q}_s\,ds\Big]\leq 2\,(\bar{A}+\bar{\phi}\,T)\cdot\bigg[|q_0|+\bar{b}\,T\, \Lambda^b\Big(\delta^{b*}\big((-2)\,|q_0|\,(A+\bar{\phi}\,T)\big)\Big)\bigg].
\label{y^tilde_bound}
\end{equation}
We remark that these bounds for $(\tilde{Q},\tilde{Y})$ are independent of the parameter $\xi$, implying that we can already find a solution for this `one-sided' non-Lipschitz FBSDE when $\xi$ is large enough. In a similar fashion, the corresponding bounds for $(\hat{Q},\hat{Y})$ read
\begin{gather}
    \hat{Y}_t=\mathbb{E}_t\Big[-2A\,\hat{Q}_T- 2\int_t^T\phi_s\,\hat{Q}_s\,ds\Big]\geq(-2)\,|q_0|\,(\bar{A}+\bar{\phi}\,T),\nonumber\\
    q_0\geq\hat{Q}_t=q_0-\int_0^t a_s\,\Lambda^a\big(\tilde{\delta}^{a*}(\hat{Y}_s)\big)\,ds\geq q_0-\bar{a}\,T\,\Lambda^a\Big(\delta^{a*}\big((-2)\,|q_0|\,(\bar{A}+\bar{\phi}\,T)\big)\Big),
\label{q^hat_bound}\\
    \hat{Y}_t\leq 2\,(\bar{A}+\bar{\phi}\, T)\cdot\bigg[|q_0|+\bar{a}\,T\,\Lambda^a\Big(\delta^{a*}\big((-2)\,|q_0|\,(\bar{A}+\bar{\phi}\,T)\big)\Big)\bigg],
\end{gather}
holding for all $t\in[0,T]$. Again, such bounds are independent of the parameter $\xi$.

Set $(Q,Y,M)$ as the solution to the original FBSDE \eqref{nonLip_den_FBSDE}, and we then compute the difference of two equations \eqref{nonLip_den_FBSDE} and \eqref{nonLip_dom_FBSDE_1}. Note that
\begin{equation}
\begin{aligned}
    d(\tilde{Q}_t-Q_t)&=b_t\,\Big[\Lambda^b\big(\tilde{\delta}^{b*}(-\tilde{Y}_t)\big)-\Lambda^b\big(\tilde{\delta}^{b*}(-Y_t)\big)\Big]\,dt+a_t\,\Lambda^a\big(\tilde{\delta}^{a*}(Y_t)\big)\,dt\\
    &=b_t\,\Big[\Lambda^b\big(\tilde{\delta}^{b*}(-\tilde{Y}_t)\big)-\Lambda^b\big(\tilde{\delta}^{b*}(-Y_t)\big)\Big]\,dt\\
    &\hspace{2cm}+a_t\,\Big[\Lambda^a\big(\tilde{\delta}^{a*}(Y_t)\big)-\Lambda^a\big(\tilde{\delta}^{a*}(\tilde{Y}_t)\big)\Big]\,dt+a_t\,\Lambda^a\big(\tilde{\delta}^{a*}(\tilde{Y}_t)\big)\,dt\\
    &=\pi_t\,(\tilde{Y}_t-Y_t)\,dt+a_t\,\Lambda^a\big(\tilde{\delta}^{a*}(\tilde{Y}_t)\big)\,dt,
\end{aligned}
\nonumber
\end{equation}
where we let
\begin{equation}
    \pi_t=\frac{b_t\,\Big[\Lambda^b\big(\tilde{\delta}^{b*}(-\tilde{Y}_t)\big)-\Lambda^b\big(\tilde{\delta}^{b*}(-Y_t)\big)\Big]+a_t\,\Big[\Lambda^a\big(\tilde{\delta}^{a*}(Y_t)\big)-\Lambda^a\big(\tilde{\delta}^{a*}(\tilde{Y}_t)\big)\Big]}{\tilde{Y}_t-Y_t}
    \nonumber
\end{equation}
when $\tilde{Y}_t-Y_t\neq0$, and $\pi_t=0$ in the case of $\tilde{Y}_t-Y_t=0$. Due to the monotonicity properties of $(\Lambda^a,\Lambda^b)$ and $(\delta^{a*},\delta^{b*})$, we know that $\pi\in\mathbb{H}^2$ is a bounded non-negative process. Afterwards, the triplet $(\tilde{Q}-Q,\tilde{Y}-Y,\tilde{Z}-Z)$ solves the following linear FBSDE:
\begin{equation}
\left\{
\begin{aligned}
\;& d\mathcal{Q}_t  = \pi_t\,\mathcal{Y}_t dt+a_t\,\Lambda^a\big(\tilde{\delta}^{a*}(\tilde{Y}_t)\big)\,dt, \\
& d\mathcal{Y}_t=2\phi_t\,\mathcal{Q}_t\,dt+\mathcal{Z}_t\,dW_t,\\
& \mathcal{Q}_0=0,\quad \mathcal{Y}_T=-2A\,\mathcal{Q}_T,
\nonumber
\end{aligned}
\right.
\end{equation}
which accepts a unique solution as a simple variation of Lemma \ref{growth_lip_fbsde}. To solve this equation, its linear structure suggests the affine ansatz $\mathcal{Y}_t=\mathscr{A}_t\,\mathcal{Q}_t+\mathscr{B}_t$, where processes $\mathscr{A}$ and $\mathscr{B}$ satisfy two coupled BSDEs:
\begin{gather*}
    d\mathscr{A}_t=\big(-\pi_t\,(\mathscr{A}_t)^2+2\phi_t\big)\,dt+\mathscr{Z}^{\mathscr{A}}_t\,dW_t, \quad\text{with}\quad \mathscr{A}_T=-2A;\\
    d\mathscr{B}_t=-\Big[\pi_t\,\mathscr{A}_t\,\mathscr{B}_t+\mathscr{A}_t\,a_t\,\Lambda^a\big(\tilde{\delta}^{a*}(\tilde{Y}_t)\big)\Big]\,dt+\mathscr{Z}^{\mathscr{B}}_t\,dW_t, \quad\text{with}\quad \mathscr{B}_T=0.
\nonumber
\end{gather*}
We already know the first BSDE accepts a solution $\mathscr{A}$ that is negative and bounded by $2(\bar{A}+\Bar{\phi}\,T)$. In turn, the solution of the second linear BSDE is given explicitly by
\begin{equation} \mathscr{B}_t=\mathbb{E}_t\Big[\int_t^T\mathscr{A}_s\,a_s\,\Lambda^a\big(\tilde{\delta}^{a*}(\tilde{Y}_s)\big)\,e^{\int_t^s\pi_u\,\mathscr{A}_u\,du}\,ds\Big]\leq0.
\nonumber
\end{equation}
Note that the expression inside the conditional expectation is uniformly bounded on $[0,T]$ by some constant independent of $\xi$. Since $\mathcal{Q}_t$ satisfies the simple ODE
\begin{equation*}
    d\mathcal{Q}_t=\pi_t\,(\mathscr{A}_t\,\mathcal{Q}_t+\mathscr{B}_t)\,dt+a_t\,\Lambda^a\big(\tilde{\delta}^{a*}(\tilde{Y}_t)\big)\,dt,
\end{equation*}
the solution can be also given precisely by
\begin{equation*}
    \mathcal{Q}_t=\int_0^t \Big[\pi_s\,\mathscr{B}_s+a_s\,\Lambda^a\big(\tilde{\delta}^{a*}(\tilde{Y}_s)\big)\Big]\,e^{\int_s^t\pi_u\,\mathscr{A}_u\,du}\,ds.
\end{equation*}
By definition, we further deduce
\begin{equation*}
\begin{aligned}
    \tilde{Q}_t-Q_t&=\int_0^t \Big[\pi_s\,\mathscr{B}_s+a_s\,\Lambda^a\big(\tilde{\delta}^{a*}(\tilde{Y}_s)\big)\Big]\,e^{\int_s^t\pi_u\,\mathscr{A}_u\,du}\,ds\\
    &\leq\int_0^t a_s\,\Lambda^a\big(\tilde{\delta}^{a*}(\tilde{Y}_s)\big)\,ds,\\
    Q_t&\geq\tilde{Q}_t-\int_0^t a_s\,\Lambda^a\big(\tilde{\delta}^{a*}(\tilde{Y}_s)\big)\,ds.
\end{aligned}
\end{equation*}
The uniform boundedness of $(\tilde{Q},\tilde{Y})$ implies that $Q$ is bounded below by a constant that is independent of $\xi$. On the other hand, to search for an upper bound for $Q$, we study the relation between $(Q,Y,Z)$ and  $(\hat{Q},\hat{Y},\hat{Z})$. Taking the difference of \eqref{nonLip_den_FBSDE} and \eqref{nonLip_dom_FBSDE_2}, we apply a similar transform to see
\begin{equation}
\begin{aligned}
    d(\hat{Q}_t-Q_t)&=-b_t\,\Lambda^b\big(\tilde{\delta}^{b*}(-Y_t)\big)\,dt-a_t\,\Big[\Lambda^a\big(\tilde{\delta}^{a*}(\hat{Y}_t)\big)-\Lambda^a\big(\tilde{\delta}^{a*}(Y_t)\big)\Big]\,dt\\
    &=b_t\,\Big[\Lambda^b\big(\tilde{\delta}^{b*}(-\hat{Y}_t)\big)-\Lambda^b\big(\tilde{\delta}^{b*}(-Y_t)\big)\Big]\,dt\\
    &\hspace{2cm}+a_t\,\Big[\Lambda^a\big(\tilde{\delta}^{a*}(Y_t)\big)-\Lambda^a\big(\tilde{\delta}^{a*}(\hat{Y}_t)\big)\Big]\,dt-b_t\,\Lambda^b\big(\tilde{\delta}^{b*}(\hat{Y}_t)\big)\,dt\\
    &=\kappa_t\,(\hat{Y}_t-Y_t)\,dt-b_t\,\Lambda^b\big(\tilde{\delta}^{b*}(\hat{Y}_t)\big)\,dt,
\end{aligned}
\nonumber
\end{equation}
where we let
\begin{equation*}
    \kappa_t=\frac{b_t\,\Big[\Lambda^b\big(\tilde{\delta}^{b*}(-\hat{Y}_t)\big)-\Lambda^b\big(\tilde{\delta}^{b*}(-Y_t)\big)\Big]+a_t\,\Big[\Lambda^a\big(\tilde{\delta}^{a*}(Y_t)\big)-\Lambda^a\big(\tilde{\delta}^{a*}(\hat{Y}_t)\big)\Big]}{\hat{Y}_t-Y_t}
\end{equation*}
when $\hat{Y}_t-Y_t\neq0$, and $\kappa_t=0$ in the case of $\hat{Y}_t-Y_t=0$. One can see that $\kappa\in\mathbb{H}^2$ is non-negative and bounded as well. Similarly, the triplet $(\hat{Q}_t-Q_t,\hat{Y}_t-Y_t,\hat{Z}_t-Z_t)$ solves the following (well-posed) linear FBSDE:
\begin{equation}
\left\{
\begin{aligned}
\;& d\mathcal{Q}_t  = \kappa_t\,\mathcal{Y}_t dt-b_t\,\Lambda^b\big(\tilde{\delta}^{b*}(\hat{Y}_t)\big)\,dt, \\
& d\mathcal{Y}_t=2\phi_t\,\mathcal{Q}_t\,dt+\mathcal{Z}_t\,dW_t,\\
& \mathcal{Q}_0=0,\quad \mathcal{Y}_T=-2A\,\mathcal{Q}_T.
\nonumber
\end{aligned}
\right.
\end{equation}
The linear structure again advocates the affine ansatz $\mathcal{Y}_t=\mathscr{P}_t\,\mathcal{Q}_t+\mathscr{H}_t$, with processes $\mathscr{P}$ and $\mathscr{H}$ solving two coupled BSDEs
\begin{gather*}
    d\mathscr{P}_t=(-\kappa_t\,\mathscr{P}_t^2+2\phi_t)\,dt+\mathscr{Z}^{\mathscr{P}}_t\,dW_t, \quad\text{with}\quad \mathscr{P}_T=-2A;\\
    d\mathscr{H}_t=-\Big[\kappa_t\,\mathscr{P}_t\,\mathscr{H}_t-\mathscr{P}_t\,b_t\,\Lambda^b\big(\tilde{\delta}^{b*}(\hat{Y}_t)\big)\Big]\,dt+\mathscr{Z}^{\mathscr{H}}_t\,dW_t, \quad\text{with}\quad \mathscr{H}_T=0.
\nonumber
\end{gather*}
We know the first BSDE accepts a solution $\mathscr{P}\in\mathbb{H}^2$ that is negative and bounded. In turn, the solution of the second linear BSDE is given explicitly by
\begin{equation}
    \mathscr{H}_t=-\,\mathbb{E}_t\Big[\int_t^T\mathscr{P}_s\,b_s\,\Lambda^b\big(\tilde{\delta}^{b*}(\hat{Y}_s)\big)\,e^{\int_t^s\kappa_u\,\mathscr{P}_u\,du}\,ds\Big]\geq 0.
\nonumber
\end{equation}
Note that $\mathscr{H}$ is also uniformly bounded by some constant being independent of $\xi$. As $\mathcal{Q}$ can be represented by
\begin{equation*}
    \mathcal{Q}_t=\int_0^t \Big[\kappa_s\,\mathscr{H}_s-b_s\,\Lambda^b\big(\tilde{\delta}^{b*}(\hat{Y}_s)\big)\Big]\,e^{\int_s^t\kappa_u\,\mathscr{P}_u\,du}\,ds,
\end{equation*}
by definition we observe
\begin{equation*}
\begin{aligned}
    \hat{Q}_t-Q_t&=\int_0^t \Big[\kappa_s\,\mathscr{H}_s-b_s\,\Lambda^b\big(\tilde{\delta}^{b*}(\hat{Y}_s)\big)\Big]\,e^{\int_s^t\kappa_u\,\mathscr{P}_u\,du}\,ds\\
    &\geq-\int_0^t b_s\,\Lambda^b\big(\tilde{\delta}^{b*}(\hat{Y}_s)\big)\,ds,\\
    Q_t&\leq\hat{Q}_t+\int_0^t b_s\,\Lambda^b\big(\tilde{\delta}^{b*}(\hat{Y}_s)\big)\,ds.
\end{aligned}
\end{equation*}
The uniform boundedness of $(\hat{Q},\hat{Y})$ suggests that $Q$ is bounded above by a constant that is independent of $\xi$. While now we know $Q$ is uniformly lower bounded, it leads to the uniform boundedness of $Y$ since
\begin{equation*}
    Y_t=\,\mathbb{E}_t\Big[-2A\,Q_T- 2\int_t^T\phi_s\, Q_s\,ds\Big].
\end{equation*}
In consequence, by picking $\xi$ large enough, the solution $(Q,Y,Z)$ to the Lipschitz FBSDE \eqref{nonLip_den_FBSDE} also solves the non-Lipschitz \eqref{non_lip_fbsde}. The uniqueness again follows from the continuation method in Lemma \ref{growth_lip_fbsde}.

With respect to the monotonicity, it suffices to apply Theorem \ref{control_mono} and the fact that the truncation has been removed. 
\end{proof}

\section{Non-Markovian Order Flows: Unbounded Coefficients} \label{section_4}
\noindent In the previous two cases, the coefficients---order flows $(a, b)$ and penalisation parameters $(\phi, A)$---are all bounded. Since we have mentioned the possibility of unbounded order flows in Remark \ref{unbound_order_flow}, this section is devoted to such situations. We present the result in the case of linear intensity function, i.e., $\Lambda(\delta)=\zeta-\gamma\,\delta$ for some constants $\zeta, \gamma>0$. 

We start by introducing proper spaces for the order flows $(a, b)$ and penalisation parameters $(\phi, A)$.

\begin{assumption}
Let $A\in L^{8\mathscr{D}}(\mathcal{F}_T, \mathbb{R}_+)$, $a$, $b\in\mathbb{S}^{8\mathscr{D}}$, and $\phi\in\mathbb{H}^{8\mathscr{D}}$, where $\mathscr{D}\in\mathbb{N}$ will be specified later.
\end{assumption}

\noindent Since the intensity function is $\Lambda(\delta)=\zeta-\gamma\,\delta$, we recall the inventory and cash processes of the market maker as follows:
\begin{gather*}
        Q_t = q_0-\int_0^t a_u(\zeta-\gamma \delta^a_u)\,du+\int_0^t b_u(\zeta-\gamma \delta^b_u)\,du,\\
        \nonumber
    X_t = x_0+\int_0^t a_u(\zeta-\gamma \delta^a_u)\,(S_u+\delta_u^a)\,du-\int_0^t b_u(\zeta-\gamma \delta^b_u)\,(S_u-\delta_u^b)\,du,
    \nonumber
\end{gather*}
where $\delta^a,\delta^b\in\mathbb{H}^8$ represent the control. Note that here we seek strategies in the space $\mathbb{H}^8$ to make sure the associated inventory $Q$ is still in $\mathbb{H}^4$ (and indeed in $\mathbb{S}^4$). Hence, the following objective functional is well-defined and can be simplified as before:
\begin{equation*}
\begin{aligned}
    J(\boldsymbol{\delta}):&=\mathbb{E}\big[X_T+S_T\,Q_T-\int_0^T \phi_t\,(Q_t)^2\,dt-A\,(Q_T)^2 \big]\\
   &=\mathbb{E}\Big[\int_0^T\delta_t^a\,a_t(\zeta-\gamma\delta_t^a)\,dt+\int_0^T\delta_t^b\,b_t(\zeta-\gamma\delta_t^b)\,dt-\int_0^T \phi_t\,(Q_t)^2\,dt-A\,(Q_T)^2 \Big],
\end{aligned}
\end{equation*}
by controlling $\boldsymbol{\delta}\in\mathbb{H}^8\times\mathbb{H}^8$. Indeed, it suffices to observe
\begin{equation*}
\begin{aligned}
    \big|\int_0^T\delta_t^a\,a_t(\zeta-\gamma\delta_t^a)\,dt\big|&\leq C\,\sup_{0\leq t\leq T}|a_t|^2+C\,\big|\int_0^T(\delta_t^a)^2\,dt\big|+C\,\big|\int_0^T(\delta_t^a)^2\,dt\big|^2,\\
    \big|\int_0^T \phi_t\,(Q_t)^2\,dt\big|&\leq C\,\sup_{0\leq t\leq T}|Q_t|^4+C\,\big|\int_0^T(\phi_t)^2\,dt\big|
\end{aligned}
\end{equation*}
to justify the integrability. Since the convex nature of the objective functional $J$ is revealed in the proof of Theorem \ref{verification}, the convex-analytic method is used to characterize the (unique) optimal control as the solution of a FBSDE.

\begin{theorem}
\label{unbound_linear_fbsde}
    The functional $\boldsymbol{\delta}\hookrightarrow J(\boldsymbol{\delta})$ is G\^ateaux differentiable and the derivative in the direction of $\boldsymbol{\beta}:=(\beta^a, \beta^b)\in\mathbb{H}^8\times\mathbb{H}^8$ reads
\begin{equation}
\begin{aligned}
    \frac{d}{d\epsilon}&J(\boldsymbol{\delta}+\epsilon\boldsymbol{\beta})\big|_{\epsilon=0}:=\lim_{\epsilon\searrow0}\frac{1}{\epsilon}\,\big[J(\boldsymbol{\delta}+\epsilon\boldsymbol{\beta})-J(\boldsymbol{\delta})\big]\\
    &=\mathbb{E}\Big[\int_0^T\big(a_t\,\beta_t^a\,(\zeta-2\gamma\,\delta_t^a)+b_t\,\beta_t^b\,(\zeta-2\gamma\,\delta_t^b)\big)\,dt-2\,\int_0^T\phi_t\,V_t\,Q_t\,dt-2\,A\,V_T\,Q_T\Big],
    \label{liner_gat_deriv}
\end{aligned}
\end{equation}
where $V_t=\gamma\int_0^t(a_s\,\beta_s^a-b_s\,\beta_s^b)\,ds$. A control $(\delta^{*,a},\delta^{*,b})\in\mathbb{H}^8\times\mathbb{H}^8$ maximises the functional $J$ if it can be represented by
\begin{equation}
    \delta_t^{b,*}=\frac{\zeta}{2\gamma}-\frac{1}{2}Y_t, \text{\; and \;} \delta_t^{a,*}=\frac{\zeta}{2\gamma}+\frac{1}{2}Y_t,
    \label{optimal_feedback_unbound}
\end{equation}
where $Y$ solves the following FBSDE:
\begin{equation*}
\left\{
\begin{aligned}
\;& dQ_t  = \zeta\,(b_t-a_t)\,dt\,/\,2+\gamma\,(a_t+b_t)\,Y_t\,dt\,/\,2, \\
& dY_t=2\,\phi_t\,Q_t\,dt+Z_t\,dW_t,\\
& Q_0=q_0,\quad Y_T=-2A\,Q_T.
\end{aligned}
\right.
\end{equation*}
The representation \eqref{optimal_feedback_unbound} is also necessary when $a,b>0$.
\end{theorem}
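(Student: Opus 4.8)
\emph{Proof proposal.} The governing idea is to reduce everything to the strict concavity of $J$, which was already extracted in the proof of Theorem~\ref{verification} (the running profit $\delta^a a(\zeta-\gamma\delta^a)+\delta^b b(\zeta-\gamma\delta^b)$ is a negative‑definite quadratic in $\boldsymbol\delta$, while $Q^2$ is convex in $\boldsymbol\delta$ since $Q$ is affine in $\boldsymbol\delta$). Once $J$ is known to be G\^ateaux differentiable, a control is optimal if and only if it is a critical point, and such a critical point is unique. Thus the three assertions become: compute the derivative, check the feedback control in \eqref{optimal_feedback_unbound} annihilates it, and (when $a,b>0$) deduce the representation from the vanishing of the derivative. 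For the first step I would note that $\epsilon\mapsto Q^{\boldsymbol\delta+\epsilon\boldsymbol\beta}_t=Q^{\boldsymbol\delta}_t+\epsilon V_t$ with $V_t=\gamma\int_0^t(a_s\beta^a_s-b_s\beta^b_s)\,ds\in\mathbb{S}^4$, so $\epsilon\mapsto J(\boldsymbol\delta+\epsilon\boldsymbol\beta)$ is literally a quadratic polynomial in $\epsilon$ whose three coefficients are finite by H\"older's inequality using $a,b\in\mathbb{S}^{8\mathscr{D}}$, $\phi\in\mathbb{H}^{8\mathscr{D}}$, $A\in L^{8\mathscr{D}}$ and $\boldsymbol\delta,\boldsymbol\beta\in\mathbb{H}^8$; reading off the linear coefficient gives exactly \eqref{liner_gat_deriv}, with no limiting argument beyond finiteness needed.

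For sufficiency, let $(Q,Y,Z)$ solve the FBSDE and define $\boldsymbol\delta^{*}$ by \eqref{optimal_feedback_unbound}. Then $\zeta-2\gamma\delta^{a,*}_t=-\gamma Y_t$ and $\zeta-2\gamma\delta^{b,*}_t=\gamma Y_t$, so the first expectation in \eqref{liner_gat_deriv} equals $-\mathbb{E}\int_0^T Y_t\,dV_t$. The product rule applied to $V_tY_t$ — no covariation term, since $V$ has finite variation — yields
\begin{equation*}
  \mathbb{E}\big[V_TY_T\big]=\mathbb{E}\Big[\int_0^T Y_t\,dV_t+2\int_0^T\phi_t V_t Q_t\,dt\Big]+\mathbb{E}\Big[\int_0^T V_t Z_t\,dW_t\Big],
\end{equation*}
where the stochastic integral has zero expectation because $V\in\mathbb{S}^4$ and $Z$ inherits sufficient integrability from the FBSDE solution (this is where the exponent $\mathscr{D}$ is used). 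Since $V_0=0$ and $Y_T=-2AQ_T$, substituting this identity into \eqref{liner_gat_deriv} makes the whole expression vanish for every $\boldsymbol\beta$; by concavity $J(\boldsymbol\delta)\le J(\boldsymbol\delta^{*})$ for all admissible $\boldsymbol\delta$, so $\boldsymbol\delta^{*}$ is optimal.

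For necessity when $a,b>0$, let $\boldsymbol\delta^{*}$ be optimal with inventory $Q^{*}$; by concavity \eqref{liner_gat_deriv} vanishes in every direction. Introduce $(Y,Z)$ as the unique solution of the linear BSDE $dY_t=2\phi_t Q^{*}_t\,dt+Z_t\,dW_t$, $Y_T=-2AQ^{*}_T$, which is well posed and integrable enough because $Q^{*}\in\mathbb{S}^4$ and $\phi,A$ have high moments (in fact $Y=2\delta^{a,*}-\zeta/\gamma\in\mathbb{H}^8$ once the representation is established). Running the integration by parts of the previous step in reverse rewrites the vanishing derivative as
\begin{equation*}
  \mathbb{E}\Big[\int_0^T a_t\beta^a_t\big(\zeta-2\gamma\delta^{a,*}_t+\gamma Y_t\big)\,dt+\int_0^T b_t\beta^b_t\big(\zeta-2\gamma\delta^{b,*}_t-\gamma Y_t\big)\,dt\Big]=0
\end{equation*}
for all $\boldsymbol\beta\in\mathbb{H}^8\times\mathbb{H}^8$. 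Taking $\beta^b\equiv0$ and letting $\beta^a$ range over bounded processes, and noting that $a_t(\zeta-2\gamma\delta^{a,*}_t+\gamma Y_t)\in\mathbb{H}^2$ by H\"older, a standard localization forces $a_t(\zeta-2\gamma\delta^{a,*}_t+\gamma Y_t)=0$ $\,dt\otimes d\mathbb{P}$-a.e.; since $a>0$ this gives $\delta^{a,*}_t=\frac{\zeta}{2\gamma}+\frac12 Y_t$, and symmetrically $\delta^{b,*}_t=\frac{\zeta}{2\gamma}-\frac12 Y_t$. Plugging these into the inventory dynamics recovers the forward equation $dQ^{*}_t=\frac{\zeta}{2}(b_t-a_t)\,dt+\frac{\gamma}{2}(a_t+b_t)Y_t\,dt$, so $(Q^{*},Y,Z)$ solves the FBSDE.

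The only genuinely delicate point is the integrability bookkeeping in the sufficiency step: one must ensure $\int_0^{\cdot}V_t Z_t\,dW_t$ is a true martingale (equivalently, has vanishing expectation), which is why the data are placed in $L^{8\mathscr{D}}$-type spaces with $\mathscr{D}$ chosen large enough that the FBSDE solution carries the required moments; everything else is either the quadratic structure of $J$ or a one-line It\^o computation.
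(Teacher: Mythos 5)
Your proof is correct and follows essentially the same convex-analytic route as the paper: compute the G\^ateaux derivative, use concavity of $J$ to reduce optimality to the vanishing of the first variation, then an integration by parts to extract the feedback representation and the FBSDE. The observation that $\epsilon\mapsto J(\boldsymbol\delta+\epsilon\boldsymbol\beta)$ is a finite quadratic polynomial in $\epsilon$ (so the derivative is just the linear coefficient) is a cleaner way to obtain differentiability than the paper's term-by-term dominated-convergence argument, and your It\^o integration by parts of $V_tY_t$ against the adjoint BSDE is equivalent to the Fubini swap the paper performs directly on the double integral, with the integrability bookkeeping (i.e.\ $V\in\mathbb{S}^{p}$ for suitable $p$, $Z\in\mathbb{H}^2$ so $\int_0^{\cdot}V_tZ_t\,dW_t$ is a true martingale) correctly identified as the role of the $L^{8\mathscr{D}}$-type moment assumptions.
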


\begin{proof}
Fix an admissible control $\boldsymbol{\delta}=(\delta^a, \delta^b)\in \mathbb{H}^8\times\mathbb{H}^8$ and denote by $Q = Q^{\boldsymbol{\delta}}\in\mathbb{S}^2$ the corresponding controlled inventory. Next, consider $\boldsymbol{\beta}=(\beta^a, \beta^b)\in \mathbb{H}^8\times\mathbb{H}^8$---that is also uniformly bounded---as the direction in which we are computing the
G\^ateaux derivative of $J$. For each $\epsilon > 0$,
we consider the admissible control $\boldsymbol{\delta}^\epsilon\in\mathbb{H}^8\times\mathbb{H}^8$ defined by $\boldsymbol{\delta}^\epsilon_t = \boldsymbol{\delta}_t + \epsilon\,\boldsymbol{\beta}_t$, and the corresponding controlled state is denoted by $Q^\epsilon:= Q^{\boldsymbol{\delta}^\epsilon}\in\mathbb{S}^2$. Note that $V\in\mathbb{S}^{8\mathscr{D}}$. Direct calculations yield
\begin{equation}
\begin{aligned}
\label{compute_deriva}
    \frac{1}{\epsilon}\,&\big[J(\boldsymbol{\delta}+\epsilon\boldsymbol{\beta})-J(\boldsymbol{\delta})\big]\\
    &=\mathbb{E}\Big[\int_0^Ta_t\,\beta_t^a\,(\zeta-2\,\gamma\,\delta_t^a-\epsilon\,\gamma\,\beta_t^a)\,dt\Big]+\mathbb{E}\Big[\int_0^Tb_t\,\beta_t^b\,(\zeta-2\,\gamma\,\delta_t^b-\epsilon\,\gamma\,\beta_t^b)\,dt\Big]\\
    &\hspace{1cm}-\mathbb{E}\Big[\int_0^T \phi_t\,(Q_t^\epsilon+Q_t)\,V_t\,dt\Big]-\mathbb{E}\Big[A\,(Q_T^\epsilon+Q_T)\,V_T\Big].
\end{aligned}
\end{equation}
We look at the first term on the right hand side of \eqref{compute_deriva}. To perform the limiting procedure, note that
\begin{equation*}
\begin{aligned}
    a_t\,\beta_t^a\,(\zeta-2\,\gamma\,\delta_t^a-\epsilon\,\gamma\,\beta_t^a)\,dt & \leq a_t\,|\beta_t^a|\,\big(\zeta+2\,\gamma\,|\delta_t^a|+\gamma\,|\beta_t^a|\big),\\
\end{aligned}
\end{equation*}
the right hand side of which is $d\mathbb{P}\times dt$-integrable. Therefore, it follows from the dominated convergence theorem that
\begin{equation*}
    \lim_{\epsilon\to 0}\,\mathbb{E}\Big[\int_0^Ta_t\,\beta_t^a\,(\zeta-2\,\gamma\,\delta_t^a-\epsilon\,\gamma\,\beta_t^a)\,dt\Big]=\mathbb{E}\Big[\int_0^Ta_t\,\beta_t^a\,(\zeta-2\,\gamma\,\delta_t^a)\,dt\Big].
\end{equation*}
The second term of \eqref{compute_deriva} can be verified in the same way. With respect to third term, we are aware of
\begin{equation*}
\begin{aligned}
    \int_0^T \phi_t\,(Q_t^\epsilon+Q_t)\,V_t\,dt&\leq C\int_0^T \phi_t^2\,\big(|Q_t^\epsilon|+|Q_t|\big)\,dt+C\int_0^T V_t^2\,\big(|Q_t^\epsilon|+|Q_t|\big)\,dt,\\
    &\leq C\sup_{t\in[0,T]}\big(|Q_t^\epsilon|^2+|Q_t|^2\big)+C\,\Big(\int_0^T \phi_t^2\,dt\Big)^2+C\,\Big(\int_0^T V_t^2\,dt\Big)^2\\
    &\leq C\sup_{t\in[0,T]}|Q_t|^2+C\,\Big(\int_0^T (a_t+b_t)\,dt\Big)^2+C\,\Big(\int_0^T \phi_t^2\,dt\Big)^2\\
    &\hspace{2.8cm}+C\,\Big(\int_0^T V_t^2\,dt\Big)^2\\
    \phi_t\,(Q_t^\epsilon+Q_t)\,V_t\,&\leq C\,|Q_t|^2+C\,\Big(\int_0^T (a_t+b_t)\,dt\Big)^2+C\,\phi_t^2+C\,V_t^2\\
\end{aligned}
\end{equation*}
and the integrability on the right hand sides, and hence can conclude
\begin{equation*}
    \lim_{\epsilon\to 0}\,\mathbb{E}\Big[\int_0^T \phi_t\,(Q_t^\epsilon+Q_t)\,V_t\,dt\Big]=2\,\mathbb{E}\Big[\int_0^T \phi_t\,Q_t\,V_t\,dt\Big].
\end{equation*}
Since the final term can be proved similarly, the equation \eqref{compute_deriva} is verified and an integration by parts further yields
\begin{equation}
\begin{aligned}
    \frac{d}{d\epsilon}J(\boldsymbol{\delta}+\epsilon&\boldsymbol{\beta})\big|_{\epsilon=0}\\
    &=\mathbb{E}\bigg\{\int_0^T\Big[a_t\,\beta_t^a\,\big(\zeta-2\gamma\,\delta_t^a-2\gamma\int_t^T\phi_s\,Q_s\,ds-2\gamma\, A \, Q_T\big)\Big]\,dt\bigg\}\\
    &\hspace{2cm}+\mathbb{E}\bigg\{\int_0^T\Big[b_t\,\beta_t^b\,\big(\zeta-2\gamma\,\delta_t^b-2\gamma\int_t^T\phi_s\,Q_s\,ds+2\gamma\, A\, Q_T \big)\Big]\,dt\bigg\}.
    \label{derive_fbsde}
\end{aligned}
\end{equation}
Since the functional $J$ is concave, a control $\boldsymbol{\delta}^*=(\delta^{*,a}, \delta^{*,b})\in\mathbb{H}^8\times\mathbb{H}^8$ is optimal if and only if \eqref{derive_fbsde} is equal to $0$ for any bounded $\boldsymbol{\beta}\in\mathbb{H}^8\times\mathbb{H}^8$. With the law of total expectation, this infers that
\begin{equation*}
\begin{aligned}
    \gamma\,Y_t=\zeta-2\gamma\,\delta_t^{*,a}&=\mathbb{E}_t\Big[2\gamma\int_t^T\phi_s\,Q_s\,ds+2\gamma\, A \, Q_T\Big],\\
    \gamma\,Y_t=2\gamma\,\delta_t^{*,b}-\zeta&=\mathbb{E}_t\Big[2\gamma\int_t^T\phi_s\,Q_s\,ds+2\gamma\, A\, Q_T\Big]
\end{aligned}
\end{equation*}
$d\mathbb{P} \times dt$-almost everywhere. The FBSDE can be obtained by `differentiating' $Y$. 
\end{proof}

\noindent The remaining of this section is devoted the well-posedness of the FBSDE. Before looking at the FBSDE associated with the stochastic maximum principle, we first investigate the cornerstone of its well-posedness theory---a quadratic BSDE: 
\begin{equation}
    dY_t=\big(-\gamma\,(a_t+b_t)\,(Y_t)^2+2\,\phi_t\big)\,dt+Z_t\,dW_t, \text{\quad with \quad} Y_T=-2\,A,
    \label{quad_bsde_unbound_coeff}
\end{equation}
the coefficients $a, b, \phi, A$ of which are not bounded. Note that the $(Y,Z)$ here is different from the one in the FBSDE, and here we only focus on the existence of solutions. We localize this equation via the truncation:
\begin{equation}
    dY_t=\big(-\gamma\,\big((a_t+b_t)\wedge m\big)\,(Y_t)^2+2\,(\phi_t\wedge n)\big)\,dt+Z_t\,dW_t, \text{\quad with \quad} Y_T=-2\,(A\wedge n),
    \label{trun_bsde_unbound_coeff}
\end{equation}
where $m, n\in\mathbb{N}$. First, we try to remove the constant $n$.

\begin{lemma}
The BSDE
\begin{equation}
    dY_t=\big(-\gamma\,\big((a_t+b_t)\wedge m\big)\,(Y_t)^2+2\,\phi_t\big)\,dt+Z_t\,dW_t, \text{\quad with \quad} Y_T=-2\,A,
        \label{half_trun_bsde_unbound_coeff}
\end{equation}
accepts a solution $(Y, Z)\in\mathbb{S}^{8\mathscr{D}}\times\mathbb{H}^2$.
\end{lemma}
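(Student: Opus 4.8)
The plan is to obtain \eqref{half_trun_bsde_unbound_coeff} by removing the index $n$ from the fully truncated equation \eqref{trun_bsde_unbound_coeff} via a monotone approximation, since \eqref{half_trun_bsde_unbound_coeff} is exactly what one gets by letting $n\to\infty$ in both the free term $2(\phi_t\wedge n)$ and the terminal value $-2(A\wedge n)$. The starting observation is that, for each \emph{fixed} pair $(m,n)$, equation \eqref{trun_bsde_unbound_coeff} is a particular instance of the BSDE \eqref{nonlip_bsde} treated inside the proof of Lemma \ref{growth_lip_fbsde}: its data $\mu_t=\gamma\big((a_t+b_t)\wedge m\big)\le\gamma m$, $\phi_t\wedge n\le n$ and $\nu_T=2(A\wedge n)\le 2n$ are bounded and non-negative, so it admits a unique solution $(Y^{m,n},Z^{m,n})\in\mathbb{S}^2\times\mathbb{H}^2$ with $Y^{m,n}$ non-positive (and, for fixed $n$, bounded). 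The real work is then to produce bounds that are uniform in $n$ and a monotone limit.

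The uniform bound comes from a Feynman--Kac/linearisation representation. Applying It\^o's formula to $Y^{m,n}_t\exp\!\big(\int_0^t\gamma((a_s+b_s)\wedge m)\,Y^{m,n}_s\,ds\big)$ — the relevant stochastic integral being a genuine martingale because all factors are bounded for fixed $n$ — yields
\begin{equation*}
Y^{m,n}_t=\mathbb{E}_t\Big[-2(A\wedge n)\,e^{\int_t^T\gamma((a_s+b_s)\wedge m)Y^{m,n}_s\,ds}-2\int_t^T(\phi_s\wedge n)\,e^{\int_t^s\gamma((a_u+b_u)\wedge m)Y^{m,n}_u\,du}\,ds\Big].
\end{equation*}
Since $Y^{m,n}\le 0$ and $(a+b)\wedge m\ge 0$, every exponential lies in $(0,1]$, so $0\ge Y^{m,n}_t\ge-\Theta_t$ with $\Theta_t:=2\,\mathbb{E}_t\big[A+\int_t^T\phi_s\,ds\big]$, a process independent of $(m,n)$. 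Using $A\in L^{8\mathscr{D}}$, $\int_0^T\phi_s\,ds\le\sqrt T\,(\int_0^T\phi_s^2\,ds)^{1/2}\in L^{8\mathscr{D}}$ (as $\phi\in\mathbb{H}^{8\mathscr{D}}$) together with Doob's inequality gives $\sup_{0\le t\le T}\Theta_t\in L^{8\mathscr{D}}$, hence $\sup_{m,n}\mathbb{E}\big[\sup_t|Y^{m,n}_t|^{8\mathscr{D}}\big]<\infty$; this is precisely where the exponent $8\mathscr{D}$ of the standing assumption is consumed.

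Next I would establish monotonicity in $n$ (for fixed $m$). Writing $(Y^{m,n})^2-(Y^{m,n'})^2=(Y^{m,n}+Y^{m,n'})\,\Delta^{n,n'}$ with $\Delta^{n,n'}:=Y^{m,n}-Y^{m,n'}$, the process $\Delta^{n,n'}$ solves a \emph{linear} BSDE with terminal value $-2(A\wedge n-A\wedge n')$, free term $2(\phi\wedge n-\phi\wedge n')$, and linear coefficient $\lambda_t:=-\gamma((a_t+b_t)\wedge m)\big(Y^{m,n}_t+Y^{m,n'}_t\big)\ge 0$ (because $Y^{m,n},Y^{m,n'}\le 0$). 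Solving it explicitly with the integrating factor $\exp(\int_0^\cdot\lambda_s\,ds)\ge 1$ — which is what keeps the associated stochastic integral a true martingale despite $\lambda$ being unbounded — shows $\Delta^{n,n'}_t\le 0$ for $n'<n$, so $Y^{m,n}$ is non-increasing in $n$; being bounded below by $-\Theta$, it converges a.s.\ to some $Y^m$ with $-\Theta_t\le Y^m_t\le 0$, and the same representation bounds $|Y^{m,n}_t-Y^{m,n'}_t|$ by $\mathbb{E}_t\big[2(A-A\wedge n')+2\int_t^T(\phi_s-\phi_s\wedge n')\,ds\big]$, whence $Y^{m,n}\to Y^m$ in $\mathbb{S}^p$ for every $p<8\mathscr{D}$ by dominated convergence.

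Finally, to pass to the limit: an It\^o computation on $(\Delta^{n,n'})^2$, combined with the uniform $\Theta$-bound, shows $(Z^{m,n})_n$ is Cauchy in $\mathbb{H}^2$, so $Z^{m,n}\to Z^m\in\mathbb{H}^2$; the drift $-\gamma((a_s+b_s)\wedge m)(Y^{m,n}_s)^2+2(\phi_s\wedge n)$ converges in $L^1(d\mathbb{P}\times dt)$ (dominated by $\gamma m\,\Theta_s^2+2\phi_s$), and letting $n\to\infty$ in the integral form of \eqref{trun_bsde_unbound_coeff} identifies $(Y^m,Z^m)$ as a solution of \eqref{half_trun_bsde_unbound_coeff}, with $Y^m$ continuous and $Y^m\in\mathbb{S}^{8\mathscr{D}}$ inherited from $|Y^m|\le\Theta$. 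I expect the only genuine obstacle to be the integrability bookkeeping in the last two steps: the $Z$-Cauchy estimate brings in terms of the form $\Theta^3$ and $\Theta\,\phi$, and the drift passage brings in $\Theta^2$, so one must check these are $d\mathbb{P}\times dt$-integrable — this, rather than any single conceptual point, is what dictates how large $\mathscr{D}$ must be and why the assumption is phrased with the exponent $8\mathscr{D}$.
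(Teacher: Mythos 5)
Your proposal follows essentially the same strategy as the paper: for each fixed $m$, solve the $(m,n)$-truncated equation \eqref{trun_bsde_unbound_coeff}, establish a uniform $n$-independent bound dominated by $\Theta_t=2\,\mathbb{E}_t\big[A+\int_t^T\phi_s\,ds\big]$, show that $Y^{(n)}$ is monotone in $n$, and pass to the monotone limit in the integral form. The differences are purely technical. For the uniform bound, you use the exponential (Feynman--Kac) linearisation $Y_t=\mathbb{E}_t[-2(A\wedge n)e^{\int_t^T\cdots}-\cdots]$, whereas the paper argues directly from $Y^{(n)}\le 0$ in the additive representation $Y_t^{(n)}=\mathbb{E}_t[-2(A\wedge n)-2\int_t^T(\phi_s\wedge n)\,ds+\gamma\int_t^T\cdots(Y_s^{(n)})^2\,ds]$; both exploit sign and the same Doob inequality, so this is cosmetic. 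For monotonicity, you re-derive it by writing a linear BSDE for $\Delta^{n,n'}$ and inverting with an integrating factor, where the paper simply invokes the comparison theorem for (locally) Lipschitz BSDEs; again equivalent, though your aside that the integrating factor is needed ``despite $\lambda$ being unbounded'' is misplaced, since $\lambda_t=-\gamma((a_t+b_t)\wedge m)(Y^{m,n}_t+Y^{m,n'}_t)$ is in fact bounded for fixed $n,n'$ (both $(a+b)\wedge m$ and the two $Y$'s are bounded).

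The one genuine departure is how you produce $Z$: you show $(Z^{m,n})_n$ is Cauchy in $\mathbb{H}^2$ via an It\^o computation on $(\Delta^{n,n'})^2$ and pass to the limit. The paper instead observes that, once the limit $U$ is identified and shown to satisfy the conditional-expectation representation, $t\mapsto U_t+2\int_0^t\phi_s\,ds-\gamma\int_0^t((a_s+b_s)\wedge m)U_s^2\,ds$ is a square-integrable martingale, and obtains $Z^U\in\mathbb{H}^2$ in one stroke from the martingale representation theorem. Both routes are valid; the paper's is shorter because it avoids the $\Theta^3$ and $\Theta\phi$ bookkeeping you flag at the end, while yours has the minor advantage of also establishing $Z^{m,n}\to Z^m$ strongly in $\mathbb{H}^2$, a fact the paper neither states nor needs.
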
    

\begin{proof}
Recall that \eqref{trun_bsde_unbound_coeff} is well-posed for any pair $(m, n)$ by Lemma \ref{growth_lip_fbsde}. Fixing any $m\in\mathbb{N}$, equation \eqref{trun_bsde_unbound_coeff} accepts a unique solution $(Y^{(n)}, Z^{(n)})\in\mathbb{S}^2\times\mathbb{H}^2$ for all $n\in\mathbb{N}$. Because $|Y_t^{(n)}|\leq 2n+2nT$, the standard comparison theorem for Lipschitz BSDEs implies
\begin{equation*}
    0 \geq Y_t^{(n)} \geq Y_t^{(n+1)}, \qquad 0 \leq t \leq T, \qquad \mathbb{P}\text{-a.s.}
\end{equation*}
Consequently, for almost every $\omega\in\Omega$, we can define the limit as follows:
\begin{equation*}
    U_t(\omega):=\lim_{n\to\infty}Y_t^{(n)}(\omega),\qquad 0 \leq t \leq T.
\end{equation*}
Note that $-U_t(\omega)$ may be infinity for some $t$ and $\omega$. Thanks to the monotonicity of $(Y^{(n)})_{n\in\mathbb{N}}$, we observe that
\begin{equation*}
    \lim_{n\to\infty}|Y_t^{(n)}|^{8\mathscr{D}}\leq \lim_{n\to\infty}\,\sup_{0\leq u\leq T}|Y_u^{(n)}|^{8\mathscr{D}}, \qquad 0 \leq t \leq T,
\end{equation*}
the right hand side of which is well-defined (possibly infinite). It further reveals the integrability of $U$ as follows
\begin{equation}
\begin{aligned}
    \mathbb{E}\Big[\sup_{0\leq t\leq T}|U_t|^{8\mathscr{D}}\Big]&\leq \lim_{n\to\infty}\,\mathbb{E}\Big[\sup_{0\leq t\leq T}|Y_t^{(n)}|^{8\mathscr{D}}\Big]\\
    &= \lim_{n\to\infty}\,\mathbb{E}\Big[\sup_{0\leq u\leq T}\Big|\,\mathbb{E}_u\big[-2\,(A\wedge n)-2\int_t^T (\phi_s\wedge n)\,ds\\
    &\hspace{3.5cm}+\int_t^T\gamma\,\big((a_u+b_u)\wedge m\big)\,(Y_u)^2\,du\big]\,\Big|^{8\mathscr{D}}\Big]\\
    &\leq \lim_{n\to\infty}\,\mathbb{E}\Big[\sup_{0\leq u\leq T}\mathbb{E}_u\big[2\,(A\wedge n)+2\int_t^T (\phi_s\wedge n)\,ds\big]^{8\mathscr{D}}\Big]\\
    &\leq \lim_{n\to\infty}\,\mathbb{E}\Big[\sup_{0\leq u\leq T}\mathbb{E}_u\big[2\,A+2\int_0^T \phi_s\,ds\big]^{8\mathscr{D}}\Big]\\
    &\leq C\,\mathbb{E}\Big[\big(2\,A+2\int_0^T \phi_s\,ds\big)^{8\mathscr{D}}\Big]\\
    &<\infty,
    \label{integ_Y}
\end{aligned}
\end{equation}
where we have applied the monotone convergence theorem, the fact $Y_t^{(n)}\leq 0$, and the Doob's inequality. For any $u\in[0,T]$, since two random variables
\begin{equation*}
    \mathbb{E}_u\Big[2\,A+2\int_u^T \phi_s\,ds\Big] \text{\quad and \quad} \mathbb{E}_u\Big[\int_u^T \big((a_s+b_s)\wedge m\big)\,(U_s)^2\,ds\Big]
\end{equation*}
are both integrable, then they are almost surely finite and one can use the monotone convergence theorem again to see
\begin{equation*}
\begin{aligned}
    U_t&=\lim_{n\to\infty}Y_t^{(n)}\\
    &=\lim_{n\to\infty}\mathbb{E}_t\Big[-2\,(A\wedge n)-2\int_t^T (\phi_s\wedge n)\,ds+\gamma\int_t^T \big((a_s+b_s)\wedge m\big)\,\big(Y_s^{(n)}\big)^2\,ds\Big]\\
    &=\mathbb{E}_t\Big[-2\,A-2\int_t^T \phi_s\,ds+\gamma\int_t^T \big((a_s+b_s)\wedge m\big)\,(U_s)^2\,ds\Big].
\end{aligned}
\end{equation*}
While the random variable
\begin{equation*}
    -2\,A-2\int_0^T \phi_s\,ds+\gamma\int_0^T \big((a_s+b_s)\wedge m\big)\,(Y_s)^2\,ds
\end{equation*}
is square integrable, we finally conclude that $U$ satisfies the BSDE
\begin{equation*}
    dU_t=\big(-\gamma\,\big((a_t+b_t)\wedge m\big)\,(U_t)^2+2\,\phi_t\big)\,dt+Z_t^U\,dW_t, \text{\quad with \quad} U_T=-2\,A,
\end{equation*}
where $Z^U\in\mathbb{H}^2$ is obtained via the martingale representation theorem. 
\end{proof}

\noindent Note that the solution of \eqref{half_trun_bsde_unbound_coeff} may not be bounded, and hence the standard comparison theorem for Lipschitz BSDEs can not be applied. We summarize the comparison result in the following lemma.

\begin{lemma}
For each $m\in\mathbb{N}$, denote by $(Y^{(m)}, Z^{(m)})$ the solution of equation \eqref{half_trun_bsde_unbound_coeff}. Then, $(Y^{(m)})_{m\in\mathbb{N}}$ is non-decreasing with respect to $m$.
\end{lemma}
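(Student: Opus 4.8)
The plan is to transfer the monotonicity from the doubly truncated equations \eqref{trun_bsde_unbound_coeff}, where a classical comparison principle is available, and then pass to the limit used to construct $Y^{(m)}$. Recall that, for fixed $m$, the solution $(Y^{(m)},Z^{(m)})$ of \eqref{half_trun_bsde_unbound_coeff} was obtained in the preceding lemma as the monotone (in $n$), $\mathbb{P}$-a.s. limit $Y^{(m)}_t=\lim_{n\to\infty}Y^{(n,m)}_t$, where $(Y^{(n,m)},Z^{(n,m)})$ solves \eqref{trun_bsde_unbound_coeff} with truncation levels $m$ on $a+b$ and $n$ on $\phi$ and $A$, and that each $Y^{(n,m)}$ is non-positive with $|Y^{(n,m)}_t|\le 2n(1+T)$.

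First I would fix $n$ and compare \eqref{trun_bsde_unbound_coeff} for the parameters $m$ and $m+1$. The two BSDEs carry the same terminal value $-2(A\wedge n)$, and their drift coefficients
\[
    \beta^{(m)}(t,y):=-\gamma\big((a_t+b_t)\wedge m\big)y^2+2(\phi_t\wedge n)
\]
satisfy $\beta^{(m+1)}(t,y)\le\beta^{(m)}(t,y)$ for every $y\in\mathbb{R}$, since $(a_t+b_t)\wedge m\le(a_t+b_t)\wedge(m+1)$ and $y^2\ge 0$. Because both solutions are bounded by $2n(1+T)$, one may truncate $y^2$ at level $\pm 2n(1+T)$ without altering either solution, making the drifts globally Lipschitz in $y$; the comparison theorem for Lipschitz BSDEs (a smaller backward drift yielding a larger solution, as already invoked in the preceding lemma) then gives $Y^{(n,m)}_t\le Y^{(n,m+1)}_t$ for all $t\in[0,T]$, $\mathbb{P}$-a.s. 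Letting $n\to\infty$ and using $Y^{(m)}_t=\lim_n Y^{(n,m)}_t$ and $Y^{(m+1)}_t=\lim_n Y^{(n,m+1)}_t$ yields $Y^{(m)}_t\le Y^{(m+1)}_t$ for all $t$, $\mathbb{P}$-a.s., which is the claim.

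The only genuine obstacle is that one cannot run this comparison directly at the level of \eqref{half_trun_bsde_unbound_coeff}: there the solutions lie merely in $\mathbb{S}^{8\mathscr{D}}$ and are unbounded, so the quadratic drift is not uniformly Lipschitz and the standard comparison theorem fails---exactly the difficulty noted just before the statement. Passing through the bounded approximations $Y^{(n,m)}$ removes this. As an alternative staying at the level of \eqref{half_trun_bsde_unbound_coeff}, one could set $\delta Y:=Y^{(m+1)}-Y^{(m)}$, linearize via $\big(Y^{(m+1)}\big)^2-\big(Y^{(m)}\big)^2=\big(Y^{(m+1)}+Y^{(m)}\big)\delta Y$, split off the nonpositive source term $-\gamma\big[(a+b)\wedge(m+1)-(a+b)\wedge m\big]\big(Y^{(m)}\big)^2$, and apply It\^o's formula to $\big((\delta Y)^-\big)^2$: the drift is nonpositive on $\{\delta Y<0\}$ because $Y^{(m)},Y^{(m+1)}\le 0$ makes the linear coefficient $-\gamma\big((a+b)\wedge(m+1)\big)\big(Y^{(m+1)}+Y^{(m)}\big)$ nonnegative, and the stochastic integral is a true martingale thanks to $\delta Y\in\mathbb{S}^2$ and $\delta Z\in\mathbb{H}^2$, forcing $(\delta Y)^-\equiv 0$. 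This route avoids the integrability problem one would face in representing $\delta Y$ through an integrating factor given by the exponential of an only $L^{8\mathscr{D}}$-integrable process.
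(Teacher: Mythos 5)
Your primary argument is correct but takes a genuinely different route from the paper's. You transfer the comparison from the doubly truncated BSDEs \eqref{trun_bsde_unbound_coeff}, where boundedness of $Y^{(n,m)}$ lets one truncate $y^2$ and invoke the standard Lipschitz comparison theorem, and then pass to the monotone limit in $n$ that was used to construct $Y^{(m)}$. The paper instead works directly with the half-truncated equation \eqref{half_trun_bsde_unbound_coeff}: it linearizes the difference of squares exactly as in your alternative sketch, isolates the nonpositive source $-\gamma\bigl(Y^{(m)}\bigr)^2\bigl[(a+b)\wedge(m+1)-(a+b)\wedge m\bigr]$, and then finishes with the integrating factor $e^{\int_0^t\iota_u\,du}$ where $\iota_t=\gamma\bigl((a_t+b_t)\wedge(m+1)\bigr)\bigl(Y_t^{(m+1)}+Y_t^{(m)}\bigr)\le 0$. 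Your caution about representing $\Delta Y$ through an exponential of an $L^{8\mathscr{D}}$ process is in fact harmless here: since $\iota\le0$, the integrating factor is bounded above by $1$, so the localized stochastic integral $\int_0^\cdot e^{\int_0^s\iota_u\,du}\,\Delta Z_s\,dW_s$ is a genuine square-integrable martingale (as $\Delta Z\in\mathbb{H}^2$) and the conditional representation is legitimate without further truncation. What each approach buys: yours is arguably more elementary and reuses the approximating sequence already built in the preceding lemma, at the cost of appealing to the construction rather than to the object $Y^{(m)}$ itself; the paper's is self-contained at the level of \eqref{half_trun_bsde_unbound_coeff} and makes the monotone dependence on $m$ visible through an explicit nonnegative representation of $\Delta Y_t$, which is the structure later reused when removing the truncation in $m$. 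Your secondary suggestion (It\^o on $\bigl((\Delta Y)^-\bigr)^2$) is a third valid finishing step, closest in spirit to the paper's linearization but replacing the integrating factor by a Gronwall/martingale argument.
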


\begin{proof}
Define $\Delta Y:=Y^{(m+1)}-Y^{(m)}$ and $\Delta Z:=Z^{(m+1)}-Z^{(m)}$. The proof starts with taking the difference of two BSDEs:
\begin{equation*}
\begin{aligned}
    d\Delta Y_t&=\Big(-\gamma\,\big((a_t+b_t)\wedge (m+1)\big)\,(Y_t^{(m+1)})^2+\gamma\,\big((a_t+b_t)\wedge (m+1)\big)\,(Y_t^{(m)})^2\\
    &\hspace{1cm}-\gamma\,\big((a_t+b_t)\wedge (m+1)\big)\,(Y_t^{(m)})^2+\gamma\,\big((a_t+b_t)\wedge m\big)\,(Y_t^{(m)})^2\Big)\,dt+\Delta Z_t\,dW_t\\
    &=\Big(-\gamma\,\big((a_t+b_t)\wedge (m+1)\big)\,(Y_t^{(m+1)}+Y_t^{(m)})\,\Delta Y_t\\
    &\hspace{1cm}-\gamma\, (Y_t^{(m)})^2\,\big[(a_t+b_t)\wedge (m+1)-(a_t+b_t)\wedge m\big]\Big)\,dt+\Delta Z_t\,dW_t.
\end{aligned}
\end{equation*}
Set $\iota_t=\gamma\,\big((a_t+b_t)\wedge (m+1)\big)\,(Y_t^{(m+1)}+Y_t^{(m)})$, and observe that $\iota_t\leq 0$ since any $Y^{(m)}$ is non-positive. The integration factor gives
\begin{equation*}
    \Delta Y_t e^{\int_0^t \iota_u\,du}=\gamma\,\mathbb{E}_t\Big[\int_t^Te^{\int_0^s\iota_u\,du}\,(Y_s^{(m)})^2\,\big[(a_s+b_s)\wedge (m+1)-(a_s+b_s)\wedge m\big]\,ds\Big]\geq 0,
\end{equation*}
where we have used the fact that $\Delta Y_T=0$. 
\end{proof}

\noindent Given the comparison result, we can finally remove the constant $m$ as well.

\begin{theorem}
The BSDE \eqref{quad_bsde_unbound_coeff} accepts a solution $(Y, Z)\in\mathbb{S}^{8\mathscr{D}}\times\mathbb{H}^2$.
\label{well_posed_quad_bsde}
\end{theorem}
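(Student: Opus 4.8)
The plan is to let $m\to\infty$ in the family $\{(Y^{(m)},Z^{(m)})\}_{m\in\mathbb{N}}$ of solutions of \eqref{half_trun_bsde_unbound_coeff} furnished by the two preceding lemmas. By the comparison (monotonicity) lemma we have $Y^{(1)}_t\leq Y^{(m)}_t\leq Y^{(m+1)}_t\leq 0$ for all $m\in\mathbb{N}$ and $t\in[0,T]$, $\mathbb{P}$-a.s.; hence $Y_t:=\lim_{m\to\infty}Y^{(m)}_t$ exists, is non-positive, and satisfies $|Y^{(m)}_t|\leq|Y^{(1)}_t|$ for every $m$, so that $\sup_{t\in[0,T]}|Y_t|\leq\sup_{t\in[0,T]}|Y^{(1)}_t|$. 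Since the preceding existence lemma gives $Y^{(1)}\in\mathbb{S}^{8\mathscr{D}}$, it follows that $Y\in\mathbb{S}^{8\mathscr{D}}$ (its progressive measurability being clear as a pointwise limit, and a continuous modification being produced at the last step below).

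Next I would pass to the limit in the conditional-expectation form of \eqref{half_trun_bsde_unbound_coeff},
\[
  Y^{(m)}_t=\mathbb{E}_t\Big[-2A-2\int_t^T\phi_s\,ds+\gamma\int_t^T\big((a_s+b_s)\wedge m\big)(Y^{(m)}_s)^2\,ds\Big].
\]
The terminal term and the $\phi$-integral do not depend on $m$; for the quadratic term the pointwise limit of the integrand is $(a_s+b_s)(Y_s)^2$, and one has the $m$-uniform domination
\[
  0\leq\big((a_s+b_s)\wedge m\big)(Y^{(m)}_s)^2\leq(a_s+b_s)(Y^{(1)}_s)^2\leq\Big(\sup_{u\in[0,T]}|Y^{(1)}_u|^2\Big)\int_0^T(a_u+b_u)\,du,
\]
whose right-hand side lies in $L^2(\Omega)$, hence the integrand lies in $L^1(\Omega\times[0,T])$, by Cauchy--Schwarz together with $a,b\in\mathbb{S}^{8\mathscr{D}}$ and $Y^{(1)}\in\mathbb{S}^{8\mathscr{D}}$ (taking $\mathscr{D}$ large enough). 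The product $((a_s+b_s)\wedge m)(Y^{(m)}_s)^2$ is \emph{not} monotone in $m$ (the truncation increases while $|Y^{(m)}|$ decreases), so monotone convergence is unavailable and this domination is exactly what makes the conditional dominated convergence theorem applicable. Passing to the limit and using $Y^{(m)}_t\to Y_t$ a.s. yields
\[
  Y_t=\mathbb{E}_t\Big[-2A-2\int_t^T\phi_s\,ds+\gamma\int_t^T(a_s+b_s)(Y_s)^2\,ds\Big],\qquad t\in[0,T].
\]

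Finally I would recover $Z$ and the differential form of the equation in the standard way: the random variable $\xi:=-2A-2\int_0^T\phi_s\,ds+\gamma\int_0^T(a_s+b_s)(Y_s)^2\,ds$ is square integrable by the bound above, so the $L^2$-martingale $N_t:=\mathbb{E}_t[\xi]=Y_t+\int_0^t\big(\gamma(a_s+b_s)(Y_s)^2-2\phi_s\big)\,ds$ admits, by the martingale representation theorem, a representation $N_t=Y_0+\int_0^tZ_s\,dW_s$ with $Z\in\mathbb{H}^2$; rearranging gives precisely \eqref{quad_bsde_unbound_coeff} with terminal value $-2A$, and the continuity of $N$ together with the absolute continuity of the drift integral furnishes a continuous modification of $Y$, so that indeed $(Y,Z)\in\mathbb{S}^{8\mathscr{D}}\times\mathbb{H}^2$. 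The main obstacle is the $m$-uniform control of the quadratic term $\int_0^T(a_s+b_s)(Y_s)^2\,ds$: with $a,b,\phi,A$ unbounded there is no recourse to the Lipschitz comparison theorem or to any $L^\infty$ bound on the solutions, so everything rests on the one-sided squeeze $Y^{(1)}\leq Y^{(m)}\leq 0$ supplied by the comparison lemma, which provides the $m$-independent dominating function, and on choosing $\mathscr{D}$ large enough for the Cauchy--Schwarz moment estimates to close. Once that integrability is secured, the limiting argument and the martingale-representation step are routine.
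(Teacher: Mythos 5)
Your proposal is correct and follows essentially the same route as the paper: pass to the limit in $m$ using the monotonicity $Y^{(1)}\leq Y^{(m)}\leq 0$, dominate the quadratic integrand by $(a_s+b_s)(Y^{(1)}_s)^2$ to apply conditional dominated convergence in the conditional-expectation formulation, and recover $Z$ via the martingale representation theorem. Your explicit remark that monotone convergence fails for $((a_s+b_s)\wedge m)(Y^{(m)}_s)^2$ because the two factors move in opposite directions is a useful clarification that the paper leaves implicit.
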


\begin{proof}
\noindent Denote by $(Y^{(m)}, Z^{(m)})\in\mathbb{S}^{8\mathscr{D}}\times\mathbb{H}^2$ the solution of equation \eqref{half_trun_bsde_unbound_coeff} for each $m\in\mathbb{N}$. Given the monotonicity of $(Y^{(m)})_{m\in\mathbb{N}}$, for almost every $\omega\in\Omega$, let us define the limit similarly as
\begin{equation*}
    Y_t(\omega):=\lim_{m\to\infty}Y_t^{(m)}(\omega)\leq 0,\qquad 0 \leq t \leq T.
\end{equation*}
Its integrability can be directly seen from
\begin{equation*}
    \mathbb{E}\Big[\sup_{0\leq t\leq T}|Y_t|^{8\mathscr{D}}\Big]\leq \mathbb{E}\Big[\sup_{0\leq t\leq T}|Y_t^{(1)}|^{8\mathscr{D}}\Big]<\infty,
\end{equation*}
since $Y^{(m)}\leq Y^{(m+1)}\leq0$ for any $m\in\mathbb{N}$. Thanks to the integrability of the random variable
\begin{equation*}
    \mathbb{E}_u\Big[\int_u^T (a_s+b_s)\,\big(Y_s^{(1)}\big)^2\,ds\Big],
\end{equation*}
the dominated convergence theorem implies
\begin{equation*}
\begin{aligned}
    Y_t=\lim_{m\to\infty}Y_t^{(m)}&=\lim_{m\to\infty}\mathbb{E}_t\Big[-2\,A-2\int_t^T \phi_s\,ds+\gamma\int_t^T \big((a_s+b_s)\wedge m\big)\,\big(Y_s^{(m)}\big)^2\,ds\Big]\\
    &=\mathbb{E}_t\Big[-2\,A-2\int_t^T \phi_s\,ds+\gamma\int_t^T (a_s+b_s)\,(Y_s)^2\,ds\Big].
\end{aligned}
\end{equation*}
Finally, the square integrability of the random variable
\begin{equation*}
    -2\,A-2\int_0^T \phi_s\,ds+\gamma\int_0^T (a_s+b_s)\,(Y_s)^2\,ds
\end{equation*}
helps us conclude that $Y$ solves the BSDE
\begin{equation*}
    dY_t=\big(-\gamma\,(a_t+b_t)\,(Y_t)^2+2\,\phi_t\big)\,dt+Z_t\,dW_t, \text{\quad with \quad} Y_T=-2\,A.
\end{equation*}
Note that $Z$ is again obtained via the martingale representation theorem. 
\end{proof}

Subsequently, we look at the FBSDE associated with the stochastic maximum principle
\begin{equation}
\left\{
\begin{aligned}
\;& dQ_t  = \zeta\,(b_t-a_t)\,dt\,/\,2+\gamma\,(a_t+b_t)\,Y_t\,dt\,/\,2, \\
& dY_t=2\,\phi_t\,Q_t\,dt+Z_t\,dW_t,\\
& Q_0=q_0,\quad Y_T=-2A\,Q_T.
\label{linear_fbsde_unbound}
\end{aligned}
\right.
\end{equation}
The next theorem presents the well-posedness result of the main FBSDE, and then specifies the value of $\mathscr{D}$ in line with the admissibility of the control.

\begin{theorem}
The FBSDE \eqref{linear_fbsde_unbound} accepts a unique solution $(Q, Y, Z)\in \mathbb{S}^{2\mathscr{D}}\times\mathbb{S}^{\mathscr{D}}\times\mathbb{H}^{2}$. It suffices to let $\mathscr{D}=8$ to ensure the resulting control is admissible.
\end{theorem}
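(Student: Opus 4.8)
The plan is to decouple the forward--backward system \eqref{linear_fbsde_unbound} using the affine ansatz $Y_t = P_t\,Q_t + R_t$, exactly as in the proofs of Lemma \ref{growth_lip_fbsde} and Theorem \ref{unbound_linear_fbsde}, except that now the coefficients $a, b, \phi, A$ are unbounded and live only in the $L^{8\mathscr{D}}$-type spaces prescribed by the Assumption. First I would observe that matching coefficients forces $P$ to solve precisely the quadratic BSDE \eqref{quad_bsde_unbound_coeff}, whose existence of a solution $(P, Z^P) \in \mathbb{S}^{8\mathscr{D}}\times\mathbb{H}^2$ with $P \le 0$ was already established in Theorem \ref{well_posed_quad_bsde}; the remainder $R$ then solves a \emph{linear} BSDE driven by $P$, namely $dR_t = -\big(\tfrac{\gamma}{2}(a_t+b_t)P_t R_t + \tfrac{\zeta}{2}(b_t-a_t)P_t\big)\,dt + Z^R_t\,dW_t$ with $R_T = 0$, which has the explicit representation $R_t = \mathbb{E}_t\big[\int_t^T \tfrac{\zeta}{2}(b_s-a_s)P_s\, e^{\int_t^s \frac{\gamma}{2}(a_u+b_u)P_u\,du}\,ds\big]$. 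Since $P \le 0$, the exponential factor is bounded by $1$, so $R$ inherits integrability from the product $(a+b)P \in \mathbb{S}^{8\mathscr{D}}\cdot\mathbb{S}^{8\mathscr{D}}$; a H\"older estimate gives $R \in \mathbb{S}^{4\mathscr{D}}$ (roughly, one loses a factor of $2$ in the exponent from the product), which in turn yields $Y = PQ + R \in \mathbb{S}^{\mathscr{D}}$ once $Q$ is controlled.

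Next I would reconstruct $Q$ from the decoupled data: plugging $Y_t = P_t Q_t + R_t$ into the forward equation turns it into a linear random ODE $dQ_t = \tfrac{\gamma}{2}(a_t+b_t)P_t\,Q_t\,dt + \big(\tfrac{\zeta}{2}(b_t-a_t) + \tfrac{\gamma}{2}(a_t+b_t)R_t\big)\,dt$, whose solution is again explicit via the integrating factor $\mathcal{E}_{t} := \exp\!\big(\int_0^t \tfrac{\gamma}{2}(a_u+b_u)P_u\,du\big)$. Because $P \le 0$, this factor is bounded by $1$ pathwise, so $Q_t = \mathcal{E}_t\big(q_0 + \int_0^t \mathcal{E}_s^{-1}(\tfrac{\zeta}{2}(b_s-a_s) + \tfrac{\gamma}{2}(a_s+b_s)R_s)\,ds\big)$; but $\mathcal{E}_s^{-1} \ge 1$ is \emph{not} bounded, so one must be slightly careful and instead bound $|Q_t| \le |q_0| + \int_0^t \tfrac{\gamma}{2}(a_s+b_s)|P_s|\,|Q_s|\,ds + \int_0^t \big(\tfrac{\zeta}{2}(a_s+b_s) + \tfrac{\gamma}{2}(a_s+b_s)|R_s|\big)\,ds$ and apply Gr\"onwall's inequality with the random coefficient $\tfrac{\gamma}{2}(a_s+b_s)|P_s|$. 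This produces $\sup_{t}|Q_t| \le \big(|q_0| + C\int_0^T (a_s+b_s)(1+|R_s|)\,ds\big)\exp\!\big(C\int_0^T (a_s+b_s)|P_s|\,ds\big)$, and taking $2\mathscr{D}$-th moments of the right-hand side --- using that $a,b \in \mathbb{S}^{8\mathscr{D}}$, $P \in \mathbb{S}^{8\mathscr{D}}$, $R \in \mathbb{S}^{4\mathscr{D}}$ and a chain of H\"older inequalities together with $\mathbb{E}[e^{X}] < \infty$ when $X$ has enough integrability --- gives $Q \in \mathbb{S}^{2\mathscr{D}}$. With $Q \in \mathbb{S}^{2\mathscr{D}}$ and $P \in \mathbb{S}^{8\mathscr{D}}$, H\"older yields $PQ \in \mathbb{S}^{\lfloor 8\mathscr{D}\cdot 2\mathscr{D}/(8\mathscr{D}+2\mathscr{D})\rfloor} \supseteq \mathbb{S}^{\mathscr{D}}$, and hence $Y = PQ+R \in \mathbb{S}^{\mathscr{D}}$; the process $Z$ is then recovered from the martingale representation applied to $Y_t + \int_0^t 2\phi_s Q_s\,ds$, and $\int_0^T 2\phi_s Q_s\,ds \in L^2$ by H\"older since $\phi \in \mathbb{H}^{8\mathscr{D}}$ and $Q \in \mathbb{S}^{2\mathscr{D}}$, so $Z \in \mathbb{H}^2$. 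Verifying that this $(Q,Y,Z)$ genuinely solves \eqref{linear_fbsde_unbound}, including the terminal condition $Y_T = P_T Q_T + R_T = -2A\,Q_T + 0$, is then a direct substitution.

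For uniqueness I would invoke the continuation-method argument already used at the end of the proof of Lemma \ref{growth_lip_fbsde}: given two solutions, their difference $(\mathcal{Q},\mathcal{Y},\mathcal{Z})$ solves the homogeneous system $d\mathcal{Q}_t = \tfrac{\gamma}{2}(a_t+b_t)\mathcal{Y}_t\,dt$, $d\mathcal{Y}_t = 2\phi_t \mathcal{Q}_t\,dt + \mathcal{Z}_t\,dW_t$, $\mathcal{Q}_0 = 0$, $\mathcal{Y}_T = -2A\,\mathcal{Q}_T$; applying It\^o to $\mathcal{Q}_t\mathcal{Y}_t$ and taking expectations gives $0 \ge \mathbb{E}[-2A\,\mathcal{Q}_T^2] = \mathbb{E}\int_0^T\big(\tfrac{\gamma}{2}(a_t+b_t)\mathcal{Y}_t^2 + 2\phi_t\mathcal{Q}_t^2\big)dt \ge 0$, forcing $(a_t+b_t)\mathcal{Y}_t = 0$ a.e. and hence (since $a,b>0$ whenever the Assumption is in force, or more carefully using the structure of the forward equation) $\mathcal{Q}\equiv 0$, then $\mathcal{Y}\equiv\mathcal{Z}\equiv 0$ --- and as noted in the Remark following that lemma, this argument needs no boundedness of the coefficients, only enough integrability to justify the It\^o computation, which $\mathcal{Q},\mathcal{Y} \in \mathbb{S}^{\mathscr{D}}$ provides. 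Finally, to pin down $\mathscr{D} = 8$: admissibility requires $\boldsymbol{\delta}^* \in \mathbb{H}^8\times\mathbb{H}^8$, and by \eqref{optimal_feedback_unbound} this is equivalent to $Y \in \mathbb{H}^8$, i.e. $\mathbb{S}^{\mathscr{D}} \hookrightarrow \mathbb{H}^8$ needs $\mathscr{D} \ge 8$; moreover the integrability chain above (in particular keeping $V \in \mathbb{S}^{8\mathscr{D}}$ and $Q \in \mathbb{S}^4$ for the objective functional to be finite, as already used before Theorem \ref{unbound_linear_fbsde}) is self-consistent at $\mathscr{D} = 8$, so that value is both necessary and sufficient. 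The main obstacle I anticipate is the bookkeeping of exponents in the Gr\"onwall/H\"older step for $Q \in \mathbb{S}^{2\mathscr{D}}$: one must check that the \emph{exponential} of the random integral $\int_0^T(a_s+b_s)|P_s|\,ds$ has finite moments of the required order, which is where the factor $8 = 2^3$ in the exponent $8\mathscr{D}$ of the Assumption is genuinely used (two halvings from two products, one more from exponentiation via, e.g., $e^X \le$ a polynomial times $e^{X/k}$ bounds or a direct $L^p$-exponential estimate), and getting that accounting exactly right is the crux.
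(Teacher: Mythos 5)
Your overall strategy matches the paper's exactly: affine decoupling $Y_t = P_tQ_t + R_t$, the quadratic BSDE from Theorem \ref{well_posed_quad_bsde} for $P$, an explicit linear BSDE for $R$, reconstruction of $Q$ by variation of constants, a H\"older chain to track integrability, and the continuation argument from Lemma \ref{growth_lip_fbsde} for uniqueness. The choice $\mathscr{D}=8$ for admissibility is also identified correctly.

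However, the step estimating $Q$ contains a gap that would make your route fail under the stated assumptions. You correctly write $Q_t = \mathcal{E}_t\big(q_0 + \int_0^t \mathcal{E}_s^{-1}g_s\,ds\big)$ with $\mathcal{E}_t = \exp(\frac{\gamma}{2}\int_0^t(a_u+b_u)P_u\,du)$, then worry that $\mathcal{E}_s^{-1}\ge 1$ is unbounded and resort to Gr\"onwall with the random coefficient $\frac{\gamma}{2}(a_s+b_s)|P_s|$. But Gr\"onwall discards the sign of the coefficient and produces a bound involving $\exp\big(C\int_0^T(a_s+b_s)|P_s|\,ds\big)$, whose $2\mathscr{D}$-th moment you have no way to control: the Assumption gives only polynomial moments of $a,b,P$, not exponential ones. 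The correct observation (made in the paper, and implicit in your own formula for $Q_t$ once you combine the two exponentials) is that
\begin{equation*}
\mathcal{E}_t\,\mathcal{E}_s^{-1} = \exp\Big(\frac{\gamma}{2}\int_s^t (a_u+b_u)\,P_u\,du\Big) \le 1 \qquad \text{for } s\le t,
\end{equation*}
since the integrand $(a+b)P$ is nonpositive. Hence $|Q_t| \le |q_0| + \int_0^T|g_s|\,ds$ pathwise, and the required $\mathbb{S}^{2\mathscr{D}}$-bound follows by H\"older from $a,b\in\mathbb{S}^{8\mathscr{D}}$ and $R\in\mathbb{S}^{4\mathscr{D}}$ alone, with no exponential moment needed. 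In fact the exponent-halving you ascribe to ``exponentiation'' does not occur: the factor of $8$ comes from three successive H\"older products ($\mathscr{P}\to\mathscr{H}$, $\mathscr{H}\to Q$, $Q\to Y$), each costing a factor of $2$, and the exponential contributes nothing. Your ``main obstacle'' paragraph is thus chasing a problem that disappears once the sign of $(a+b)P$ is exploited in the variation-of-constants formula rather than thrown away by Gr\"onwall.
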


\begin{proof}
The linear structure of \eqref{linear_fbsde_unbound} suggests the affine ansatz
\begin{equation}
    Y_t = \mathscr{P}_t\,Q_t+\mathscr{H}_t,
    \label{unbound_affine_ansatz}
\end{equation}
where processes $\mathscr{P}$ and $\mathscr{H}$ solve the BSDEs
\begin{equation*}
    d\mathscr{P}_t=\big(-\gamma\,(a_t+b_t)\,(\mathscr{P}_t)^2\,/\,2+2\,\phi_t\big)\,dt+Z_t^1\,dW_t, \text{\quad with \quad} \mathscr{P}_T=-2\,A,
\end{equation*}
\begin{equation*}
    d\mathscr{H}_t=\big(-\gamma\,(a_t+b_t)\,\mathscr{P}_t\,\mathscr{H}_t-\zeta\,(b_t-a_t)\,\mathscr{P}_t\big)\,dt\,/\,2+Z_t^2\,dW_t, \text{\quad with \quad} \mathscr{H}_T=0.
\end{equation*}
The well-posedness of $\mathscr{P}\in \mathbb{S}^{8\mathscr{D}}$ is guaranteed by Theorem \ref{well_posed_quad_bsde}, and the linear BSDE of $\mathscr{H}$ can be solved explicitly by
\begin{equation*}
\begin{aligned}
    \mathscr{H}_t=\mathbb{E}_t\Big[\int_t^T\zeta\,(b_s-a_s)\,\mathscr{P}_s\,e^{\int_s^T\gamma\,(a_u+b_u)\,\mathscr{P}_u\,du}\,ds\Big].
\end{aligned}
\end{equation*}
Direct estimations and Doob’s inequality then yield
\begin{equation*}
\begin{aligned}
    |\mathscr{H}_t| &\leq C\sup_{0\leq u\leq T}\mathbb{E}_u\Big[\int_0^T(a_s-b_s)^2\,ds\Big]+C\sup_{0\leq u\leq T}\mathbb{E}_u\Big[\int_0^T(\mathscr{P}_s)^2\,ds\Big],\\
    \mathbb{E}\big[\sup_{0\leq t\leq T}|\mathscr{H}_t|^{4\mathscr{D}}]&\leq C\,\mathbb{E}\Big[\big(\int_0^T(a_s-b_s)^2\,ds\big)^{4\mathscr{D}}\Big]+C\,\mathbb{E}\Big[\big(\int_0^T(\mathscr{P}_s)^2\,ds\big)^{4\mathscr{D}}\Big]<\infty,
\end{aligned}
\end{equation*}
from which one can see $\mathscr{H}\in\mathbb{S}^{4\mathscr{D}}$. On the other hand, we can calculate the corresponding inventory
\begin{equation*} Q_t=q_0\,e^{\gamma\,\int_0^t(a_s+b_s)\,\mathscr{P}_s\,ds\,/\,2}+\int_0^t e^{\gamma\,\int_s^t(a_u+b_u)\,\mathscr{P}_u\,du\,/\,2}\big[\zeta\,(b_s-a_s)\,/\,2+\gamma\,(a_s+b_s)\,\mathscr{H}_s\,/\,2\big]\,ds,
\end{equation*}
and deduce that $Q\in\mathbb{S}^{2\mathscr{D}}$ from the following estimation:
\begin{equation*}
\begin{aligned}
    \mathbb{E}\big[\sup_{0\leq t\leq T}|Q_t|^{2\mathscr{D}}\big]&\leq C\, q_0^{2\mathscr{D}}+C\,\mathbb{E}\Big[\big(\int_0^T(b_t-a_t)^2\,dt\big)^{\mathscr{D}}\Big]+C\,\mathbb{E}\Big[\big(\int_0^T(a_t+b_t)\,|\mathscr{H}_t|\,dt\big)^{2\mathscr{D}}\Big]\\
    &\leq C\, q_0^{2\mathscr{D}}+C\,\mathbb{E}\Big[\big(\int_0^T(b_t-a_t)^2\,dt\big)^{\mathscr{D}}\Big]+C\,\mathbb{E}\Big[\big(\int_0^T(a_t+b_t)^2\,dt\big)^{2\mathscr{D}}\Big]\\
    &\hspace{3cm}+C\,\mathbb{E}\Big[\big(\int_0^T|\mathscr{H}_t|^2\,dt\big)^{{2\mathscr{D}}}\Big]\\
    &<\infty.
\end{aligned}
\end{equation*}
Since $Y_t = \mathscr{P}_t\,Q_t+\mathscr{H}_t$ and
\begin{equation*}
    \mathbb{E}\big[\sup_{0\leq t\leq T} |\mathscr{P}_t\,Q_t|^\mathscr{D}\big]\leq C\,\mathbb{E}\big[\sup_{0\leq t\leq T} |\mathscr{P}_t|^{2\mathscr{D}}\big]+C\,\mathbb{E}\big[\sup_{0\leq t\leq T} |Q_t|^{2\mathscr{D}}\big]<\infty,
\end{equation*}
it yields $Y\in\mathbb{S}^{\mathscr{D}}$. The uniqueness of solution again follows from the continuation argument in Lemma \ref{growth_lip_fbsde}.

Recalling that the optimal control is given by \eqref{optimal_feedback_unbound} and the admissibility requires that the control is in $\mathbb{H}^8$, it suffices to take $\mathscr{D}=8$. 
\end{proof}

\section{Applications}
\label{implement}
\noindent With the theoretical foundations of the macroscopic market making model established, this section presents several examples to illustrate its applications. We begin by discussing the connection between the new model and the seminal Avellaneda-Stoikov model. Next, we explore how our model derives the concavity of price impact from a market-making perspective. Finally, we demonstrate the link between the optimal execution problem and the macroscopic market making model.

\subsection{Comparison to the Avellaneda-Stoikov Model}
From a modelling perspective, a key difference between the macroscopic market making model and the Avellaneda-Stoikov model is that the probability concept in the latter is replaced by a proportion in the former. Indeed, the Avellaneda-Stoikov model focuses on the outcome of each individual order, while the macroscopic model examines the collective behaviour of numerous orders. By the law of large numbers, the proportion naturally approximates probability. Hence, we expect the solution of the macroscopic model to represent the `average' of the solutions to the Avellaneda-Stoikov model. We will provide a numerical example later. To facilitate further comparison, we restrict our analysis to the Avellaneda-Stoikov settings by following the general setup in \cite{gueant2017optimal} and \cite{cartea2015algorithmic}. Here, we omit the inventory threshold, and assume bid-ask symmetry for notational convenience.

The mid-price of the asset is modelled by $dS_t = \sigma dW_t$, where $(W_t)_t$ is a standard Brownian motion adapted to the filtration $(\mathcal{F}_t)_{t \in \mathbb{R}_+}$. The bid and ask quotes from the market maker are respectively denoted by $(S^b_t)_t$ and $(S^a_t)_t$. Denote by $(N^b_t)_t$ and $(N^a_t)_t$ the two point processes (independent of $(W_t)_t$) representing the number of transactions at the bid and at the ask, respectively. We assume that all orders have the same size $\Delta \in \mathbb{R}_+$ and thus the inventory process is given by $dQ_t = \Delta dN^b_t - \Delta dN^a_t$. The intensities of $(N^b_t)_t$ and $(N^a_t)_t$ are given by $\lambda^b \, \Lambda(\delta_t^b)$ and $\lambda^a \, \Lambda(\delta_t^a)$ respectively, where $\delta_t^b = S_t - S_t^b$ and $\delta_t^a = S_t^a - S_t$. Here, constants $\lambda^a$ and $\lambda^b$ represent the intensities of the orders, while function $\Lambda$ describes the probability that the order is captured by the considered market maker. The cash process $(X_t)_t$ then has the dynamics $dX_t = S_t^a \Delta dN_t^a - S_t^b \Delta dN_t^b$. The goal of the market maker is to
maximize the objective functional
\begin{equation*}
    \mathbb{E} \Big[ X_T + Q_T S_T - A \,(Q_T)^2 - \frac{1}{2} \sigma^2 \int^T_0 Q^2_t \, dt \Big],
\end{equation*}
where $A$ is a positive constant.

By the dynamic programming principles, let $u(t, x, q, s)$ be the value function. If we further apply the  ansatz $u(t, x, q, s) = x + q s + \theta(t, q)$, as shown in (3.8) of \cite{gueant2017optimal}, the HJB equation is simplified to 
\begin{equation}
\begin{aligned}
    0 = -\frac{\partial \theta}{\partial t} + \frac{1}{2} \sigma^2 q^2 &- \Delta \, \lambda^b \, \sup_{\delta^b} \bigg\{ \Lambda(\delta^b
    )\cdot \Big[\delta^b + \frac{\theta(t, q + \Delta) - \theta(t, q)}{\Delta} \Big] \bigg\}\\
    & - \Delta \, \lambda^a \, \sup_{\delta^a} \bigg\{ \Lambda(\delta^a
    )\cdot \Big[\delta^a + \frac{\theta(t, q - \Delta) - \theta(t, q)}{\Delta} \Big] \bigg\}
    \label{old_hjb}
\end{aligned}
\end{equation}
with the boundary condition $\theta(T, q) = -A q^2$. The associated optimal control reads
\begin{equation}
    \hat{\delta}^a_t = \delta^*\Big( \frac{\theta(t, q) - \theta(t, q - \Delta)}{\Delta}\Big) \text{\; and \;} \hat{\delta}^b_t = \delta^*\Big( \frac{\theta(t, q) - \theta(t, q + \Delta)}{\Delta}\Big).
    \label{old_optimal_feedback}
\end{equation}

Now, we `translate' the above discrete setting to our continuous version. Since $\lambda^a$ and $\lambda^b$ represent the intensities of orders with size $\Delta$, we let $a_t = \Delta \, \lambda^a$ and $b_t = \Delta \, \lambda^b$ for all $t$. Interpreting $\Lambda(\delta)$ as portion, the inventory and cash processes become
\begin{gather*}
    \tilde{Q}_t = q_0 - \int_0^t \Delta \lambda^a \cdot \Lambda(\delta_u^a) \, du + \int_0^t \Delta \lambda^b \cdot \Lambda(\delta^b_u) \, du, \\
    \tilde{X}_t = \int_0^t \Delta \lambda^a \cdot \Lambda(\delta_u^a) \, (S_u + \delta_u^a) \, du - \int_0^t \Delta \lambda^b \cdot \Lambda(\delta_u^b) \, (S_u - \delta_u^b) \, du.
\end{gather*}
The market maker seeks to maximize the same objective functional. Note that this is a specific case of our study in previous sections. Let $\tilde{u}(t, x, q, s)$ be the value function and its HJB equation reads
\begin{equation*}
\begin{aligned}
    0 = -\frac{\partial \tilde{u}}{\partial t} + \frac{1}{2} \sigma^2 q^2 - \frac{1}{2}\sigma^2 \frac{\partial^2 \tilde{u}}{\partial s^2} &- \Delta \lambda^b \sup_{\delta^b} \bigg\{ \Lambda(\delta^b
    ) \cdot \Big[(s - \delta^b) \, \frac{\partial \tilde{u}}{\partial x} + \frac{\partial \tilde{u}}{\partial q} \Big] \bigg\}\\
    &- \Delta \lambda^a \sup_{\delta^a} \bigg\{ \Lambda(\delta^a
    ) \cdot \Big[(s + \delta^a) \, \frac{\partial \tilde{u}}{\partial x} - \frac{\partial \tilde{u}}{\partial q} \Big] \bigg\}.
\end{aligned}
\end{equation*}
By the same ansatz $\tilde{u}(t, x, q, s) = x + q s + \tilde{\theta}(t, q)$, the equation above becomes
\begin{equation}
\begin{aligned}
    0 = -\frac{\partial \tilde{\theta}}{\partial t} + \frac{1}{2} \sigma^2 q^2 &- \Delta \lambda^b \sup_{\delta^b} \bigg\{ \Lambda(\delta^b
    ) \cdot \Big[\delta^b + \frac{\partial \tilde{\theta}}{\partial q} \Big] \bigg\}\\
    &- \Delta \lambda^a \sup_{\delta^a} \bigg\{ \Lambda(\delta^a
    ) \cdot \Big[\delta^a - \frac{\partial \tilde{\theta}}{\partial q} \Big] \bigg\}
\end{aligned}
\label{new_hjb}
\end{equation}
with the boundary condition $\tilde{\theta}(T, q) = -A q^2$. The corresponding optimal control follows
\begin{equation}
    \hat{\delta}^a_t = \delta^*\Big( \frac{\partial \tilde{\theta}}{\partial q}\Big) \text{\; and \;} \hat{\delta}^b_t = \delta^*\Big( -\frac{\partial \tilde{\theta}}{\partial q}\Big).
    \label{new_optimal_feedback}
\end{equation}
Comparing \eqref{old_optimal_feedback} with \eqref{new_optimal_feedback}, as well as \eqref{old_hjb} with \eqref{new_hjb}, the only differences are that $[\theta(t, q) - \theta(t, q - \Delta)]/\Delta$ is replaced by $\partial \tilde{\theta} / \partial q$ and $[\theta(t, q) - \theta(t, q + \Delta)]/\Delta$ is replaced by $-\partial \tilde{\theta} / \partial q$. This indicates that the macroscopic market making model serves as a `continuous version' of the Avellaneda-Stoikov model from a solution perspective. Intuitively, we expect the difference between $\theta$ and $\tilde{\theta}$ to be small when $\Delta$ is small.

For numerical illustrations, we set $\Lambda(x) = \exp(-\gamma \, x)$. In the first example, using the same set of parameters, we solve the market making problem in the Avellaneda-Stoikov model. We simulate $500$ optimal inventory paths and present them using a green heat map alongside their average path. Due to the bid-ask symmetry, the problem admits an explicit solution; see Section 10.2 in \cite{cartea2015algorithmic}. Subsequently, we address the problem using the macroscopic model, where the optimal inventory is determined by solving \eqref{new_hjb} via the finite difference method. The result, shown in Figure \ref{average path}, aligns with our conjecture that the macroscopic model represents an `average' of the traditional model from the modelling perspective.

Our second example examines the difference between $\theta$ in the Avellaneda-Stoikov model and $\tilde{\theta}$ in the macroscopic model. Keeping all other parameters fixed, we compute the solutions for $\theta$ and $\tilde{\theta}$ across three different values of $\Delta$. Figure \ref{convergence} illustrates the function values at $t = 0$, supporting our conjecture that the difference between $\theta$ and $\tilde{\theta}$ diminishes as $\Delta$ approaches smaller values. Additionally, both $\theta$ and $\tilde{\theta}$ decrease as $\Delta$ becomes smaller. This is because, when $\lambda^a$ and $\lambda^b$ are fixed, a smaller order size $\Delta$ results in reduced market trading volume. Consequently, there are fewer profits to exploit, and the market maker becomes more cautious about exposing inventory; see also Example \ref{section1_example}. Finally, the macroscopic solution, exhibiting greater values, appears to be slightly more `optimistic' than the Avellaneda-Stoikov solution. We partially attribute this to differences in inventory penalty. Consider an agent liquidating a one-unit long position, with the corresponding market order having an intensity of $1$. In this case, the running penalty remains constant at $1$ until the order has arrived. Consequently, the expected total running penalty equals the expectation of an exponential random variable, which is $1$. In the macroscopic model, the order is represented continuously with a rate of $1$. As a result, the inventory decreases linearly, with the total running penalty reduced to just $1/3$.

\begin{figure}
    \centering
    \includegraphics[width = 0.6 \linewidth]{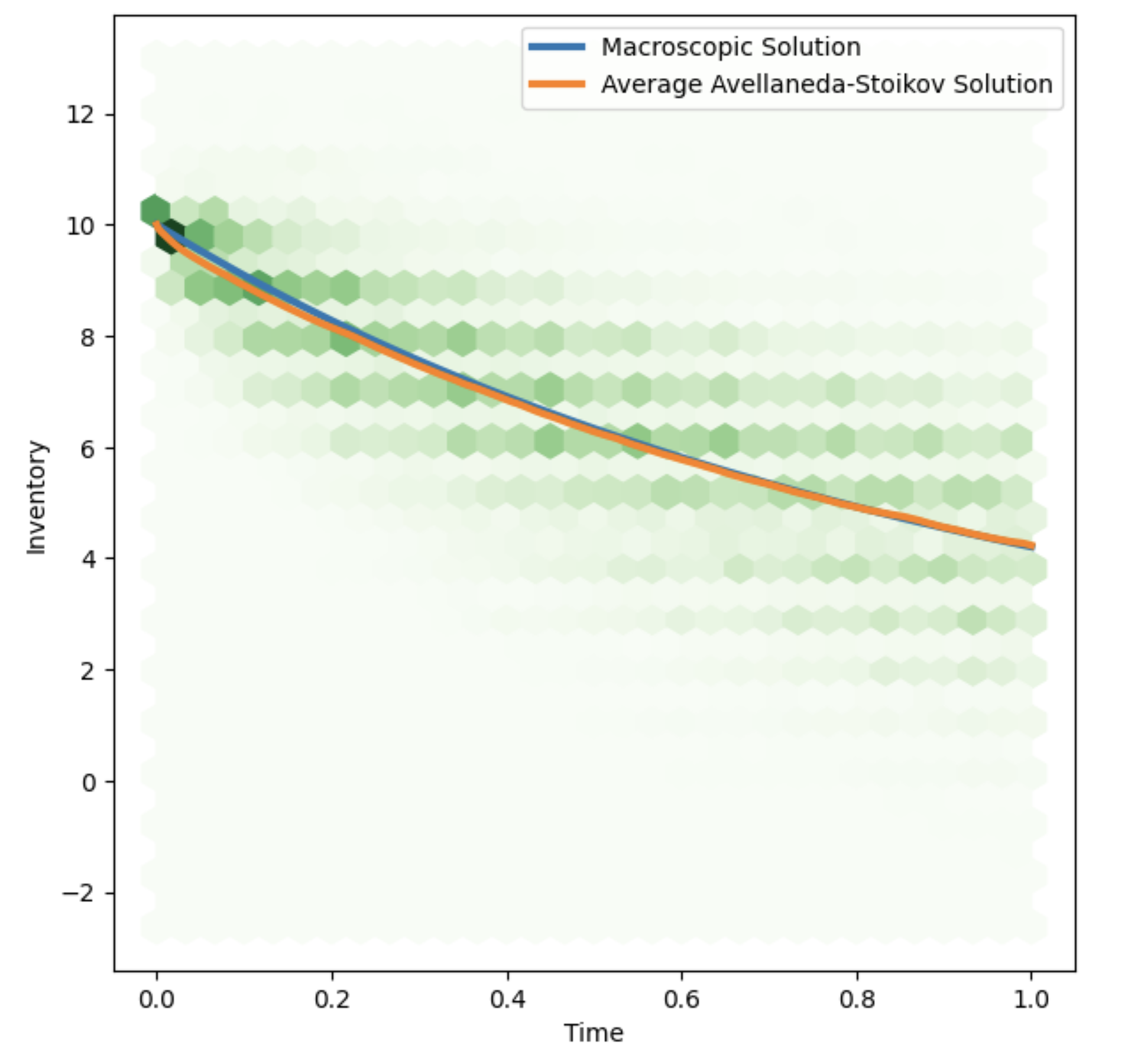}
    \caption{Average path and heat map of 500 Avellaneda-Stoikov solutions, alongside the macroscopic solution. Parameters: $T = \gamma = \Delta = 1$, $\lambda^a = \lambda^b = 10$, and $\sigma^2/2 = A = 0.05$, $q_0 = 10$.}
    \label{average path}
\end{figure}

\begin{figure}
    \centering
    \includegraphics[width = 0.8 \linewidth]{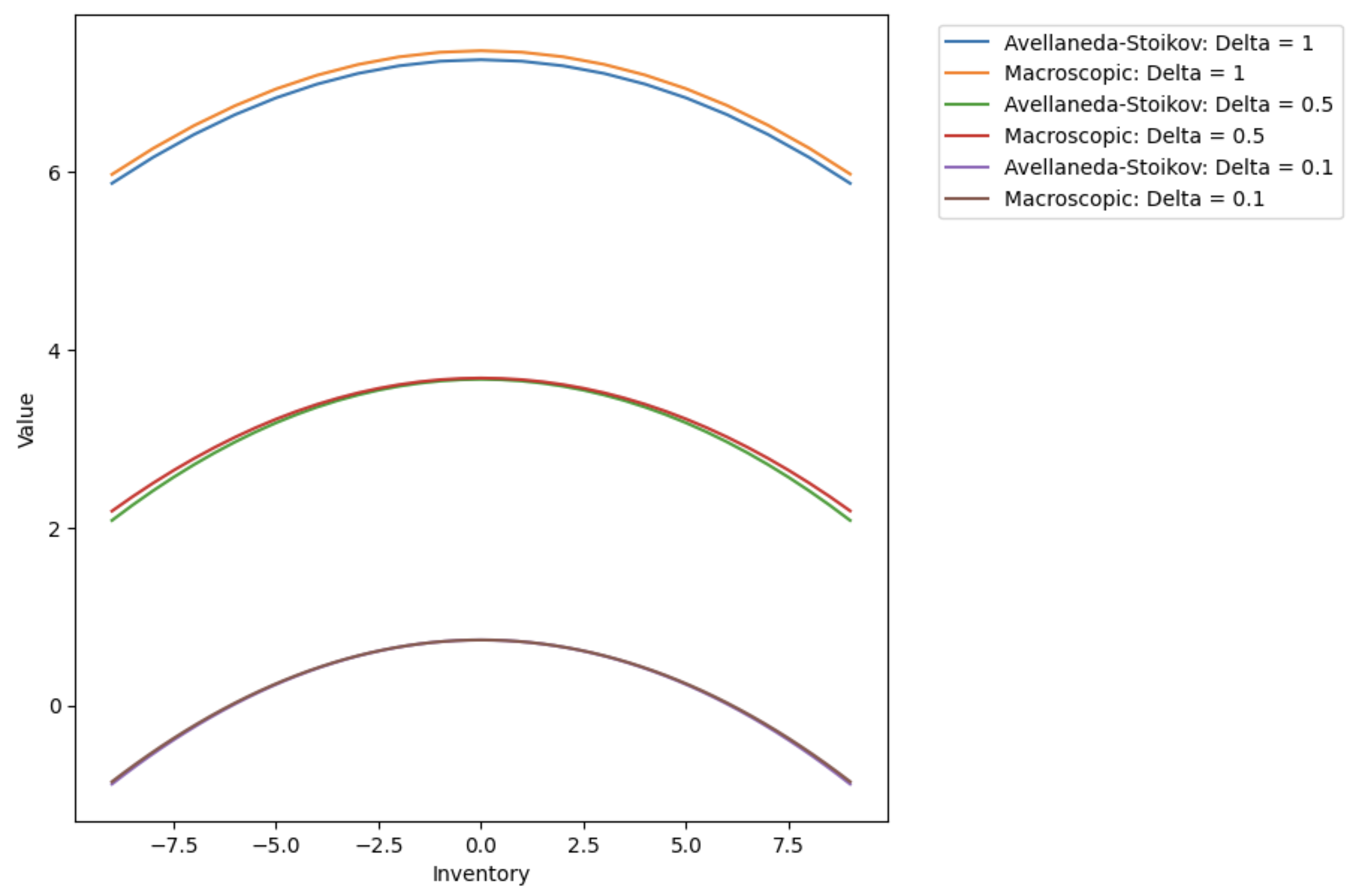}
    \caption{Values of $\theta(0, q)$ and $\tilde{\theta}(0, q)$ for $\Delta \in \{1, 0.5, 0.1\}$. Parameters: $T = \gamma = 1$, $\lambda^a = \lambda^b = 10$, and $\sigma^2/2 = A = 0.01$.}
    \label{convergence}
\end{figure}

\subsection{Concavity of price impact}
Price impact refers to the influence of trading activity on asset prices. When market participants execute trades, especially large orders, they can affect prices in both temporary and permanent ways. Temporary impact arises from short-term liquidity imbalances, whereas permanent impact reflects the market’s assimilation of potential information contained within the trades. The foundational work on price impact dates back to \cite{kyle1985continuous}, who introduced a framework where informed traders strategically execute orders while considering their price impact, and \cite{glosten1985bid}, who developed a model of asymmetric information in dealer markets. Empirical studies, such as \cite{toth2011anomalous}, have demonstrated that price impact generally follows a concave function of order size, often approximated by a square-root law. This finding has significantly influenced the development of modern trading algorithms and execution strategies. Recently, \cite{nadtochiy2022simple} provided a microstructural explanation for the asymptotic concavity of price impact.

Here, we restrict ourselves to the permanent price impact: the expected price change as a function of the volume of a meta-order (a large order that is split into smaller pieces and executed incrementally). Specifically, we analyse this impact from the perspective of a single market maker, where the price change is measured as the difference between the agent's ask price and the mid-price at the end; that is, $\delta_T^a$. In other words, we look at how the agent’s quote price responds to varying order volumes. In practice, with respect to liquidity provision, the best price is determined collectively by numerous market makers. Accordingly, the subsequent work of \cite{guo2024macroscopic} investigates the strategic interactions among multiple market makers.

We derive the concavity property within two simple scenarios of the macroscopic model. Let $A$ be a positive constant, and $b_t = \phi_t = 0$ for all $t$. In the first example, we assume the linear intensity $\Lambda(\delta) = \zeta - \gamma \delta$ and then Theorem \ref{unbound_linear_fbsde} yields the FBSDE
\begin{equation*}
\left\{
\begin{aligned}
\;& dQ_t  = - \zeta \, a_t \, dt \, /\,2 + \gamma\, a_t \,Y_t \, dt \, / \,2, \\
& dY_t = 0,\\
& Q_0 = q_0, \quad Y_T = -2A \, Q_T,
\end{aligned}
\right.
\end{equation*}
with the optimal control $\delta_t^{a,*}=\frac{\zeta}{2\gamma}+\frac{1}{2}Y_t$. Observing that $Y_t = -2A\, Q_T$ for all $t$, direct calculations give 
\begin{equation*}
    Q_T = \frac{1}{2} \, \frac{2q_0 - \zeta \int_0^T a_t \, dt}{1 + \gamma A \int_0^T a_t \, dt}.
\end{equation*}
The process $a$ is regarded as the execution procedure of a meta-order. If we consider $\delta_T^{a,*}$ (denoted as impact $f(x)$) as a function of $\int_0^T a_t \, dt$ (denoted as volume $x$), this function is expressed as follows:
\begin{equation*}
    f(x) = \frac{\zeta}{2\gamma} + \frac{1}{2} \, \frac{\zeta x - 2q_0}{A^{-1} + \gamma x}.
\end{equation*}
It is straightforward to check the concavity of $f$. If we further assume $q_0 \leq 0$, the explicit expression leads to several economic interpretations of the result. Parameters $\zeta$ and $\gamma$ measure the liquidity of the order book. As $\zeta$ increases or $\gamma$ decreases, the order book becomes more illiquid (or `thinner'), allowing the agent to enjoy a higher premium for each trade. Additionally, as the penalty parameter $A$ increases or $q_0$ decreases, the market maker---fearing inventory risk---demands a higher premium for sellings.

The second example is identical to the previous one, except that we adopt the exponential intensity $\Lambda(\delta) = \exp(-\gamma \, x)$. The $\delta^*$ function in Lemma \ref{inten_fun} has the expression $\delta^*(p) = \frac{1}{\gamma} + p$. Equation \eqref{non_lip_fbsde} yields the FBSDE
\begin{equation*}
\left\{
\begin{aligned}
\;& dQ_t  = - a_t \, \exp(-1-\gamma Y_t) \, dt, \\
& dY_t = 0,\\
& Q_0 = q_0, \quad Y_T = -2A \, Q_T,
\end{aligned}
\right.
\end{equation*}
with the optimal control $\delta_t^{a,*} = \frac{1}{\gamma} + Y_t$. We can obtain 
\begin{equation*}
    Q_T + \exp ( 2\gamma A \, Q_T - 1 ) \int_0^T a_t \, dt - q_0 = 0.
\end{equation*}
Since $\delta_t^{a,*} = \frac{1}{\gamma} -2A \, Q_T$, it suffices to prove the convexity of $Q_T$ with respect to $\int_0^T a_t \, dt$. We write $\int_0^T a_t \, dt$ as $x$, and $Q_T$ as $y$. The expression on the left hand side becomes
\begin{equation*}
    F(x, y) := y + x \, \exp \big( 2\gamma A \, y - 1 \big) - q_0.
\end{equation*}
We intend the study the relation $y = f(x)$ such that $F(x, f(x)) = 0$ for any $x \geq 0$. Note that function $f$ is differentiable due to the implicit function theorem. Through taking the derivative on both sides, we then have
\begin{equation*}
\begin{aligned}
    f'(x) &= (-1) \, \frac{\partial F} {\partial x} \big/ \frac{\partial F} {\partial y}\\
        &= (-1) \, \frac{\exp(2 \gamma A \, f(x) - 1)}{ 1 + 2 \gamma A \, x \, \exp(2 \gamma A \, f(x) - 1)},\\
   (-\frac{\partial F}{\partial y}) \, f''(x) &= \frac{\partial^2 F}{\partial x^2} + \frac{\partial^2 F}{\partial x \partial y} \, f'(x) + \Big[\frac{\partial^2 F}{\partial x \partial y} + \frac{\partial^2 F}{\partial y^2} \, f'(x) \Big]f'(x)\\
    & = 2\gamma A\exp(2\gamma A f(x)- 1) \, f'(x) \, \Big[2 - \frac{2\gamma A \, x \, \exp(2\gamma A f(x)- 1)}{1 + 2\gamma A \, x \, \exp(2\gamma A f(x)- 1)}\Big] < 0. 
\end{aligned}
\end{equation*}
Since $\partial F/\partial y > 0$, it implies $f''(x) > 0$. Hence, we have shown the convexity of $Q_T$ with respect to the order volume. In the case of $q_0 \leq 0$, following a similar discussion, we can still observe that the market maker demands a higher premium when $A$ increases (due to inventory risk) or $\gamma$ decreases (taking advantage of the illiquid order book).

For more general settings, we rely on numerical solutions. We set $\Lambda(x) = \exp(-\gamma x)$. To solve the FBSDE \eqref{non_lip_fbsde}, we define the decoupling field $Y_t = u(t, Q_t)$, where $u$ satisfies a PDE derived by matching the coefficients; see \cite{ma2012non}. The function $u$ is then computed using the finite difference method, and the solution is obtained by substituting $u$ back into \eqref{non_lip_fbsde}. In addition, we intend to examine the relationship between the impact $\delta_T^a$ and the order imbalance $\int_0^T a_t \, dt - \int_0^T b_t \, dt$. To achieve this, we begin by generating processes $a$ and $b$ with constant rates. Next, we scale $a$ such that $\int_0^T a_t \, dt - \int_0^T b_t \, dt = 5n$ and compute the corresponding impact, for $n \in \{0, \dots, 19\}$. Finally, we plot the resulting graph of order imbalance against $\delta_T^a$. Figure \ref{deter_impact} reveals a concave relationship between price impact and order imbalance. Intuitively, the price impact increases with the terminal penalty parameter.

Next, we generate order flow as sequences of independent and identically distributed random variables (treated as deterministic) and scale them such that $\int_0^T a_t \, dt - \int_0^T b_t \, dt = 5n$. We then compute the corresponding impact for $n \in \{0, \dots, 19\}$. This procedure is repeated $40$ times. The results, shown in Figure \ref{random_impact}, are significantly noisier but can be well-fitted by a power function with a degree less than $1$.

\begin{figure}
    \centering
    \includegraphics[width = 0.6\linewidth]{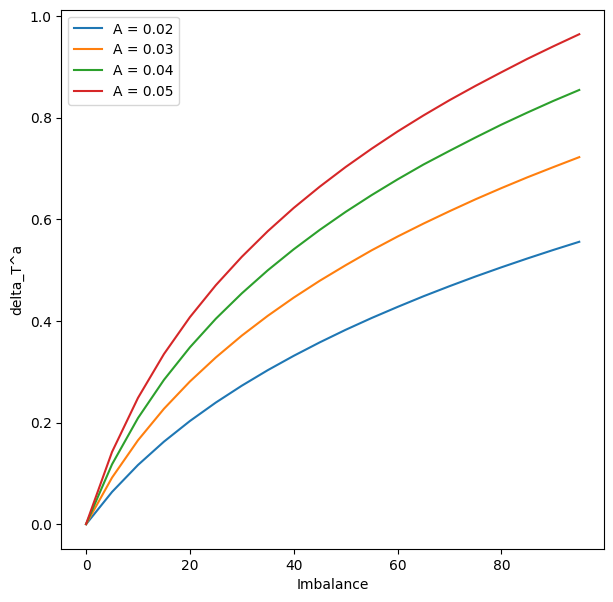}
    \caption{Price impacts corresponding to constant order flows at varying imbalance levels. Parameters: $\gamma = T = 1$ and $\phi = 0.02$, $q_0 = 0$.}
    \label{deter_impact}
\end{figure}

\begin{figure}
    \centering
    \includegraphics[width = 0.6\linewidth]{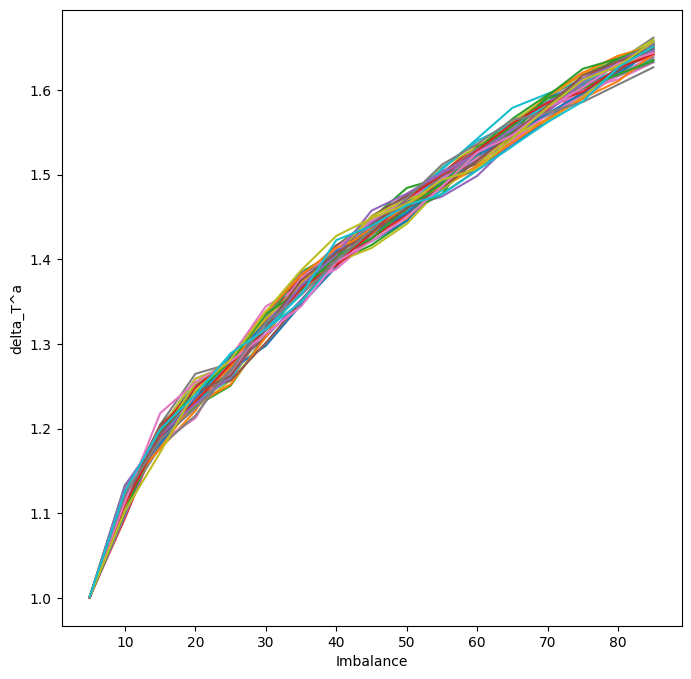}
    \caption{Forty realizations of price impacts corresponding to random order flows at varying imbalance levels. Parameters: $\gamma = T = 1$ and $\phi = A = 0.01$, $q_0 = 0$.}
    \label{random_impact}
\end{figure}

\subsection{Link with optimal execution}
Since order flows are modelled as rates in both frameworks, we can now establish a connection between macroscopic market making and optimal execution. Before doing so, we briefly review the setup of an optimal execution problem. A trader intends to trade $q_0^\mathfrak{e}$ amount of assets through market orders. Here, the superscript $\mathfrak{e}$ is used to denote that the agent is operating under an execution program. Given any trading strategy $(v_t)_t$, her inventory follows
\begin{equation*}
    Q_t^\mathfrak{e} = q_0^\mathfrak{e} + \int_0^t v_u \,du.
\end{equation*}
The trading activities bring an impact on the price, the market price $(D_t)_t$ has the dynamics
\begin{equation*}
    D_t = P_t + \int_0^t f(v_u) \, du.
\end{equation*}
Here, the process $(P_t)_t$ represents the reference price, while the function $f$ denotes a permanent impact function (e.g., a linear function as used in \cite{almgren2001optimal}). Considering liquidity and transaction costs, the transaction price $(\hat{S}_t)_t$ ---distinct from the market price---is given by
\begin{equation*}
    \hat{S}_t = D_t + g(v_t),
\end{equation*}
where function $g$ indicates a temporary impact function (e.g., a linear function as in \cite{almgren2001optimal}). The agent intends to maximize the objective functional
\begin{equation*}
    \mathbb{E}\Big[ -\int_0^T \hat{S}_t \, v_t \,dt - \int_0^T \phi^\mathfrak{e} \, (Q_t^\mathfrak{e})^2 \,dt - A^\mathfrak{e} \, (Q_T^\mathfrak{e})^2\Big]
\end{equation*}
that consists of the cash account and the inventory penalty. 

While many well-formulated models of the permanent impact function 
$f$ have been introduced, the function itself is chosen independently of the trading strategy. In other words, the impact function is exogenous. Our objective is to introduce an endogenous permanent impact from the perspective of liquidity provision. To achieve this, we additionally consider a market maker within the framework of macroscopic model. Furthermore, we assume that this market maker consistently provides the best price over the horizon $[0, T]$. Consequently, from the trader's perspective, the market price is $S_t^a$ when buying and $S_t^b$ when selling at time $t$. If we reserve the temporary impact function $g$, the terminal cash account is then given by
\begin{equation*}
    -\int_0^T \big(S_t + \delta_t^a + g(v_t) \big) \, v_t \, \mathbb{I}(v_t \geq 0) \, dt
    -\int_0^T \big(S_t - \delta_t^b + g(v_t) \big) \, v_t \, \mathbb{I}(v_t \leq 0) \, dt.
\end{equation*}
Since $(S_t)_t$ is typically a martingale, the objective functional reads
\begin{equation*}
\begin{aligned}
    \mathbb{E}\Big[ -\int_0^T \big(\delta_t^a + g(v_t) \big) \, v_t \, \mathbb{I}(v_t \geq 0) \, dt
    -\int_0^T \big(- \delta_t^b &+ g(v_t) \big) \, v_t \, \mathbb{I}(v_t < 0) \, dt\\
    &- \int_0^T \phi^\mathfrak{e} \, (Q_t^\mathfrak{e})^2 \,dt - A^\mathfrak{e} \, (Q_T^\mathfrak{e})^2\Big].
\end{aligned}
\end{equation*}
In this context, the overall procedure is:
\begin{itemize}
    \item[(1)] The trader propose a trading scheme $(v_t)_t$;\\
    \vspace{-0.2 cm}
    \item[(2)] The order flows are now given by
    \begin{equation*}
        a_t = \tilde{a}_t + v_t \, \mathbb{I}(v_t \geq 0) \text{ \; and \; } b_t = \tilde{b}_t - v_t \, \mathbb{I}(v_t < 0),
    \end{equation*}
    where $(\tilde{a}_t)_t$ and $(\tilde{b}_t)_t$ represent the trading activities from the others. Order flows $a$, $b$, and penalty parameters serve as the input for the macroscopic market making problem. The output consists of the optimal quoting strategies $(\delta_t^a)_t$ and $(\delta_t^b)_t$ from the market maker.\\
    \vspace{-0.2cm}

    \item[(3)] The strategy $(v_t)_t$ is thus evaluated by the objective functional.
\end{itemize}
The trader seeks to determine the optimal trading scheme. The key novelty of this optimisation problem is that price impact arises from the strategic interactions between the trader and the market maker, rather than being dictated by a pre-specified function. In fact, the above procedure describes a stochastic differential game between the trader and the market maker, the mathematical analysis of which will be presented in our subsequent work. Moreover, the assumption that the market maker under consideration always provides the best price may be overly strong. We refer the reader to \cite{guo2024macroscopic} for an analysis of the game between multiple market makers.

We present three simple trading schemes and numerically evaluate their performance. Consider an optimal liquidation problem with $q_0^\mathfrak{e} = 40$. For simplicity, we set $\phi^\mathfrak{e} = 0$ and $g \equiv 0$. Additionally, the trader submits only sell market orders with the constraint that $Q_T^\mathfrak{e} = 0$. Setting $\Lambda(x) = \exp(-\gamma x)$, the objective functional simplifies to $-\mathbb{E}[\int_0^T v_t \, Y_t \, dt]$ after removing a constant term. In each experiment, we randomly generate order flows $a, b$ such that $\int_0^T a_t \, dt - \int_0^T b_t \, dt = 30$. They are again treated as deterministic order flow. The trading schemes under consideration are:
\begin{itemize}
    \item Time-Weighted Average Price (TWAP): the total volume $q_0^\mathfrak{e}$ is evenly distributed across time;\\
    \vspace{-0.1cm}

    \item Volume-Weighted Average Price (VWAP): The trading volume at time $t$ is distributed in proportion to the market volume or activity, represented here by the selling rate $b_t$;\\
    \vspace{-0.1cm}
    
    \item Exploitative strategy: Since the buying volume is known to be higher than the selling volume, we anticipate that the imbalance during the first half will push the price up. Therefore, trades can be executed evenly during the second half of the time horizon.
\end{itemize}
Figure \ref{trade} presents the histogram of outcomes from $100$ experiments. In the given scenario, the TWAP and VWAP strategies show similar performance, while the exploitative strategy demonstrates a clear advantage.

\begin{figure}
    \centering
    \includegraphics[width = 0.7\linewidth]{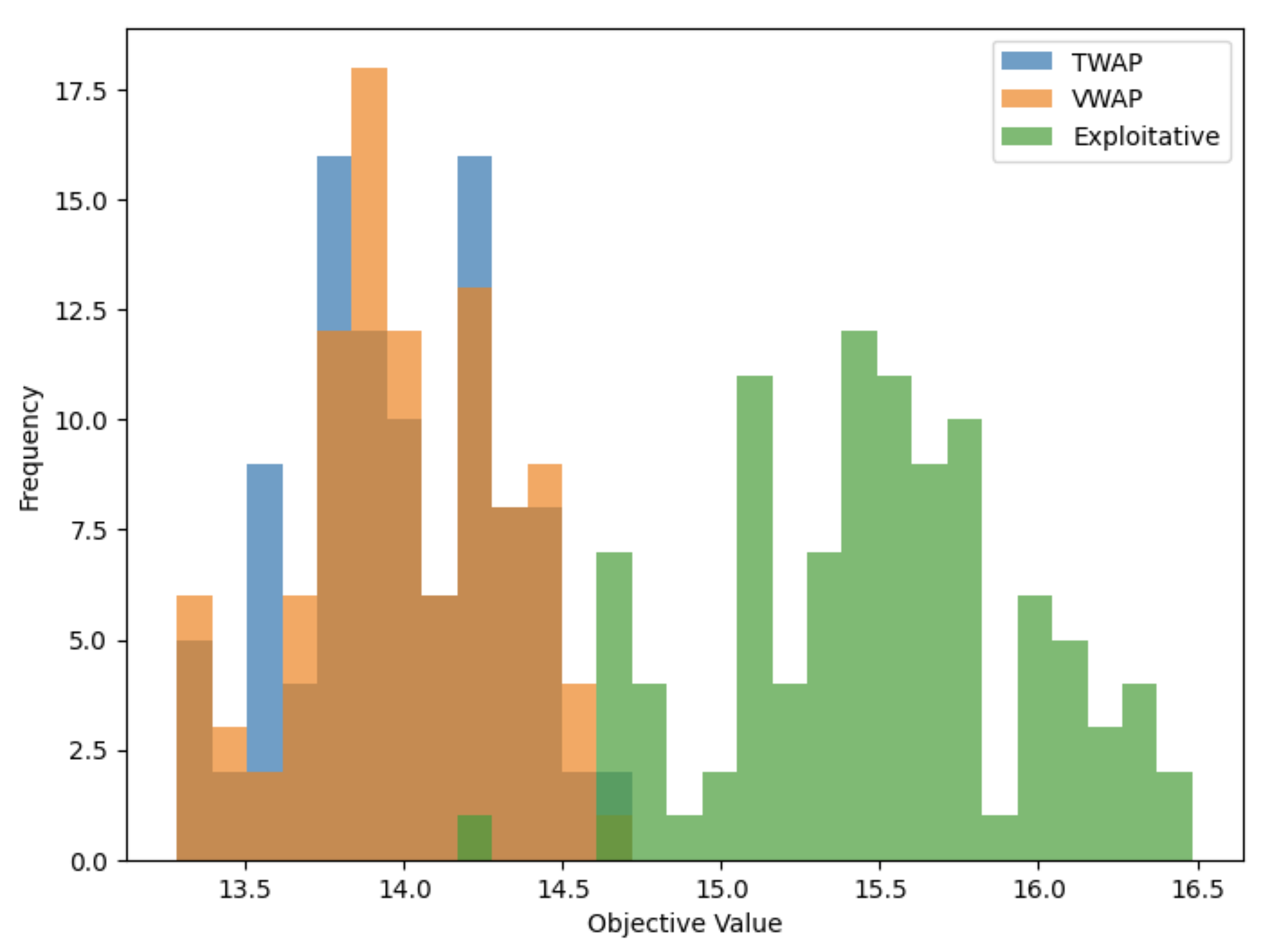}
    \caption{Histogram of objective function values for the three trading strategies. Parameters: $\gamma = T = 1$, $\phi = A = 0.04$, $q_0 = 0$.}
    \label{trade}
\end{figure}

\section{Appendix: Stochastic Maximum Principle} \label{section_5}

\noindent This section is devoted to the stochastic maximum principle for the general intensity function. We follow the argument in \cite{carmona2016lectures}, and start with the \textit{necessary} condition. Fix an admissible control $\boldsymbol{\delta}\in \mathbb{A}\times\mathbb{A}$ and denote by $Q = Q^{\boldsymbol{\delta}}\in\mathbb{S}^2$ the corresponding controlled inventory, which is also uniformly bounded. Next, consider $\boldsymbol{\beta}\in \mathbb{A}\times\mathbb{A}$---that is uniformly bounded and satisfies that $\boldsymbol{\delta}+\boldsymbol{\beta}\in\mathbb{A}\times\mathbb{A}$---as the direction in which we are computing the
G\^ateaux derivative of $J$. For each $\epsilon > 0$ small enough,
we consider the admissible control $\boldsymbol{\delta}^\epsilon\in \mathbb{A}\times\mathbb{A}$ defined by $\boldsymbol{\delta}^\epsilon_t = \boldsymbol{\delta}_t + \epsilon\,\boldsymbol{\beta}_t$, and the corresponding controlled state $Q^\epsilon:= Q^{\boldsymbol{\delta}^\epsilon}\in\mathbb{S}^2$. Define $V$ as the solution of the equation
\begin{equation*}
    dV_t=\partial_{\boldsymbol{\delta}}\mu(t, \boldsymbol{\delta}_t)\cdot\boldsymbol{\beta}_t\,dt,
\end{equation*}
with the initial condition $V_0 = 0$, where
\begin{equation*}
    \mu(t,\boldsymbol{\delta}):=-a_t\,\Lambda^a(\delta_t^a)+b_t\,\Lambda^b(\delta_t^b).
\end{equation*}
It is clear that $V$ is uniformly bounded in the finite time horizon. We then look at the G\^ateaux differentiability of the functional $J$.

\begin{lemma}
The functional $\boldsymbol{\delta}\hookrightarrow J(\boldsymbol{\delta})$ is G\^ateaux differentiable and the derivative reads
\begin{equation}
\begin{aligned}
    \frac{d}{d\epsilon}J(\boldsymbol{\delta}+\epsilon\boldsymbol{\beta})\big|_{\epsilon=0}:&=\lim_{\epsilon\searrow0}\frac{1}{\epsilon}\,\big[J(\boldsymbol{\delta}+\epsilon\boldsymbol{\beta})-J(\boldsymbol{\delta})\big]\\
    &=\mathbb{E}\Big[\int_0^T\big(\partial_qf(t,Q_t,\boldsymbol{\delta}_t)\,V_t+\partial_{\boldsymbol{\delta}}f(t,Q_t,\boldsymbol{\delta}_t)\cdot\beta_t\big)\,dt-2\,A\,V_T\,Q_T\Big],
    \label{gen_gat_deriv}
\end{aligned}
\end{equation}
where $f$ is the running payoff defined by
\begin{equation*}
    f(t, Q_t, \boldsymbol{\delta}_t)=a_t\,\delta_t^a\,\Lambda^a(\delta_t^a)+b_t\,\delta_t^b\,\Lambda^b(\delta_t^b)-\phi_t\,Q_t^2.
\end{equation*}
\end{lemma}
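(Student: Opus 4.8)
The plan is to expand the controlled inventory to first order in $\epsilon$, then differentiate the cost functional term by term, passing all limits inside the expectation by dominated convergence — an approach made painless here by the fact that \emph{every} relevant quantity is uniformly bounded: the coefficients $a,b,\phi,A$ by Assumption, the controls $\boldsymbol{\delta},\boldsymbol{\delta}+\epsilon\boldsymbol{\beta}$ because they live in the truncated set $\mathbb{A}$, and the inventories $Q^\epsilon,Q,V$ because they solve linear random ODEs with bounded, $\omega$-wise continuous drifts.

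\emph{Step 1 (variational inventory).} Since $Q_t^\epsilon-Q_t=\int_0^t\big[\mu(u,\boldsymbol{\delta}_u+\epsilon\boldsymbol{\beta}_u)-\mu(u,\boldsymbol{\delta}_u)\big]\,du$ and $\boldsymbol{\delta}\mapsto\mu(t,\boldsymbol{\delta})$ is $C^1$ with derivative bounded and uniformly continuous on the compact action region $[-\xi,\xi]^2$ (because $\Lambda^a,\Lambda^b\in\boldsymbol{\Lambda}$ are $C^2$ and $a,b$ are bounded), I would write
\begin{equation*}
\frac{Q_t^\epsilon-Q_t}{\epsilon}=\int_0^t\!\!\int_0^1\partial_{\boldsymbol{\delta}}\mu\big(u,\boldsymbol{\delta}_u+\theta\epsilon\boldsymbol{\beta}_u\big)\cdot\boldsymbol{\beta}_u\,d\theta\,du.
\end{equation*}
The integrand is bounded uniformly in $\epsilon$ and converges $d\mathbb{P}\times dt$-a.e.\ to $\partial_{\boldsymbol{\delta}}\mu(u,\boldsymbol{\delta}_u)\cdot\boldsymbol{\beta}_u$; dominated convergence then yields $(Q^\epsilon-Q)/\epsilon\to V$ uniformly in $t$ and in $\mathbb{S}^2$, with a uniform bound $\sup_{\epsilon,t}|Q_t^\epsilon-Q_t|/\epsilon\le C$.

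\emph{Step 2 (differentiating $J$).} Decompose $f(t,q,\boldsymbol{\delta})=g(t,\boldsymbol{\delta})-\phi_t q^2$ with $g(t,\boldsymbol{\delta}):=a_t\delta^a\Lambda^a(\delta^a)+b_t\delta^b\Lambda^b(\delta^b)$, so that $\partial_{\boldsymbol{\delta}}f=\partial_{\boldsymbol{\delta}}g$ and $\partial_q f=-2\phi_t q$. For the terminal contribution, factor $\tfrac{1}{\epsilon}\big(A(Q_T^\epsilon)^2-AQ_T^2\big)=A(Q_T^\epsilon+Q_T)\cdot\tfrac{Q_T^\epsilon-Q_T}{\epsilon}\to 2AQ_TV_T$ by Step 1 and boundedness of $A,Q^\epsilon,Q$. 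For the running payoff, the same factorization gives $\phi_t(Q_t^\epsilon+Q_t)\tfrac{Q_t^\epsilon-Q_t}{\epsilon}\to 2\phi_tQ_tV_t=-\partial_qf(t,Q_t,\boldsymbol{\delta}_t)V_t$, while a first-order expansion of $g$ (again $C^1$ with bounded derivative on $[-\xi,\xi]^2$ times bounded $a,b$) yields $\tfrac{1}{\epsilon}\big(g(t,\boldsymbol{\delta}_t+\epsilon\boldsymbol{\beta}_t)-g(t,\boldsymbol{\delta}_t)\big)=\int_0^1\partial_{\boldsymbol{\delta}}g(t,\boldsymbol{\delta}_t+\theta\epsilon\boldsymbol{\beta}_t)\cdot\boldsymbol{\beta}_t\,d\theta\to\partial_{\boldsymbol{\delta}}f(t,Q_t,\boldsymbol{\delta}_t)\cdot\boldsymbol{\beta}_t$. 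All these difference quotients are bounded uniformly in $\epsilon$ by a fixed constant, so a final application of dominated convergence on $\Omega\times[0,T]$ and on $\Omega$ produces exactly \eqref{gen_gat_deriv}; since this holds for every admissible direction, $J$ is G\^ateaux differentiable.

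The main (and essentially only) delicate point is Step 1: one must verify that the first-order remainder in the expansion of $\mu$ is $o(\epsilon)$ \emph{uniformly} in $(\omega,u)$, so that $(Q^\epsilon-Q)/\epsilon\to V$ in $\mathbb{S}^2$ rather than merely pointwise. This is precisely where the $C^2$ hypothesis on the intensity functions (giving a locally uniformly continuous first derivative) and the compactness of the truncated action interval $[-\xi,\xi]$ enter; everything else reduces to bookkeeping with the dominated convergence theorem.
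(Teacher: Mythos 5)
Your proof is correct and follows essentially the same route as the paper: both arguments expand the controlled inventory, factor the quadratic penalties as $(Q^\epsilon+Q)(Q^\epsilon-Q)/\epsilon$, exploit boundedness of $a,b,\phi,A$, the truncated control set $[-\xi,\xi]^2$, and the resulting inventories to get uniform domination, and pass to the limit term by term via dominated convergence. The only cosmetic difference is that you isolate the claim $(Q^\epsilon-Q)/\epsilon\to V$ as a separate Step 1 with a Taylor integral remainder (a slightly stronger, uniform statement), whereas the paper substitutes the integral formula for $Q^\epsilon-Q$ in place and applies dominated convergence directly without naming the intermediate convergence; the mathematics is the same.
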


\begin{proof}
Regarding the ask side, we compute that
\begin{equation*}
\begin{aligned}
\frac{1}{\epsilon}\,&\mathbb{E}\Big[\int_0^Ta_t\,(\delta_t^a+\epsilon\,\beta_t^a)\,\Lambda^a(\delta_t^a+\epsilon\,\beta_t^a)\,dt-\int_0^T a_t\,\delta_t^a\,\,\Lambda^a(\delta_t^a)\,dt\Big]\\
&=\mathbb{E}\Big[\int_0^Ta_t\,\delta_t^a\,\frac{1}{\epsilon}\,\big(\Lambda^a(\delta_t^a+\epsilon\,\beta_t^a)-\Lambda^a(\delta_t^a)\big)\,dt\Big]+\mathbb{E}\Big[\int_0^Ta_t\,\beta_t^a\,\Lambda^a(\delta_t^a+\epsilon\,\beta_t^a)\,dt\Big].
\end{aligned}
\end{equation*}
To perform the limiting procedure, note that
\begin{equation*}
\begin{aligned}
    a_t\,\delta_t^a\,\frac{1}{\epsilon}\,\big(\Lambda^a(\delta_t^a+\epsilon\,\beta_t^a)-\Lambda^a(\delta_t^a)\big)&\leq C\,a_t\,|\delta_t^a|,\\
    a_t\,\beta_t^a\,\Lambda^a(\delta_t^a+\epsilon\,\beta_t^a) &\leq C\, a_t,
\end{aligned}
\end{equation*}
the right hand sides of which are $d\mathbb{P}\times dt$-integrable on the time horizon. Therefore, it follows by the dominated convergence theorem that
\begin{equation*}
\begin{aligned}
    \lim_{\epsilon\to 0}\frac{1}{\epsilon}\,\mathbb{E}\Big[\int_0^Ta_t\,(\delta_t^a+\epsilon\,\beta_t^a)\,&\Lambda^a(\delta_t^a+\epsilon\,\beta_t^a)\,dt-\int_0^T a_t\,\delta_t^a\,\,\Lambda^a(\delta_t^a)\,dt\Big]\\
    &=\mathbb{E}\Big[\int_0^Ta_t\,\delta_t^a\,\big(\Lambda^{a}(\delta_t^a)\big)'\,\beta_t^a\,dt\Big]+\mathbb{E}\Big[\int_0^Ta_t\,\beta_t^a\,\Lambda^a(\delta_t^a)\,dt\Big]\\
    &=\mathbb{E}\Big[\int_0^T\partial_{\delta^a}f(t,Q_t,\boldsymbol{\delta}_t)\,\beta_t^a\,dt\Big].
\end{aligned}
\end{equation*}
The bid side can be computed in the same way:
\begin{equation*}
\begin{aligned}
    \lim_{\epsilon\to 0}\frac{1}{\epsilon}\,\mathbb{E}\Big[\int_0^Tb_t\,(\delta_t^b+\epsilon\,\beta_t^b)\,&\Lambda^b(\delta_t^b+\epsilon\,\beta_t^b)\,dt-\int_0^T b_t\,\delta_t^b\,\,\Lambda^b(\delta_t^b)\,dt\Big]\\
    &=\mathbb{E}\Big[\int_0^T\partial_{\delta^b}f(t,Q_t,\boldsymbol{\delta}_t)\,\beta_t^b\,dt\Big].
\end{aligned}
\end{equation*}
With respect to the running penalty, we further calculate that
\begin{equation*}
\begin{aligned}
    \frac{1}{\epsilon}\,\mathbb{E}\Big[\int_0^T &\phi_t\,(Q_t^\epsilon)^2\,dt-\int_0^T \phi_t\,(Q_t)^2\,dt\Big]\\
    &=\mathbb{E}\Big[\int_0^T \phi_t\,(Q_t^\epsilon+Q_t)\,\Big(-\int_0^ta_u\,\frac{1}{\epsilon}\,\big[\Lambda^a(\delta_u^a+\epsilon\,\beta_u^a)-\Lambda^a(\delta_u^a)\big]\,du\\
    &\hspace{4cm}+\int_0^tb_u\,\frac{1}{\epsilon}\,\big[\Lambda^b(\delta_u^b+\epsilon\,\beta_u^b)-\Lambda^b(\delta_u^b)\big]\,du\Big)\,dt\Big].
\end{aligned}
\end{equation*}
Being aware of the boundedness of $\phi$, $Q$, $Q^\epsilon$, $a$,  $b$ as well as the one of
\begin{equation*}
 \frac{1}{\epsilon}\,\big[\Lambda^a(\delta_u^a+\epsilon\,\beta_u^a)-\Lambda^a(\delta_u^a)\big],
 \text{\quad and \quad} \frac{1}{\epsilon}\,\big[\Lambda^b(\delta_u^b+\epsilon\,\beta_u^b)-\Lambda^b(\delta_u^b)\big],
\end{equation*}
we can conclude
\begin{equation*}
    \lim_{\epsilon\to 0}\,\frac{1}{\epsilon}\,\mathbb{E}\Big[\int_0^T \phi_t\,(Q_t^\epsilon)^2\,dt-\int_0^T \phi_t\,(Q_t)^2\,dt\Big]=2\,\mathbb{E}\Big[\int_0^T \phi_t\,Q_t\,V_t\,dt\Big].
\end{equation*}
The terminal penalty part can be justified in the same way and the proof is complete. 
\end{proof}

\noindent Let $(Y,Z)$ be the adjoint processes associated with $\boldsymbol{\delta}$, i.e., processes $(Y,Z)$ solve the BSDE
\begin{equation*}
    dY_t=2\,\phi_t\,Q_t\,dt+Z_t\,dW_t, \text{\quad and \quad} Y_T=-2A\,Q_T.
\end{equation*}
The well-posedness of such equation is guaranteed due to the boundedness of $A$, $\phi$ and $Q$. The duality relation provides an expression of the G\^ateaux derivative of the cost functional in terms of the Hamiltonian of the system.

\begin{lemma}
The duality relation is given by
\begin{equation*}
    \mathbb{E}[Y_T\,V_T]=\mathbb{E}\Big[\int_0^T\big(\partial_{\boldsymbol{\delta}}\mu(t,\boldsymbol{\delta}_t)\cdot\boldsymbol{\beta}_t\,Y_t-\partial_qf(t,Q_t,\boldsymbol{\delta}_t)\,V_t\big)\,dt\Big].
\end{equation*}
The duality further implies
\begin{equation*}
    \frac{d}{d\epsilon}\,J(\boldsymbol{\delta}+\epsilon\boldsymbol{\beta})\big|_{\epsilon=0}=\mathbb{E}\Big[\int_0^T\partial_{\boldsymbol{\delta}}\mathscr{H}(t,Q_t,Y_t,\boldsymbol{\delta}_t)\cdot\boldsymbol{\beta}_t\,dt\Big],
\end{equation*}
where $\mathscr{H}$ denotes the Hamiltonian as follows
\begin{equation*}
    \mathscr{H}(t, q, y, \boldsymbol{\delta}):=\big(-a_t\,\Lambda^a(\delta^a)+b_t\,\Lambda^b(\delta^b)\big)\,y+a_t\,\delta^a\,\Lambda^a(\delta^a)+b_t\,\delta^b\,\Lambda^b(\delta^b)-\phi_t\,(q)^2.
\end{equation*}
\end{lemma}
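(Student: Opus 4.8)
The plan is to establish the duality relation by a direct application of It\^o's product rule to $Y_tV_t$, and then combine it with the G\^ateaux derivative formula from the previous lemma to read off the Hamiltonian expression. First I would note that $V$ is an absolutely continuous (hence continuous, finite-variation) process with $dV_t=\partial_{\boldsymbol{\delta}}\mu(t,\boldsymbol{\delta}_t)\cdot\boldsymbol{\beta}_t\,dt$ and $V_0=0$, while $Y$ is an It\^o process with $dY_t=2\phi_t Q_t\,dt+Z_t\,dW_t$. Since $V$ has no martingale part, the cross-variation vanishes and the product rule gives
\begin{equation*}
    d(Y_tV_t)=V_t\,\big(2\phi_t Q_t\,dt+Z_t\,dW_t\big)+Y_t\,\partial_{\boldsymbol{\delta}}\mu(t,\boldsymbol{\delta}_t)\cdot\boldsymbol{\beta}_t\,dt.
\end{equation*}
Integrating from $0$ to $T$ and using $Y_0V_0=0$ yields an identity for $Y_TV_T$ involving a Lebesgue integral and the stochastic integral $\int_0^T V_tZ_t\,dW_t$.

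The one point requiring a short justification is that this stochastic integral has zero expectation: since $\boldsymbol{\beta}\in\mathbb{A}\times\mathbb{A}$ and $Q$ is uniformly bounded, the process $V$ is uniformly bounded on $[0,T]$, while $Z\in\mathbb{H}^2$ (as $(Y,Z)$ solves a BSDE with bounded data); hence $\mathbb{E}\int_0^T|V_tZ_t|^2\,dt<\infty$ and $\int_0^{\cdot}V_tZ_t\,dW_t$ is a true martingale. Taking expectations then gives $\mathbb{E}[Y_TV_T]=\mathbb{E}\int_0^T\big(2\phi_t Q_tV_t+Y_t\,\partial_{\boldsymbol{\delta}}\mu(t,\boldsymbol{\delta}_t)\cdot\boldsymbol{\beta}_t\big)\,dt$. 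Recognizing from the explicit form of $f$ that $\partial_qf(t,Q_t,\boldsymbol{\delta}_t)=-2\phi_t Q_t$ turns this into exactly the claimed duality relation.

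For the second assertion, I would substitute into the G\^ateaux derivative formula \eqref{gen_gat_deriv}, using $Y_T=-2AQ_T$ so that the terminal term $-2AV_TQ_T$ equals $Y_TV_T$, and then replace $\mathbb{E}[Y_TV_T]$ by the duality relation just proved. The two occurrences of $\partial_qf(t,Q_t,\boldsymbol{\delta}_t)V_t$ cancel, leaving $\mathbb{E}\int_0^T\big(\partial_{\boldsymbol{\delta}}f(t,Q_t,\boldsymbol{\delta}_t)\cdot\boldsymbol{\beta}_t+Y_t\,\partial_{\boldsymbol{\delta}}\mu(t,\boldsymbol{\delta}_t)\cdot\boldsymbol{\beta}_t\big)\,dt$. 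Finally, since $\mathscr{H}(t,q,y,\boldsymbol{\delta})=\mu(t,\boldsymbol{\delta})\,y+f(t,q,\boldsymbol{\delta})$ by inspection of the definitions, one has $\partial_{\boldsymbol{\delta}}\mathscr{H}=y\,\partial_{\boldsymbol{\delta}}\mu+\partial_{\boldsymbol{\delta}}f$, so the integrand is precisely $\partial_{\boldsymbol{\delta}}\mathscr{H}(t,Q_t,Y_t,\boldsymbol{\delta}_t)\cdot\boldsymbol{\beta}_t$, which is the desired identity. There is no serious obstacle here; the only care needed is the integrability check making the martingale term vanish in expectation, together with bookkeeping of signs in the identification $\partial_qf=-2\phi_t Q_t$ and the additive decomposition of $\mathscr{H}$.
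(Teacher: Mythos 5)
Your proposal is correct and follows essentially the same route as the paper: apply the Itô product rule to $Y_tV_t$ (noting $V$ is of finite variation so the bracket vanishes), take expectations while observing $\int_0^T V_tZ_t\,dW_t$ is a true martingale (you justify this via boundedness of $V$ and $Z\in\mathbb{H}^2$, whereas the paper simply asserts square-integrability), recognize $\partial_qf=-2\phi_tQ_t$, and then substitute into \eqref{gen_gat_deriv} using $Y_T=-2AQ_T$ and the decomposition $\mathscr{H}=\mu\,y+f$. The bookkeeping of signs and cancellations matches the paper's computation exactly.
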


\begin{proof}
The integration by parts gives
\begin{equation*}
\begin{aligned}
    Y_T\,V_T&=Y_0\,V_0+\int_0^T Y_t\,dV_t+\int_0^T V_t\,dY_t\\
    &=\int_0^TY_t\, \partial_{\boldsymbol{\delta}}b(t, \boldsymbol{\delta}_t)\cdot\boldsymbol{\beta}_t\,dt+\int_0^T 2\,\phi_t\,V_t\,Q_t\,dt+\int_0^T V_t\,Z_t\,dW_t.
\end{aligned}
\end{equation*}
The duality is verified after taking the expectation, since the last term is a square integrable martingale. Using this relation, we can further compute
\begin{equation*}
\begin{aligned}
    \frac{d}{d\epsilon}&J(\boldsymbol{\delta}+\epsilon\boldsymbol{\beta})\big|_{\epsilon=0}\\
    &=\mathbb{E}\Big[\int_0^T\big(\partial_qf(t,Q_t,\boldsymbol{\delta}_t)\,V_t+\partial_{\boldsymbol{\delta}}f(t,Q_t,\boldsymbol{\delta}_t)\cdot\boldsymbol{\beta}_t\big)\,dt+V_T\,Y_T\Big]\\
    &=\mathbb{E}\Big[\int_0^TY_t\, \partial_{\boldsymbol{\delta}}b(t, \boldsymbol{\delta}_t)\cdot\boldsymbol{\beta}_t\,dt+\int_0^T\partial_{\boldsymbol{\delta}}f(t,Q_t,\boldsymbol{\delta}_t)\cdot\boldsymbol{\beta}_t\,dt\Big]\\
    &=\mathbb{E}\Big[\int_0^T\partial_{\boldsymbol{\delta}}\mathscr{H}(t,Q_t,Y_t,\boldsymbol{\delta}_t)\cdot\boldsymbol{\beta}_t\,dt\Big].
\end{aligned}
\end{equation*}

\end{proof}

\noindent With these preliminaries, we finally reach the necessary condition for optimality.

\begin{theorem}
If the admissible control $\boldsymbol{\delta}$ is optimal, $Q$ is the associated controlled inventory, and $(Y, Z)$ are the associated adjoint
processes, then
\begin{equation*}
    \mathscr{H}(t,Q_t,Y_t,\boldsymbol{\delta}_t)\geq\mathscr{H}(t,Q_t,Y_t,\beta), \qquad \text{a.e. in } t\in[0,T],\; \mathbb{P} \text{-} a.s.,
\end{equation*}
for any $\beta\in[-\xi, \xi]\times[-\xi, \xi]$.
\end{theorem}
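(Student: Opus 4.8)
The plan is to assemble the two lemmas just established --- G\^ateaux differentiability of $J$ and the duality relation --- with a convex variational argument and the one-dimensional unimodality supplied by Lemma \ref{inten_fun}. Since $\mathbb{A}\times\mathbb{A}$ is convex and $\boldsymbol{\delta}$ is optimal, for any $[-\xi,\xi]^2$-valued progressively measurable process $\boldsymbol{\eta}$ the perturbation $\boldsymbol{\beta}:=\frac{1}{2}(\boldsymbol{\eta}-\boldsymbol{\delta})$ satisfies $\boldsymbol{\beta}\in\mathbb{A}\times\mathbb{A}$ and $\boldsymbol{\delta}+\epsilon\boldsymbol{\beta}\in\mathbb{A}\times\mathbb{A}$ for every $\epsilon\in[0,1]$, so the difference quotients $\epsilon^{-1}[J(\boldsymbol{\delta}+\epsilon\boldsymbol{\beta})-J(\boldsymbol{\delta})]$ are nonpositive; letting $\epsilon\searrow0$ and invoking the duality lemma gives
\begin{equation*}
    \mathbb{E}\Big[\int_0^T\partial_{\boldsymbol{\delta}}\mathscr{H}(t,Q_t,Y_t,\boldsymbol{\delta}_t)\cdot(\boldsymbol{\eta}_t-\boldsymbol{\delta}_t)\,dt\Big]\leq 0
\end{equation*}
for every such $\boldsymbol{\eta}$ (the harmless positive factor $\frac{1}{2}$ does not affect the sign).

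Next I would localize this integral inequality to a pointwise one. Fix $\beta\in[-\xi,\xi]^2$ and take $\boldsymbol{\eta}_t=\beta\,\mathbf{1}_{E}(t,\omega)+\boldsymbol{\delta}_t\,\mathbf{1}_{E^c}(t,\omega)$ for an arbitrary progressively measurable $E\subseteq[0,T]\times\Omega$; choosing $E=\{\partial_{\boldsymbol{\delta}}\mathscr{H}(t,Q_t,Y_t,\boldsymbol{\delta}_t)\cdot(\beta-\boldsymbol{\delta}_t)>0\}$ then forces
\begin{equation*}
    \partial_{\boldsymbol{\delta}}\mathscr{H}(t,Q_t,Y_t,\boldsymbol{\delta}_t)\cdot(\beta-\boldsymbol{\delta}_t)\leq 0 \qquad d\mathbb{P}\times dt\text{-a.e.}
\end{equation*}
Running $\beta$ through a countable dense subset of $[-\xi,\xi]^2$ and using continuity of $\beta\mapsto\partial_{\boldsymbol{\delta}}\mathscr{H}(t,Q_t,Y_t,\boldsymbol{\delta}_t)\cdot(\beta-\boldsymbol{\delta}_t)$ produces one $d\mathbb{P}\times dt$-null set off which this variational inequality holds simultaneously for all $\beta\in[-\xi,\xi]^2$.

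It remains to upgrade this first-order condition to the pointwise maximum, and here I would exploit the separation of state and control. For fixed $(t,\omega)$ outside the null set, write $g_t^a(\delta^a):=a_t\,\Lambda^a(\delta^a)(\delta^a-Y_t)$ and $g_t^b(\delta^b):=b_t\,\Lambda^b(\delta^b)(\delta^b+Y_t)$, so that $\mathscr{H}(t,Q_t,Y_t,(\delta^a,\delta^b))=g_t^a(\delta^a)+g_t^b(\delta^b)-\phi_t Q_t^2$. Taking $\boldsymbol{\eta}$ in the localization step to differ from $\boldsymbol{\delta}$ only in its ask component gives $(g_t^a)'(\delta^a_t)(\beta^a-\delta^a_t)\leq0$ for all $\beta^a\in[-\xi,\xi]$, and likewise on the bid side. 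But the computation preceding Theorem \ref{stochast_max_principle} --- using $\Lambda^{a'}<0$ together with $\sup_u\Lambda^a(u)\Lambda^{a''}(u)/\Lambda^{a'}(u)^2<2$ from Assumption \ref{inten_assu} --- shows $g_t^a$ is strictly increasing then strictly decreasing on $\mathbb{R}$; for such a $C^1$ unimodal function on a closed interval, any point at which this variational inequality holds is the constrained global maximizer (if the derivative vanishes it is the unconstrained peak; if it is positive, $\delta^a_t$ must be the right endpoint and $g_t^a$ is increasing on the whole interval; symmetrically for a negative derivative). Applying this to $g_t^a$ and $g_t^b$ and adding the control-free term $-\phi_t Q_t^2$ yields $\mathscr{H}(t,Q_t,Y_t,\boldsymbol{\delta}_t)\geq\mathscr{H}(t,Q_t,Y_t,\beta)$ for all $\beta\in[-\xi,\xi]^2$ off the null set, which is the claim.

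The main obstacle is precisely this last step: since $\mathscr{H}$ is not concave in $\boldsymbol{\delta}$, the textbook implication ``stationarity $\Rightarrow$ global maximum'' is unavailable, and one must instead lean on the scalar unimodality of each control term guaranteed by Assumption \ref{inten_assu}(3) --- the additive separability of $\mathscr{H}$ in $(\delta^a,\delta^b)$ is exactly what reduces the question to one variable at a time. A secondary, routine point is the measurable-selection bookkeeping needed to pass from the integral inequality to a single almost-everywhere statement valid for all $\beta$ at once, which is handled by the countable-dense-subset-plus-continuity argument above.
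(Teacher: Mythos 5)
Your proposal is correct and follows essentially the same route as the paper's proof: a convex perturbation $\boldsymbol{\delta}+\epsilon(\boldsymbol{\eta}-\boldsymbol{\delta})$ combined with the G\^ateaux-differentiability and duality lemmas to obtain the integrated variational inequality, localization to a pointwise first-order condition, and then the unimodality of each one-dimensional control term (guaranteed by Assumption~\ref{inten_assu}(3)) to promote the first-order condition to the constrained global maximum. The only difference is that you spell out the measurable-selection step (indicator perturbation plus countable-dense-subset argument) which the paper leaves implicit.
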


\begin{proof}
Due to the convexity of the admissible control, given any admissible and bounded $\boldsymbol{\beta}\in\mathbb{A}\times\mathbb{A}$, we can choose the perturbation $\boldsymbol{\delta}^\epsilon=\boldsymbol{\delta}+\epsilon\,(\boldsymbol{\beta}-\boldsymbol{\delta})$ which is still admissible. Since $\boldsymbol{\delta}$ is optimal, we have the inequality
\begin{equation*}
    \frac{d}{d\epsilon}\,J(\boldsymbol{\delta}+\epsilon\,(\boldsymbol{\beta}-\boldsymbol{\delta}))\big|_{\epsilon=0}=\mathbb{E}\Big[\int_0^T\partial_{\boldsymbol{\delta}}\mathscr{H}(t,Q_t,Y_t,\boldsymbol{\delta}_t)\cdot(\boldsymbol{\beta}_t-\boldsymbol{\delta}_t)\,dt\Big]\leq0.
\end{equation*}
From this we can see 
\begin{equation}
    \partial_{\boldsymbol{\delta}}\mathscr{H}(t,Q_t,Y_t,\boldsymbol{\delta}_t)\cdot(\boldsymbol{\beta}-\boldsymbol{\delta}_t)\leq 0, \qquad \text{a.e. in } t\in[0,T],\; \mathbb{P}-a.s.,
    \label{smp_gener}
\end{equation}
for all $\beta\in[-\xi, \xi]\times[-\xi, \xi]$. Regarding the condition \eqref{smp_gener}, it can only happen at time $t$ if one of the following three cases holds:
\begin{equation*}
\begin{aligned}
    \delta_t^j\in(-\xi, +\xi) &\text{\quad with \quad} 
    \partial_{\boldsymbol{\delta}^j}\mathscr{H}(t,Q_t,Y_t,\boldsymbol{\delta}_t)=0;\\
    \delta_t^j=-\xi\leq \beta^j &\text{\quad with \quad} \partial_{\boldsymbol{\delta}^j}\mathscr{H}(t,Q_t,Y_t,\boldsymbol{\delta}_t)\leq0;\\
    \delta_t^j=+\xi\geq \beta^j &\text{\quad with \quad} \partial_{\boldsymbol{\delta}^j}\mathscr{H}(t,Q_t,Y_t,\boldsymbol{\delta}_t)\geq0,
\end{aligned}
\end{equation*}
where $j\in\{a, b\}$ representing the bid-ask side. Take the ask side for example. If
\begin{equation*}
    \partial_{\boldsymbol{\delta}^a}\mathscr{H}(t,Q_t,Y_t,\boldsymbol{\delta}_t)=-a_t\,(\Lambda^a(\delta_t^a))'\,\Big[Y_t-\delta_t^a-\frac{\Lambda^a(\delta_t^a)}{(\Lambda^a(\delta_t^a))'}\Big]=0,
\end{equation*}
the monotonicity of $\delta+\frac{\Lambda^a(\delta)}{(\Lambda^a(\delta))'}$ (see Assumption \ref{inten_assu}) implies that $\delta_t^a$ maximizes the corresponding part of the Hamiltonian. On the other hand, we look at the case $\delta_t^a=-\xi$ and  $\partial_{\boldsymbol{\delta}^a}\mathscr{H}(t,Q_t,Y_t,\boldsymbol{\delta}_t)\leq0$. Since  $\mathscr{H}$ is first increasing and then decreasing, such condition indicates that $\delta_t^a=-\xi$ is the maximizer of the Hamiltonian on the interval $[-\xi, \xi]$. While the final case can be discussed in the same way, all three cases imply that the optimal control $\boldsymbol{\delta}$ maximizes the Hamiltonian along the optimal paths, as stated by the theorem. 
\end{proof}

The \textit{sufficient} condition is more straightforward. We summarize the result in the following theorem.

\begin{theorem}
Let $\boldsymbol{\delta}$ be an admissible control, 
$Q = Q^{\boldsymbol{\delta}}$ be the corresponding controlled inventory, and $(Y, Z)$ be the adjoint processes. If it holds $\mathbb{P}$-a.s. that
\begin{equation}
    \mathscr{H}(t,Q_t,Y_t,\boldsymbol{\delta}_t)=\sup_{\boldsymbol{\alpha}\in [-\xi,\xi]^2}\mathscr{H}(t,Q_t,Y_t,\boldsymbol{\alpha}),  \qquad \text{a.e. in } t\in[0,T],
    \nonumber
\end{equation}
then $\boldsymbol{\delta}$ is the optimal control, that is to say, $J(\boldsymbol{\delta})=\sup_{\boldsymbol{\alpha}\in \mathbb{A}\times\mathbb{A}}J(\boldsymbol{\alpha})$.
\end{theorem}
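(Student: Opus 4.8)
The plan is to run the classical completion-of-squares argument for the sufficient maximum principle, taking care that here $\mathscr{H}$ is concave in the state $q$ but \emph{not} concave in the control $\boldsymbol{\delta}$. Fix an arbitrary competitor $\boldsymbol{\alpha}\in\mathbb{A}\times\mathbb{A}$ with associated inventory $Q^{\boldsymbol{\alpha}}$; because $a$, $b$, $\Lambda^a$, $\Lambda^b$ are bounded on the admissible range, both $Q$ and $Q^{\boldsymbol{\alpha}}$ lie in $\mathbb{S}^2$ while $(Y,Z)\in\mathbb{S}^2\times\mathbb{H}^2$, so every stochastic integral below is a genuine martingale. Writing $\mu(t,\boldsymbol{\delta}):=-a_t\,\Lambda^a(\delta^a)+b_t\,\Lambda^b(\delta^b)$ and $f(t,q,\boldsymbol{\delta}):=a_t\,\delta^a\,\Lambda^a(\delta^a)+b_t\,\delta^b\,\Lambda^b(\delta^b)-\phi_t\,q^2$ as above, one has $\mathscr{H}(t,q,y,\boldsymbol{\delta})=\mu(t,\boldsymbol{\delta})\,y+f(t,q,\boldsymbol{\delta})$ and
\begin{equation*}
    J(\boldsymbol{\delta})-J(\boldsymbol{\alpha})=\mathbb{E}\Big[\int_0^T\big(f(t,Q_t,\boldsymbol{\delta}_t)-f(t,Q_t^{\boldsymbol{\alpha}},\boldsymbol{\alpha}_t)\big)\,dt\Big]+\mathbb{E}\big[-A\,(Q_T^2-(Q_T^{\boldsymbol{\alpha}})^2)\big].
\end{equation*}

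First I would bound the terminal term using concavity of $q\mapsto-A\,q^2$: since its derivative at $Q_T$ equals $-2A\,Q_T=Y_T$, we get $-A\,(Q_T^2-(Q_T^{\boldsymbol{\alpha}})^2)\geq Y_T\,(Q_T-Q_T^{\boldsymbol{\alpha}})$. Next, since $Q$ and $Q^{\boldsymbol{\alpha}}$ start at the same point and have no diffusion part, $d(Q_t-Q_t^{\boldsymbol{\alpha}})=(\mu(t,\boldsymbol{\delta}_t)-\mu(t,\boldsymbol{\alpha}_t))\,dt$, so integration by parts against $dY_t=2\phi_t\,Q_t\,dt+Z_t\,dW_t$ gives, after dropping the mean-zero term $\int_0^T(Q_t-Q_t^{\boldsymbol{\alpha}})\,Z_t\,dW_t$,
\begin{equation*}
    \mathbb{E}\big[Y_T(Q_T-Q_T^{\boldsymbol{\alpha}})\big]=\mathbb{E}\Big[\int_0^T Y_t\big(\mu(t,\boldsymbol{\delta}_t)-\mu(t,\boldsymbol{\alpha}_t)\big)\,dt+\int_0^T2\phi_t\,Q_t\,(Q_t-Q_t^{\boldsymbol{\alpha}})\,dt\Big].
\end{equation*}
Substituting, collecting terms, and using $f(t,q,\boldsymbol{\delta})+\mu(t,\boldsymbol{\delta})\,y=\mathscr{H}(t,q,y,\boldsymbol{\delta})$, I obtain $J(\boldsymbol{\delta})-J(\boldsymbol{\alpha})\geq\mathbb{E}\int_0^T\mathcal{I}_t\,dt$ with $\mathcal{I}_t:=\mathscr{H}(t,Q_t,Y_t,\boldsymbol{\delta}_t)-\mathscr{H}(t,Q_t^{\boldsymbol{\alpha}},Y_t,\boldsymbol{\alpha}_t)+2\phi_t\,Q_t\,(Q_t-Q_t^{\boldsymbol{\alpha}})$.

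The decisive step is to rewrite $\mathcal{I}_t$ using that $\mathscr{H}$ feels the state only through the term $-\phi_t\,q^2$, hence $\mathscr{H}(t,Q_t^{\boldsymbol{\alpha}},Y_t,\boldsymbol{\alpha}_t)=\mathscr{H}(t,Q_t,Y_t,\boldsymbol{\alpha}_t)+\phi_t\,(Q_t^2-(Q_t^{\boldsymbol{\alpha}})^2)$; plugging this in and using $-\phi_t\,(Q_t^2-(Q_t^{\boldsymbol{\alpha}})^2)+2\phi_t\,Q_t\,(Q_t-Q_t^{\boldsymbol{\alpha}})=\phi_t\,(Q_t-Q_t^{\boldsymbol{\alpha}})^2$ yields
\begin{equation*}
    \mathcal{I}_t=\big[\mathscr{H}(t,Q_t,Y_t,\boldsymbol{\delta}_t)-\mathscr{H}(t,Q_t,Y_t,\boldsymbol{\alpha}_t)\big]+\phi_t\,(Q_t-Q_t^{\boldsymbol{\alpha}})^2.
\end{equation*}
The bracket is nonnegative a.e. in $t$ by the assumed pointwise maximality of $\boldsymbol{\delta}_t$ over $[-\xi,\xi]^2$ (noting $\boldsymbol{\alpha}_t\in[-\xi,\xi]^2$), and the second term is nonnegative because $\phi_t\geq0$; hence $J(\boldsymbol{\delta})\geq J(\boldsymbol{\alpha})$, and arbitrariness of $\boldsymbol{\alpha}$ gives optimality. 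The one point requiring genuine care — and why this is not a verbatim citation of \cite{carmona2016lectures} — is exactly this last manipulation: joint concavity of $\mathscr{H}$ in $(q,\boldsymbol{\delta})$ fails, but the separation between state and control (the drift $\mu$ is state-free and the state enters $\mathscr{H}$ only through the concave penalty $-\phi_t q^2$) forces the state contribution to collapse into the perfect square $\phi_t\,(Q_t-Q_t^{\boldsymbol{\alpha}})^2$, which is what rescues the sufficiency proof.
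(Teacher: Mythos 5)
Your proof is correct and follows essentially the same route as the paper's: concavity of the terminal penalty, integration by parts against the adjoint BSDE, reduction to a pointwise Hamiltonian comparison plus the completed square $\phi_t\,(Q_t-Q_t^{\boldsymbol{\alpha}})^2$, with the separation of state and control in $\mathscr{H}$ substituting for joint concavity. The only cosmetic difference is that you complete the square before invoking the maximality of $\boldsymbol{\delta}_t$, while the paper invokes it first; the underlying algebra is identical.
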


\begin{proof}
Since the inventory process---that is associated with any admissible strategy---is bounded, the adjoint BSDE
\begin{equation}
    dY_t=2\phi_t\, Q_t\,dt+Z_t\,dW_t, \quad \text{with} \quad Y_T=-2A\,Q_T,
    \nonumber
\end{equation}
admits a unique solution $(Y,Z)\in\mathbb{S}^2\times\mathbb{H}^2$. Let $\boldsymbol{\alpha}\in \mathbb{A}\times\mathbb{A}$ be a generic admissible strategy and $Q'$ be the associated state process. Due to the concaveness of the terminal penalty, we have
\begin{equation}
    \begin{aligned}
        \mathbb{E}[-A\,&(Q_T)^2 + A\,(Q'_T)^2]\\
        &\geq \mathbb{E}\big[-2 A\, Q_T\,(Q_T-Q'_T)\big]\\
        &=\mathbb{E}\big[Y_T\,(Q_T-Q'_T)\big]\\
        &=\mathbb{E}\Big[\int_0^T2\phi_t\,Q_t\cdot(Q_t-Q'_t)\,dt+\int_0^T(Q_t-Q'_t)\cdot Z_t\,dW_t+\int_0^TY_t\,d(Q_t-Q'_t)\Big]\\
        &=\mathbb{E}\Big[\int_0^T2\phi_t\,Q_t\cdot(Q_t-Q'_t)\,dt+\int_0^TY_t\,d(Q_t-Q'_t)\Big],
    \end{aligned}
    \nonumber
\end{equation}
where we have used the integration by parts and the fact that the stochastic integral is a bona fide martingale. By the definition of the Hamiltonian, one can also obtain
\begin{equation}
    \begin{aligned}
    \mathbb{E}\int_0^T\Big[a_t\,&\delta_t^a\,\Lambda^a(\delta_t^a)+b_t\,\delta_t^b\,\Lambda^b(\delta_t^b)\\
    &\hspace{2cm}-\phi_t\,(Q_t)^2-\big(a_t\,\alpha_t^a\,\Lambda^a(\alpha_t^a)+b_t\,\alpha_t^b\,\Lambda^a(\alpha_t^a)-\phi_t\,(Q'_t)^2\big)\Big]\,dt\\
    &=\mathbb{E}\int_0^T\big[\mathscr{H}(t,Q_t,Y_t,\boldsymbol{\delta}_t)-\mathscr{H}(t,Q'_t,Y_t,\boldsymbol{\alpha}_t)\big]\,dt-\mathbb{E}\int_0^TY_t\,d(Q_t-Q'_t).
    \end{aligned}
    \nonumber
\end{equation}
Combining two results above, finally we can get
\begin{equation}
\begin{aligned}
    J&(\boldsymbol{\delta})-J(\boldsymbol{\alpha})\\
    &=\mathbb{E}\int_0^T\Big[a_t\,\delta_t^a\,\Lambda^a(\delta_t^a)+b_t\,\delta_t^b\,\Lambda^b(\delta_t^b)-\phi_t\,(Q_t)^2\\
    &\hspace{1cm}-\big(a_t\,\alpha_t^a\,\Lambda^a(\alpha_t^a)+b_t\,\alpha_t^b\,\Lambda^b(\alpha_t^b)-\phi_t\,(Q'_t)^2\big)\Big]\,dt+\mathbb{E}\big[-2A\,(Q_T)^2 +2A\, (Q'_T)^2\big]\\
    &\geq\mathbb{E}\int_0^T\big[\mathscr{H}(t,Q_t,Y_t,\boldsymbol{\delta}_t)-\mathscr{H}(t,Q'_t,Y_t,\boldsymbol{\alpha}_t)+2\phi_t\,Q_t\cdot(Q_t-Q'_t)\big]\,dt\\
    &\geq\mathbb{E}\int_0^T\big[\mathscr{H}(t,Q_t,Y_t,\boldsymbol{\alpha}_t)-\mathscr{H}(t,Q'_t,Y_t,\boldsymbol{\alpha}_t)+2\phi_t\,Q_t\cdot(Q_t-Q'_t)\big]\,dt\\
    &=\mathbb{E}\int_0^T\phi_t\,(Q_t-Q'_t)^2\,dt\\
    &\geq 0.
\end{aligned}
    \nonumber
\end{equation} 
\end{proof}

\sloppy
\printbibliography

\end{document}